  \definecolor{dark-gray}{gray}{0.20}
  \definecolor{gray}{gray}{0.30}
  \definecolor{light-gray}{gray}{0.80}
  \definecolor{dark-red}{rgb}{0.7,0,0}
  \definecolor{dark-green}{rgb}{0.1,0.4,0}
  \definecolor{dark-blue}{rgb}{0.3,0.3,0.7}
  \definecolor{light-blue}{rgb}{0.8,0.8,1}
      \definecolor{swamp}{RGB}{240, 199, 197}
\newcommand{\be}{\begin{equation}}
\newcommand{\ee}{\end{equation}}
\newcommand{\eq}[1]{(\ref{#1})}
\def\be{\begin{equation}}
\def\ee{\end{equation}}
\def\bea{\begin{eqnarray}}
\def\eea{\end{eqnarray}}
\newcommand{\shortexact}[3]{
\begin{tikzcd}[ampersand replacement=\&]
	0 \& #1
	\&  #2
	\& #3
	\& 0
	\arrow[from=1-1, to=1-2]
	\arrow[from=1-2, to=1-3]
	\arrow[from=1-3, to=1-4]
	\arrow[from=1-4, to=1-5]
\end{tikzcd}
}
\newenvironment{eqaed}
    {\begin{equation}
    \begin{aligned}
    }
    { 
    \end{aligned}
    \end{equation}
    \ignorespacesafterend
    }
\newcommand{\C}{\mathbb{C}}
\newcommand{\R}{\mathbb{R}}
\newcommand{\Z}{\mathbb{Z}}
\renewcommand{\H}{\mathbb{H}}
\renewcommand{\O}{\mathit O} 
\newcommand{\abs}[1]{\lvert #1 \rvert}
\newcommand{\ang}[1]{\langle #1 \rangle}
\newcommand{\Spin}{\mathit{Spin}}
\newcommand{\Sq}{\mathrm{Sq}}
\newcommand{\USp}{\mathit{USp}}
\newcommand{\SU}{\mathit{SU}}
\newcommand{\U}{\mathit U}
\newcommand{\Sp}{\mathit{Sp}}
\newcommand{\SO}{\mathit{SO}}
\newcommand{\Ext}{\mathrm{Ext}}
\newcommand{\tmf}{\mathit{tmf}}
\newcommand{\ko}{\mathit{ko}}
\newcommand{\pt}{\mathrm{pt}}
\newcommand{\String}{\mathrm{String}}
\newcommand{\cA}{\mathcal A}
\newcommand{\cP}{\mathcal P}
\newcommand{\RP}{\mathbb{RP}}
\newcommand{\HP}{\mathbb{HP}}
\newcommand{\Hom}{\mathrm{Hom}}
\newcommand{\MTSpin}{\mathit{MTSpin}}
\newcommand{\G}{\mathbb{G}}
\newcommand{\cQ}{\mathcal Q}
\def\simleq{\; \raise0.3ex\hbox{$<$\kern-0.75em
      \raise-1.1ex\hbox{$\sim$}}\; }
   \def\simgeq{\; \raise0.3ex\hbox{$>$\kern-0.75em
      \raise-1.1ex\hbox{$\sim$}}\; }
\numberwithin{equation}{section}
\newtheorem{thm}[equation]{Theorem}
\newtheorem{prop}[equation]{Proposition}
\newtheorem{lem}[equation]{Lemma}
\newtheorem{cor}[equation]{Corollary}
\theoremstyle{definition}
\newtheorem{defn}[equation]{Definition}
\theoremstyle{remark}
\newtheorem{rem}[equation]{Remark}
\crefname{appendix}{Appendix}{Appendices}
\title{\centering Global anomalies \& bordism \\ of  non-supersymmetric strings}
\author{Ivano Basile$^1$,} \affiliation{$^1$ Arnold Sommerfeld Center for Theoretical Physics • Ludwig-Maximilians-Universität (LMU), Theresienstraße 37, 80333 München, Germany}
\author{Arun Debray$^2$,}\affiliation{$^2$ Purdue University, West Lafayette, IN 47907, United States}
\author{Matilda Delgado$^3$,} \affiliation{$^3$ Instituto de F\'{i}sica Te\'{o}rica IFT-UAM/CSIC,
C/ Nicol\'{a}s Cabrera 13-15, Campus de Cantoblanco, 28049 Madrid, Spain}
\author{and Miguel Montero$^3$.}
\emailAdd{ivano.basile@lmu.de}
\emailAdd{adebray@purdue.edu}
\emailAdd{matilda.delgado@uam.es}
\emailAdd{miguel.montero@csic.es}
\abstract{The three tachyon-free non-supersymmetric string theories in ten dimensions provide a handle on quantum gravity away from the supersymmetric lamppost. However, they have not been shown to be fully consistent; although local anomalies cancel due to versions of the Green-Schwarz mechanism, there could be global anomalies, not cancelled by the Green-Schwarz mechanism, that could become fatal pathologies. We compute the twisted string bordism groups that control these anomalies via the Adams spectral sequence, showing that they vanish completely in two out of three cases  (Sugimoto and $SO(16)^2$) and showing a partial vanishing also in the third (Sagnotti 0'B model). We also compute lower-dimensional bordism groups of the non-supersymmetric string theories, which are of interest to the classification of branes in these theories via the Cobordism Conjecture. We propose a worldvolume content based on anomaly inflow for the $SO(16)^2$ NS5-brane, and discuss subtleties related to the torsion part of the Bianchi identity. As a byproduct of our techniques and analysis, we also reprove that the outer $\Z_2$ automorphism swapping the two $E_8$ factors in the supersymmetric heterotic string is also non-anomalous.}
\begin{document}
\emergencystretch 3em
\hypersetup{pageanchor=false}
\makeatletter
\let\old@fpheader\@fpheader
\preprint{IFT-23-129}

\makeatother

\maketitle

\hypersetup{
    pdftitle={},
    pdfauthor={},
    pdfsubject={}
}

\newcommand{\remove}[1]{\textcolor{red}{\sout{#1}}}

\section{Introduction}
It is widely known that there are five supersymmetric string theories in ten dimensions \cite{Polchinski:1998rq}. It is slightly less known that there are several other non-supersymmetric string theories in ten dimensions, many of which have tachyons; and there are just three known models in ten dimensions which are both non-supersymmetric in spacetime and tachyon free: the $SO(16)\times SO(16)$ string \cite{Alvarez-Gaume:1986ghj, Dixon:1986iz}, the Sugimoto model \cite{Sugimoto:1999tx}, and the Sagnotti 0'B model \cite{Sagnotti:1995ga, Sagnotti:1996qj}. These non-supersymmetric models and their compactifications have been the subject of a renovated interest in the recent literature (see e.g. \cite{Abel:2016pwa,Abel:2018zyt,McGuigan:2019gdb,Itoyama:2020ifw,Angelantonj:2020pyr,Basile:2020mpt,Basile:2020xwi,Kaidi:2020jla,Faraggi:2020hpy,Cribiori:2020sct,Itoyama:2021fwc,Gonzalo:2021fma,Perez-Martinez:2021zjj,Basile:2021mkd,Basile:2021vxh,Sagnotti:2021mxb,Itoyama:2021itj,Cribiori:2021txm,Percival:2022uvp,Raucci:2022bjw,Basile:2022ypo,Baykara:2022cwj,Koga:2022qch,Cervantes:2023wti,Avalos:2023mti,Raucci:2023xgx,Mourad:2023wjg,Avalos:2023ldc, Fraiman:2023cpa}), presumably because they constitute a promising arena to study quantum gravity away from the supersymmetric lamppost. 

In spite of this recent surge of work, we still know remarkably little about the three tachyon-free non-supersymmetric models in ten dimensions, particularly when compared with their supersymmetric counterparts. In particular, the spectrum of all three non-supersymmetric models is chiral in ten dimensions, and so it is potentially anomalous. Local anomalies have long been known to cancel via non-supersymmetric versions of the Green-Schwarz mechanism \cite{Alvarez-Gaume:1986ghj,Dixon:1986iz,Sagnotti:1995ga, Sagnotti:1996qj,Sugimoto:1999tx}. However, to our knowledge, except for an inconclusive analysis in \cite{osti_5515225} these models have not been shown to be free of global anomalies, as was done early on in \cite{Witten:1985xe,Witten:1985mj,Freed:2000ta} for the supersymmetric chiral theories.\footnote{Even in this case, only gravitational and global anomalies in the identity component of the gauge group have been considered.} A global anomaly could lead to an inconsistency of the model, and having to discard it, or to new topological couplings that cancel it \cite{Debray:2021vob}.

The purpose of this paper is twofold: On one hand, we compute the potential global anomaly of the three tachyon-free non-supersymmetric string theories in quantum gravity (albeit only a subclass of anomalies for the Sagnotti 0'B model), showing that it vanishes. We do this using cobordism theory, and computing the relevant twisted string bordism groups. This is a standard technique for proving anomaly cancellation results; see~\cite{Freed:2019sco, Tachikawa:2021mby, Debray:2021vob, LY22, DY22, Tachikawa:2021mvw, Deb23, DOS23} for recent anomaly cancellation theorems in string and supergravity theories using this
technique.
 
 The second important result of our paper is precisely the calculation of these twisted string bordism groups (which have not appeared in the literature before), which are summarized in Table \ref{tbor} in the conclusions. The physics use of these bordism groups is that they can be used to predict new, singular configurations (branes) of the corresponding non-supersymmetric string theories, by means of the Cobordism Conjecture \cite{McNamara:2019rup} (see \cite{Dierigl:2022reg,Kaidi:2023tqo,Debray:2023yrs,Andriot:2022mri} for similar recent work in type II and supersymmetric heterotic string theories). Furthermore, the calculation of the bordism groups themselves by means of the Adams spectral sequence is interesting in its own right, and we expect that similar techniques can be used to compute string bordism groups of e.g. six-dimensional compactifications, and more generally, to study anomalies of any theory with a 2-group symmetry or a Green-Schwarz mechanism. 

Along the way, we will encounter and comment on issues such as whether the heterotic Bianchi identity can be taken to take values on the free part of cohomology or the torsion piece must be included, or the connections between anomaly cancellation in eleven-dimensional backgrounds and anomaly inflow on non-supersymmetric NS5 branes on these theories. We also include a quick introduction to the Green-Schwarz mechanism in the modern formalism of anomaly theory, providing for the first time a candidate for the worldvolume degrees of freedom for the NS5 brane in the $SO(16)\times SO(16)$ string.  We also study global anomalies in the $\Z_2$ outer automorphism swapping the two factors of the $SO(16)\times SO(16)$ string, showing that anomalies vanish.

The upshot of our paper is:
\begin{itemize}
    \item The bordism group controlling anomalies of the Sugimoto string, $\Omega_{11}^{\String\text{-}\Sp(16)}$, vanishes (\cref{sugimoto_thm}), and therefore the theory is anomaly-free\footnote{Modulo potential subtleties regarding the global structure of the gauge group, that we comment on in the Conclusions.}.
    \item The bordism group $\Omega_{11}^{\String\text{-}\SU(32)\ang{c_3}}$, controlling the anomaly of the Sagnotti 0'B model, is isomorphic to $0$ or $\Z_2$ (\cref{sag_thm}). We do not know whether the anomaly vanishes, although it does in all specific backgrounds we looked into.
    \item For the $ SO(16)\times SO(16)$ heterotic string, (where the identity component of the global form of the gauge group is actually $\Spin(16)\times\Spin(16)$, since the massless spectrum contains both spinors and vectors\footnote{See \cite{McI99} for an analysis detailing some possibilities for a global quotient in the $SO(16)^2$ gauge group. In this paper, we assume the simply-connected global form $\Spin(16)^2$ (so ``$SO(16)^2$'' is an abuse of notation); this has the advantage that all anomalies we find also exist for any other possibility (although with a nontrivial global quotient, there could be more anomalies than the ones that we study here). }), the bordism group $\Omega_{11}^{\String\text{-}\Spin(16)^2}$ controlling the anomaly vanishes (\cref{noswap}), and therefore this theory is anomaly-free.
    \item There is also a $\Z_2$ gauge symmetry swapping the two factors of $\Spin(16)$, whose anomaly we also studied. The bordism group controlling the anomaly has order $64$ --- but nevertheless (\cref{non_susy_swap_anomaly}), the anomaly vanishes.
\end{itemize}
As a consequence of our calculations, we also can cancel an anomaly in a supersymmetric string theory.
\begin{itemize}
    \item When one takes into account the $\Z_2$ symmetry of the $E_8\times E_8$ heterotic string swapping the two copies of $E_8$, the anomaly vanishes (\cref{susy_swap_anomaly}).
\end{itemize}
The cancellation of this anomaly is not a new result: it is a special case of the more general work of~\cite{Tachikawa:2021mby}. Our argument rests on different physical assumptions and is a different mathematical result; for example, we do not assume the Stolz-Teichner conjecture. Thus we answer a question of~\cite{Deb23}, who showed the bordism group controlling this anomaly has order $64$ but did not address the anomaly, and asked for a bordism-theoretic argument that the anomaly vanishes.

One key application of this $\Z_2$ symmetry of the $E_8\times E_8$ heterotic string is constructing the CHL string~\cite{CHL95}, a nine-dimensional string theory obtained by compactifying the $E_8\times E_8$ heterotic string on a circle, where the monodromy around the circle is the $\Z_2$ symmetry we discussed above. An anomaly in the $\Z_2$ symmetry would have implied an inconsistency in the CHL string. We find that the anomaly vanishes, in agreement with the results in \cite{Tachikawa:2021mby}, which showed this from a worldsheet perspective; by contrast, we approach the question from a pure spacetime perspective.

One can make an analogous construction for the $\SO(16)\times\SO(16)$ heterotic string, compactifying it on a circle whose monodromy exchanges the two bundles. The result is a nine-dimensional non-supersymmetric string theory whose gauge group is (perhaps a quotient of) $\Spin(16)$. Studying this theory would be an interesting extension of similar constructions in the $E_8 \times E_8$ case~\cite{Nak23} and in the $\SO(16)\times\SO(16)$ case without the monodromy~\cite{Fraiman:2023cpa}. Analogously to the CHL string, an anomaly in the $\Z_2$ symmetry of the $\SO(16)\times\SO(16)$ heterotic string would lead to an inconsistency of this new theory, and our anomaly cancellation result implies a consistency check for this theory on backgrounds where the gauge group is $\Spin(16)$. It would be interesting to study this theory on more general backgrounds.

We have also identified a plethora of non-trivial bordism classes on these theories. It is a natural direction to explore the nature and physics of the bordism defects associated to these branes \cite{Dierigl:2022reg,Kaidi:2023tqo,Debray:2023yrs}, a task we will not pursue in this paper. Furthermore, representatives of the bordism classes we encountered provide natural examples of interesting compactification manifolds for these non-supersymmetric strings to various dimensions; studying these, finding out whether moduli are stabilized (including SUSY-breaking stringy corrections to the potential) etc.\ is another important open direction to study.

This paper is organized as follows: In Section \ref{sec:local} we provide a lightning review of modern methods to study anomalies and how these cancel via the Green-Schwarz mechanism, as well as a detailed description of how this happens for each of the three tachyon-free, non-supersymmetric string theories. We believe this is the first time these important results are collected together in a single reference, and with a unified notation. Section \ref{sec:global_anomalies} contains our main result -- the calculation of bordism groups for these theories using the Adams spectral sequence -- together with a discussion of the natural cohomology theory for the Bianchi identity to take values in. We also study in detail the relationship of higher-dimensional anomaly cancellation to the worldvolume theory of magnetic NS branes. Section \ref{sec:swapping} extends the anomaly calculation to the $SO(16)\times SO(16)$ string including the (gauged) automorphism swapping the two $\Z_2$ factors. This multiplies the number of interesting bordism classes, but anomalies still cancel. Finally, Section \ref{conclus} presents a table with our results, conclusions, and potential further directions, including a few comments on how these anomalies might be studied from a worldsheet point of view, in the line of \cite{Tachikawa:2021mby,Tachikawa:2021mvw}. 

\section{Local anomalies and the Green-Schwarz mechanism}\label{sec:local}

The word ``anomaly'' describes the breaking of a classical symmetry by quantum effects. In a Lagrangian theory, anomalies correspond to a lack of invariance of the path integral under a symmetry transformation. They can arise for both global and gauge symmetries in field theories. Anomalies in global symmetries only point to the fact that the symmetry cannot be gauged; they can lead to anomaly matching conditions that heavily constrain the RG-flow and strong coupling dynamics of the theory \cite{tHooft:1979rat}. In contrast, anomalies in gauge theories point to true inconsistencies: a gauge symmetry is by definition a redundancy of the theory and as such can never be broken. In this paper, we will only consider anomalies in gauge symmetries. 

The anomalies under consideration arise in field theories when they are coupled to gauge fields and dynamical gravity. They then correspond to a lack of invariance of the path integral under a gauge transformation/diffeomorphism (for a review, see \cite{Alvarez-Gaume:1984zlq}). When this transformation can be continuously connected to the identity, we speak of local anomalies. Their cancellation heavily constrains a theory; for example the gauge group of $\mathcal{N}=1$ supergravities in ten dimensions is constrained by anomaly cancellation to be either one of four gauge groups: (a quotient of) $\Spin(32)$, $E_8 \times E_8$, $U(1)^{496}$ or $U(1) ^{248} \times E_8$. The last two can be ruled out as low-energy EFTs of a consistent theory of quantum gravity by demanding the consistency of the worldvolume theory of brane probes in \cite{Kim:2019vuc} (see also \cite{Angelantonj:2020pyr} for developments in the context of orientifold models) and using more general arguments in \cite{Adams:2010zy}. 

When the anomalous symmetry transformation cannot be continuously connected to the identity, then we speak of global anomalies (not to be confused with anomalies in global symmetries!). Global anomalies and their cancellation will be at the heart of this paper. It only makes sense to study them once local anomalies cancel. We therefore review local anomaly cancellation in the remainder of this section, before discussing global anomalies in the next ones.

The most direct way to study local anomalies is to compute certain one-loop Feynman diagrams involving external gauge bosons and/or gravitons, and chiral fermions in the internal legs; for ten-dimensional theories, the relevant diagram has 6 external legs \cite{Green:2012pqa}. 

There is, however, a much more concise way of studying such anomalies, through what is called an anomaly polynomial \cite{Alvarez-Gaume:1984zlq}. This is a certain formal polynomial in the gauge-invariant quantities $\text{tr} F^m$ and $\text{tr} R^m$, which are certain contractions of Riemann and gauge field strength tensors that do not involve the metric. If the theory we wish to study lives in $d$ spacetime dimensions, the anomaly polynomial is of degree $(d+2)$. Although the anomaly polynomial is often discussed in the physics literature directly in terms of $\text{tr} F^m$ and $\text{tr} R^m$, we find it more natural and convenient to write it down in terms of (the free part of) Chern and Pontryagin characteristic classes, more common in the mathematical literature. These can be written as linear combinations of $\text{tr} F^m$ and $\text{tr} R^m$ via Chern-Weil theory as follows. The $i$-th Chern class $c_{\mathbf{r},i}$ is associated to a complex vector bundle, in some representation $\mathbf{r}$ of the gauge group. Via Chern-Weil theory, they are represented in cohomology by the following characteristic polynomial of the field strength (here, $t$ is just a dummy variable)\footnote{Representing characteristic classes with differential forms misses any torsional components of integral cohomology, which is the more natural domain of characteristic classes. This subtlety will play an important role when discussing certain global anomalies in later parts of this paper, but it is immaterial in the present discussion.}:
\begin{equation}\label{eq:cherndef}
    \sum_i c_{\mathbf{r},i}\, t^i = \text{det}\left( \frac{i F}{2 \pi} t + 1\right),\end{equation}
    or, expanding the determinant,
    \begin{equation}
    \sum_i c_{\mathbf{r},i}\, t^i = 1 + \frac{i\, \text{tr}_{\mathbf{r}}(F)}{2 \pi}t  + \frac{\text{tr}_{\mathbf{r}}(F^2) - \text{tr}_{\mathbf{r}} (F)^2}{8 \pi^2}t^2 + \cdots
\end{equation}
The traces are over the gauge indices and as such the $i$-th Chern class $c_{\mathbf{r},i}$ is a $2i$-form. The Pontryagin classes are characteristic classes associated to a real vector bundle, which we will always take to be the spacetime tangent bundle. One way to define them is in terms of the Chern classes of the complexification of the vector bundle. The total Pontryagin class is the sum of the Pontryagin classes and its first few terms are as follows:
\begin{align} 
    p & = 1 + p_1 + p_2+ \cdots \\  p & = 1 - \frac{\text{tr} (R^2)}{8 \pi^2 } + \frac{\text{tr} (R^2)^2 - 2\, \text{tr}(R^4) }{128 \pi^4}\cdots 
\label{baddef}\end{align}
Similarly to the case of the Chern classes, the Pontryagin class $p_i$ is a $4i$-form. The reason we prefer these characteristic classes over the trace notation $\text{tr} F^m$ and $\text{tr} R^m$ is that, as will be clear later, the anomaly polynomial is a sum of Atiyah-Singer indices, and these indices are written in terms of these classes in the mathematical literature.

The precise relationship between local anomalies and the anomaly polynomial is as follows. The anomalous variation of the quantum effective action $\delta_\Lambda \Gamma$ can be related to the $(d+2)$-dimensional anomaly polynomial through what is called the Wess-Zumino descent procedure, which we briefly outline here. Since the characteristic classes are (locally) exact, the anomaly polynomial itself is also (locally) exact. We can therefore locally write it as \begin{equation}\label{eq:Polydef} P_{d+2}= d I_{d+1}\end{equation} where $I_{d+1}$ is called the (Lagrangian density of the) anomaly theory and locally satisfies the descent property $\delta_\Lambda I_{d+1} = dI_d$. It is related to an anomalous variation of the effective action $\delta_ \Lambda \Gamma$, which is local, by extending the spacetime manifold $X_d$ into a $(d+1)$-dimensional manifold $Y_{d+1}$ whose boundary is $X_d$. One finds
\begin{equation}\delta_ \Lambda \Gamma = \int_{X_d= \partial Y_{d+1}} I_d= \int_{Y_{d+1} } d I_{d} = \delta_ \Lambda \left[\int_{Y_{d+1} } I_{d+1}\right]\equiv \delta_\Lambda  \left[\alpha(Y_{d+1})\right],\end{equation}
where the anomaly $I_d$ is only defined up to a closed form and $\alpha(Y_{d+1})$ is called the anomaly theory. The anomaly corresponding to a given gauge transformation is then computed as the integral of $I_d$ over the spacetime manifold $X_d$. It follows that local anomalies vanish if and only if the anomaly polynomial vanishes.

The anomaly polynomial can be written as a sum of contributions from all of the fields in the theory. Each contribution is given by an index density in $(d+2)$-dimensions, whose integral over a compact manifold (with suitable structure, \emph{e.g.} spin for fermions) gives the index of the corresponding Dirac operator via the Atiyah-Singer index theorem. We now list some of these contributions that will be relevant in what follows. The index density associated to a left-handed Weyl fermion in the representation $\mathbf{r}$ of a gauge group with field strength $F$ is: 
\begin{align}\label{eq:indexfermionq}
    \mathcal{I}_{1/2}= \Big[ \hat A(R) \,\text{tr}_{\mathbf{r}}\, e^{i F/2\pi} \Big]_{d+2}\,,
\end{align} where the term $\text{tr}_{\mathbf{r}}\, e^{i F/2\pi}$ is sometimes referred to as the Chern character. The notation $[\cdots]_{d+2}$ means that one should select the $(d+2)$-form part of the enclosed expression. $\hat A(R)$ is called the \textit{A-roof polynomial}, and it can be expanded as: 
\begin{align}
    \hat A (R) = 1 - \frac{p_1}{24} + \frac{(7 p_1^2 - 4 p_2)}{5760} + \frac{-31 p_1^3 + 44 p_1 p_2 - 16 p_3}{967680} \cdots
\end{align}
Note that expression \eqref{eq:indexfermionq} can be easily applied to a fermion that is a singlet under the gauge group, in which case the Chern character reduces to $1$. The contribution to the anomaly polynomial corresponding to a left-handed fermion singlet is therefore simply:
\begin{align}\label{eq:indexfermions}
    \mathcal{I}_{\text{Dirac}}= \Big[ \hat A(R) \Big]_{d+2}.
\end{align} The index density associated to a left-handed Weyl gravitino in $(d+2)$ dimensions is:
\begin{align}\label{eq:indexgravitino}
     \mathcal{I}_{3/2}= \Big[ \hat A(R) \left(\text{tr}\, e^{i R/2\pi}-1\right)\text{tr}_{\mathbf{r}}\, e^{i F/2\pi} \Big]_{d+2}\,.
\end{align} Finally, a self-dual tensor gives a contribution: 
\begin{equation}\label{eq:indexSD}
    \mathcal{I}_{SD}= \Big[-\frac{1}{8} L(R)\Big]_{d+2} ,
\end{equation}
where $L(R)$ is called the \textit{L-polynomial} or \textit{Hirzebruch genus}, and it can be expanded as follows: 
\begin{align}
    L(R) = 1+ \frac{p_1}{3} + \frac{-p_1^2 +7 p_2}{45} + \frac{2 p_1^3 - 13 p_1 p_2 + 62 p_3}{945}+\cdots
\end{align}

Armed with these index densities, we can now review anomalies in ten-dimensional $\mathcal{N}=1$ supergravities. With this supersymmetry, the only multiplets are the gravity and vector multiplets \cite{Green:2012pqa}. There are pure gravitational anomalies coming from the chiral gravitini in the gravity multiplet as well as Majorana-Weyl spinors in both gravity and vector multiplets. There can also be gauge anomalies involving the fermions in the vector multiplet, which transform in the adjoint representation of the gauge group (they are gaugini). The anomaly polynomial of these theories is computed using \eqref{eq:indexfermionq}-\eqref{eq:indexSD} summing over all chiral fields (dilatino, gravitino, and gaugini), and reduces to the following expressions for the $\Spin(32)/\Z_2$ and the $E_8 \times E_8$ theories respectively: 
\begin{equation}\label{eq:anompolyso32}P_{12}^{\Spin(32)/\Z_2}= - \, \frac{p_1 + c_{\mathbf{32},2}}{2} \times \frac{1}{192} \left( 16 \, c_{\mathbf{32},2}^2 - 32 \, c_{\mathbf{32},4} + 4 \, c_{\mathbf{32},2}\, p_1 + 3 \, p_1^2 - 4 \, p_2 \right) \, ,\end{equation}
\begin{align}\begin{split}\label{eq:anompolye8e8} &P_{12}^{E_8\times E_8} = - \, \frac{p_1 + c_{\mathbf{16},2}^{(1)} + c_{\mathbf{16},2}^{(2)}}{2}\\ &\times \frac{1}{192} \left( 8 \, (c_{\mathbf{16},2}^{(1)})^2 + 8 \, (c_{\mathbf{16},2}^{(2)})^2 + 4 \, (c_{\mathbf{16},2}^{(1)} + c_{\mathbf{16},2}^{(2)}) \, p_1 - 8 \, c_{\mathbf{16},2}^{(1)} \, c_{\mathbf{16},2}^{(2)} + 3 \, p_1^2 - 4 \, p_2 \right) \,.
\end{split}\end{align}
Here, the Chern classes are expressed in the vector representations of $Spin(32)$ and the $SO(16)$ subgroups of each $E_8$ respectively, and the $(1)$ and $(2)$ indices differentiate between each of the two $E_8$ gauge groups. These expressions agree with the ones in \cite{Green:2012pqa} once \eqref{eq:cherndef}-\eqref{baddef} are used to put Chern and Pontryagin classes back into traces. The anomaly polynomial does not vanish, and at first sight this would mean that the theories are inconsistent. However, both anomaly polynomials factorize in the following schematic form:
\begin{equation}\label{eq:factorizedanom}P_{12}= X_4 X_8 \, ,\end{equation}
where the $X_4$ is a four-form and $X_8$ is an eight-form. This factorization property is the key to cancelling the anomaly, through what is known as the Green-Schwarz mechanism \cite{Green:1984sg,Green:2012pqa}. There is another field in these theories that can contribute to the aforementioned diagrams: the massless Kalb-Ramond B-field. Consistency of the 10-dimensional supergravity requires that the B-field not be invariant under gauge and gravitational interactions, and in fact it must satisfy an identity of the form
\begin{equation}
dH_ 3 = X_4 \, ,\label{bianch0}
\end{equation}
where $H_3\equiv dB_2 - \omega_\text{CS}$ is the gauge-invariant curvature of the B-field built from the Chern-Simons forms of gauge and spin connections that appears in the (super)gravity couplings. $X_4=d\omega_\text{CS}$ is a linear combination of characteristic classes of gauge and gravitational bundles. There is more to the Bianchi identity at the global level, a subject which we will discuss in Section \ref{subsec:torsi}. 

In this case, the anomaly corresponding to a factorized anomaly polynomial \eqref{eq:factorizedanom} can be cancelled by introducing the following term in the action: 
\begin{equation}\label{eq:GSterm10d} -\int B_2 \wedge X_8 \,.\end{equation} This term in the action is actually generated by string perturbation theory, as shown in \cite{Lerche:1987sg} in the heterotic case, and it contributes a term $ -dB_2 \wedge  X_8$ to the Lagrangian density of the anomaly theory. Using the Bianchi identity and \eqref{eq:Polydef}, we see that this adds a term $- X_4   X_8$ to the anomaly polynomial, exactly canceling the term in \eqref{eq:factorizedanom} so that the total local anomaly vanishes. This is the basic principle of the Green-Schwarz mechanism where the anomaly is cancelled by introducing an extra term in the action.

The main focus of this paper is however the three known\footnote{It was recently proven that there are no other examples in the heterotic context \cite{BoyleSmith:2023xkd}.} ten-dimensional non-tachyonic, non-supersymmetric string theories, which also 
 feature a B-field and a Green-Schwarz mechanism to cancel local anomalies. The diagram in Figure \ref{hexagon} sums up how these non-supersymmetric theories are related to eachother and to the supersymmetric ones via gaugings of various worldsheet symmetries. We will now briefly describe the matter content of these theories as well as their anomalies. 
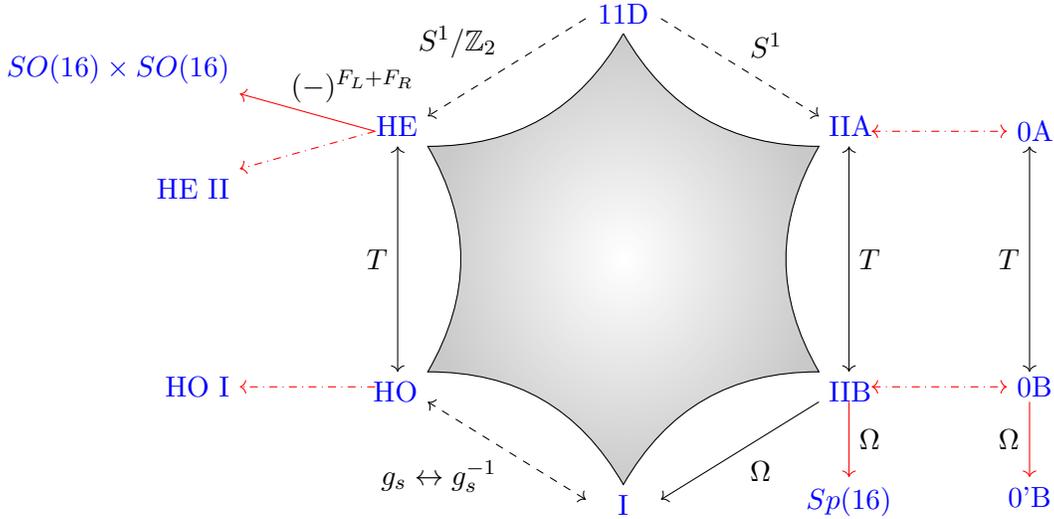
\begin{figure}
    \centering
   \begin{tikzpicture}
  \coordinate (a) at(0,3);
  \coordinate (b) at(2.6,1.5);
  \coordinate (c) at (2.6,-1.5);
  \coordinate (d) at (0,-3);
  \coordinate (e) at (-2.6,-1.5);
  \coordinate (f) at (-2.6,1.5);
  \filldraw [inner color=white, outer color=gray!30, opacity=1](a)node[above]{\color{blue}{11D}} to[bend right](b)node[above right]{\color{blue}{IIA}} to[bend right](c)node[below right]{\color{blue}{IIB}} to[bend right] (d)node[below]{\color{blue}{I}} to[bend right] (e)node[below left]{\color{blue}{HO}} to[bend right] (f)node[above left]{\color{blue}{HE}} to[bend right] (a);
  \draw[<->](3,1.5)--(3,0)node[right]{$T$}--(3,-1.5);
  \draw[<->](-3,1.5)--(-3,0)node[left]{$T$}--(-3,-1.5);
  \draw[dashed,->](.5,3.2)--(1.55,2.55)node[above right]{\color{black}{$S^1$}}--(2.6,1.9);
  \draw[dashed, ->](-.5,3.2)--(-1.55,2.55)node[above left]{\color{black}{$S^1/\Z_2$}}--(-2.6,1.9);
  \draw[->](2.6,-1.9)--(1.55,-2.55)node[below right]{$\Omega$}--(.5,-3.2);
  \draw[dashed, <->](-2.6,-1.9)--(-1.55,-2.55)node[below left]{\color{black}{$g_s\leftrightarrow g_s^{-1}$}}--(-.5,-3.2);
 
  \draw[red,dashdotted, <->](3.3,1.7)--(5.1,1.7)node[right]{\color{blue}{0A}};
  \draw[red,dashdotted, <->](3.3,-1.7)--(5.1,-1.7)node[right]{\color{blue}{0B}};
  \draw [<->] (5.4,1.5)--(5.4,0)node[left]{$T$}--(5.4,-1.5);
  \draw [->, red](5.4,-1.9)--(5.4,-2.4)node[left]{\color{black}{$\Omega$}}--(5.4,-2.9)node[below]{\color{blue}{0'B}};
  \draw [->, red](3,-1.9)--(3,-2.4)node[right]{\color{black}{$\Omega$}}--(3,-2.9)node[below]{\color{blue}{$Sp(16)$}};
  \draw[red,dashdotted, ->](-3.3,-1.7)--(-5.1,-1.7)node[left]{\color{blue}{HO I}};
  \draw[red,dashdotted, ->](-3.3,1.7)--(-5.1,1.2)node[below left]{\color{blue}{HE II}};
  \draw[red, ->](-3.3,1.7)--(-5.1,2.2)node[above left]{\color{blue}{$SO(16)\times SO(16)$}};
  \node at (-3.6,1.95)[above]{$(-)^{F_L+F_R}$};
\end{tikzpicture}
    \caption{A diagram showing how the three tachyon-free non-supersymmetric string theories relate to the supersymmetric ones and M-theory, via various worldsheet orbifolds. For instance, the $SO(16)^2$ is obtained from the $E_8^2$ heterotic by orbifolding spacetime and gauge group fermion number, $F=F_L+F_R$. We also list some tachyonic examples via red dot-dashed arrows, although we are not exhaustive. }
    \label{hexagon}
\end{figure}

\begin{itemize}
    \item \emph{The Sugimoto model}

 The Sugimoto string \cite{Sugimoto:1999tx} can be thought of as the non-supersymmetric sibling of the supersymmetric type I $\Spin(32)/\Z_2$ string. The main departure for us is that the gauge algebra is $\mathfrak{sp}(16)$ instead, and thus the Chern classes are taken in the fundamental representation of this group\footnote{Note that we use $\mathfrak{sp}(16)$ and  $\Sp(16)$ instead of the notation $ \mathit{USp}(32)$ that is often employed in the literature.}. This distinction arises from the different kind of orientifold projection of type IIB, which introduces anti-D9 branes and an O9 plane with positive Ramond-Ramond charge \emph{and} tension. The sign change in the reflection coefficients for unoriented strings scattering off the O9 is such that the Chan-Paton degeneracies reconstruct representations of the symplectic group $\Sp(16)$. 

 As in the type I case, the closed-string sector arranges into an $\mathcal{N} = 1$ supergravity multiplet, while the chiral fermions from the open-string sector arrange into the antisymmetric rank-two representation of the gauge group, leading to the same anomaly polynomial formally. This representation is however reducible and contains a singlet; this is nothing but the Goldstino that accompanies the breaking of supersymmetry \cite{Antoniadis:1999xk,Angelantonj:1999jh,Aldazabal:1999jr,Angelantonj:1999ms}. The low-energy interactions comply with the expected Volkov-Akulov structure of nonlinear supersymmetry \cite{Dudas:2000nv,Pradisi:2001yv,Kitazawa:2018zys}, although there is no tunable parameter that recovers a linear realization. All in all, since the anomaly polynomial is formally identical to the type I case, it factorizes as follows:
\begin{equation}\label{eq:sugimoto_ch_poly}
   P_{12}^{\text{Sugimoto}}= - \, \frac{p_1 + c_{\mathbf{32},2}}{2} \times \frac{1}{192} \left(16 \, c_{\mathbf{32},2}^2 - 32 \, c_{\mathbf{32},4} + 4 \, c_{\mathbf{32},2} \, p_1 + 3 \, p_1^2 - 4 \, p_2 \right) .
\end{equation}

    \item \emph{The Sagnotti model}

The type 0'B string~\cite{Sagnotti:1995ga, Sagnotti:1996qj} of Sagnotti is built from an orientifold projection of the tachyonic type 0B string, where the unique tachyon-free choice involves an O9 plane with zero tension. The resulting gauge group is $U(32)$, Ramond-Ramond $p$-form potentials with $p = 0 \, , 2 \, , 4$ (the latter having a self-dual curvature) survive the projection and they get anomalous Bianchi identities for the gauge-invariant curvatures,
\begin{eqaed}\label{eq:gen_Bianchi}
    dH_1 & = X_2 \, , \\
    dH_3 & = X_4 \, , \\
    dH_5 & = X_6
\end{eqaed}
where $X_2 = c_{\mathbf{32},1}$, $X_4$ is formally identical to the one in Sugimoto and type I strings, and $X_6$ is a polynomial in $p_1, \, c_{\mathbf{32},1},\, c_{\mathbf{32},2}$ and $c_{\mathbf{32},3}$. The Bianchi identity for $X_2$ tells us that the low-energy gauge group reduces to $SU(32)$, since $c_{\mathbf{32},1}$ is set to zero. Physically, the anomalous Bianchi identity for the RR axion induces the kinetic term $\abs{dC_0 + A}^2$, with $A$ the gauge field of the diagonal $\mathfrak{u}(1)$. This is just the St{\"u}ckelberg mass term for $A$.
All these RR fields with anomalous Bianchi identities play a crucial role in the cancellation of local anomalies via a more complicated Green-Schwarz mechanism involving a decomposition of the anomaly polynomial~\cite{Sagnotti:1996qj} of the form
\begin{equation}\label{eq:generalized_GS_mechanism}
   P_{12}^{\text{0'B}} = X_2 \, X_{10} + X_4 \, X_8 + X_6 \, X_6 \, .
\end{equation}

As we will see, setting $X_6$ to zero implies that $c_{\mathbf{32},3}$ is also trivial, and so we shall impose this condition when studying global anomalies of this theory.

    \item \emph{The heterotic model}

The case of the $SO(16)\times SO(16)$ string \cite{Alvarez-Gaume:1986ghj, Dixon:1986iz} is slightly different. Along with its two supersymmetric counterparts, it is the unique ten-dimensional heterotic model that is devoid of tachyons. It is built from a projection of either of the two heterotic models, most directly the $E_8 \times E_8$ one under the projector built from a combination of spacetime fermion number and an $E_8$ lattice symmetry. As a result, it does not have any chiral fields that are uncharged under the gauge symmetry, and in particular it does not have a gravitino. Its anomaly polynomial was derived in~\cite{Alvarez-Gaume:1986ghj, Dixon:1986iz} and factorizes as: 
\begin{equation}\begin{split}\label{eq:H16_ch_poly}
  P_{12}^{SO(16)^2}= &  - \, \frac{p_1 + c_{\mathbf{16},2}^{(1)} + c_{\mathbf{16},2}^{(2)}}{2}\\ & \times \frac{1}{24} \left( (c_{\mathbf{16},2}^{(1)})^2 + (c_{\mathbf{16},2}^{(2)})^2 + c_{\mathbf{16},2}^{(1)} \, c_{\mathbf{16},2}^{(2)} - 4 \, c_{\mathbf{16},4}^{(1)} - 4 \, c_{\mathbf{16},4}^{(2)} \right) \,,
\end{split}\end{equation}
where the $(1)$ and $(2)$ indices differentiate between each of the two $SO(16)$ gauge groups. The Green-Schwarz mechanism is carried by the Kalb-Ramond field, which survives the projection as befits a heterotic model \cite{Schellekens:1986xh}. \end{itemize}

All in all, local anomalies vanish for all three non-supersymmetric string theories, by the Green-Schwarz mechanism (or a more complicated version of it). This was already known in the literature, but leaves open the possibility for the presence of global anomalies. Global anomalies are those that arise in gauge/diffeomorphism transformations that cannot be continuously connected to the identity. These anomalies are not detected by the anomaly polynomial.  In the following section we detail how one can study these anomalies and we evaluate them for the case of the three non-supersymmetric tachyon-free string theories. 

\section{Global anomalies and bordism groups}\label{sec:global_anomalies}
In the previous Section, we have summarized prior results in the literature regarding  anomaly cancellation of ten-dimensional non-supersymmetric string theories via the Green-Schwarz mechanism. Importantly, the Green-Schwarz mechanism only guarantees cancellation of \emph{local} anomalies -- it guarantees that the (super)gravity path integral is gauge invariant as long as we only consider gauge transformations infinitesimally close to the identity. More generally, one also need discuss \emph{global} anomalies, namely anomalies in gauge transformations that cannot be continuously deformed to the identity. The archetypal example of such a global anomaly is Witten's $SU(2)$ anomaly~\cite{Witten:1982fp}. If one includes topology-changing transitions, one has even more general anomalies (dubbed Dai-Freed anomalies in \cite{Garcia-Etxebarria:2018ajm}), involving a combination of gauge transformations and spacetime topology change. In this paper, we will take the point of view that such anomalies should cancel in a consistent quantum theory of gravity, where spacetime topology is supposed to fluctuate.

The framework of anomaly theories introduced briefly in the previous section \eqref{eq:Polydef} can also be used to study global anomalies of Lagrangian theories such as the ones we are interested in. Given a $d$-dimensional quantum field theory, an anomaly on a manifold $X_d$ (possibly decorated with gauge field, spin structure, etc.) means that the partition function $Z(X_d)$ is not invariant under gauge transformations (or diffeomorphisms, for the case of a gravitational anomaly). In a modern understanding (see~\cite{Witten:2015aba,Yonekura:2016wuc,Garcia-Etxebarria:2018ajm,Davighi:2020vcm, Fre23} for detailed reviews, and also~\cite{Davighi:2023luh,Tachikawa:2023lwf} for a discussion in the context of the 6d Green-Schwarz mechanism), the anomaly can be captured by an invertible $(d+1)$-dimensional field theory $\alpha$ with the property that, when evaluated on a manifold with boundary $Y_{d+1}$ with $\partial Y_{d+1}=X_d$, the product
\begin{equation} Z(X_d) \cdot e^{- 2 \pi i \alpha(Y_{d+1})} \label{e332}\end{equation}
is invariant under gauge transformations. The $d$-dimensional QFT arises as a boundary mode of the $(d+1)$-dimensional invertible field theory $\alpha$, and the anomaly is re-encoded in the fact that \eq{e332} is not the partition function of a $d$-dimensional quantum field theory -- its value depends in general on $Y_{d+1}$ and the particular way on which the fields on $X_d$ are extended to $Y_{d+1}$.

In general, it may be very difficult to determine $\alpha$. However, in weakly coupled Lagrangian theories, we have a prescription to associate an anomaly theory to each of the chiral degrees of freedom involved. For instance, the anomaly theory for a Weyl fermion in $d$-dimensions ($d$ even) is given by the so-called eta invariant of a $(d+1)$-dimensional Dirac operator with the same quantum numbers as the fermion we started with~\cite{Dai:1994kq,Yonekura:2016wuc},
\begin{equation} e^{2\pi i{\alpha}_{\text{fermion}}(Y_{d+1})}=e^{2\pi i\eta_{d+1}(Y_{d+1})}.\end{equation}
If one has several fermions, the total anomaly theory is simply the product of these (so that the $\eta$ invariants add up). There are other topological couplings that can also contribute to the anomaly theory, as we will see below.

Two different open manifolds $Y_{d+1}$ and $Y'_{d+1}$, both having $X_d$ as a boundary, will yield values for the partition function \eq{e332} differing by a factor
\begin{equation} e^{2\pi i \alpha(Y_{d+1})}e^{-2\pi i \alpha(Y'_{d+1})}= e^{2\pi i \alpha(Y_{d+1}\cup \overline{Y'}_{d+1})}.\end{equation}
The manifold $Y_{d+1}\cup \overline{Y'}_{d+1}$ is just a general closed $(d+1)$-dimensional manifold. In an anomaly free-theory, the partition function in \eqref{e332} should not depend on the choice of extension; therefore, in this picture, anomaly cancellation is simply the statement that the anomaly theory $\alpha(\tilde{Y}_{d+1})$ be trivial when evaluated on a closed manifold $\tilde{Y}_{d+1}$. 

The particular case in which $\tilde{Y}_{d+1}$ itself is a boundary, $\tilde{Y}_{d+1}=\partial Z_{d+2}$,  corresponds to local anomalies, which allows us to connect the discussion to the preceding Section. The $\eta$ invariants introduced above, that give the anomalies for chiral fermions, can in this case be evaluated by means of the APS index theorem~\cite{atiyah_patodi_singer_1975},
\begin{equation} \label{eq:aps}\eta(\tilde{Y}_{d+1})=\text{Index}- \int_{Z_{d+2}} P_{d+2},\end{equation}
where $P_{d+2}$ is the anomaly polynomial of the previous subsection, and the ``Index'' is an integer. We thus recover the usual, perturbative, anomaly cancellation condition in terms of the anomaly theory. In theories where anomalies are cancelled via the Green-Schwarz mechanism, another ingredient is necessary. The ten-dimensional action has an extra Green-Schwarz term \eqref{eq:GSterm10d}, which is the boundary mode of an 11d invertible field theory
\begin{equation}\label{eq:GSanomtheory}\alpha_{\text{GS}}(Y_{11})= \int_{Y_{11}} H \wedge X_8.\end{equation}
The total anomaly theory is therefore the sum of the fermion anomaly and $\alpha_{\text{GS}}(Y_{11})$. On a manifold which is itself a boundary, $\tilde{Y}_{11} = \partial Z_{12}$,
\begin{equation} \int_{\tilde{Y}_{11}} H \wedge X_8=\int_{Z_{12}} dH \wedge X_8= \int_{Z_{12}}  X_4 \wedge  X_8,\label{e34}\end{equation}
where in the last equality we used the constraint that we are restricting to twisted string manifolds satisfying the (anomalous) Bianchi identity $d H  = X_4$. Taking this last contribution into account, we see that the local anomaly coming from the GS term can cancel that of the fermions, provided that the anomaly polynomial factorizes as discussed in Section \ref{sec:local}. 

In the rest of this paper, we will assume that local anomalies cancel, and ask what is the value of the total anomaly theory,
\begin{equation} e^{2\pi i \alpha_{\text{tot}}}=e^{2\pi i \alpha_{\text{fermions}}}e^{2\pi i \alpha_{\text{GS}}}\label{totalan}\end{equation}
when evaluated on 11-dimensional closed manifolds which are \emph{not} boundaries. This task seems daunting at first, since, depending on the collection of background fields, there can be infinitely many such manifolds. Fortunately, one can prove\footnote{The proof is a straightforward application of the APS index theorem \eqref{eq:aps}, see \cite{Yonekura:2016wuc}.}~\cite{Yonekura:2016wuc} that the partition function of the anomaly theory $\alpha_{\text{tot}}$ (mod 1) is a \emph{bordism invariant},\footnote{There is also a more theoretical and more general proof that the partition function of the anomaly theory is a bordism invariant, due to Freed-Hopkins-Teleman~\cite{FHT10} and Freed-Hopkins~\cite{FH16}; they show that up to a deformation, which is irrelevant for anomaly calculations, the partition function of \emph{any} reflection-positive invertible field theory is a bordism invariant.}
\begin{equation} e^{2\pi i \alpha_{\text{tot}}(Y^{(1)}_{11})} = e^{2\pi i \alpha_{\text{tot}}(Y^{(2)}_{11})}\quad\text{if}\quad Y^{(1)}_{11}\cup \overline{Y^{(2)}_{11}}=\partial Z_{d+2}.\end{equation}
This reduces the problem significantly: since $\alpha_{\text{tot}}$ (mod 1) is a bordism invariant, one need only evaluate it on a single representative per bordism class. Furthermore, these classes form an abelian group, the bordism group (of manifolds suitably decorated with a twisted string structure and gauge bundle). Bordism groups have appeared prominently in the field theory and quantum gravity literature, and there are many techniques available for their computation (see~\cite{Debray:2023yrs} for a detailed introduction). Thus, to compute these anomalies, we will just compute the relevant bordism groups and evaluate the anomaly theory on generators. Notice that if it happens that the relevant bordism group $\Omega_{11}$ is $0$, there are no global anomalies to check! That this happens was in fact shown by Witten in~\cite{Witten:1985mj,Witten:1985xe} for the $E_8\times E_8$ string when one does not take into account the $\Z_2$ symmetry switching the two $E_8$ gauge fields.\footnote{If one does want to take this $\Z_2$ symmetry into account, for example to study the CHL string, the relevant $\Omega_{11}$ is nonzero~\cite[Theorem 2.62]{Deb23}, and it was not known whether the global anomaly cancels. We will show that it does cancel in this paper, in Section \ref{sec:swapping}.} See~\cite{Freed:2000ta} for an analysis of type I string.

More recently,~\cite{Tachikawa:2021mvw,Tachikawa:2021mby} used the Stolz-Teichner conjecture to analyze global anomalies in supersymmetric, heterotic string theory even in stringy backgrounds, lacking a geometric description. In this paper we content ourselves with the target space treatment described above, which may miss anomalies of non-geometric backgrounds. In the following, we present the calculation and results for the three ten-dimensional non-supersymmetric string theories described in Section \ref{sec:local}. But before that, we will describe and justify more carefully the precise structure that will be assumed in our bordism calculations.  

\subsection{Bianchi identities and twisted string structures} \label{subsec:torsi}
As described in the previous Subsection, the computation of global anomalies can be organized in terms of a bordism calculation and an anomaly theory, which is just a homomorphism from the bordism group to $U(1)$. The precise bordism group to be used (i.e.\ the particular structure that our manifolds are required to have) depends on the theory we are interested in. For instance, all heterotic string theories under consideration include fermions, so we will consider only manifolds (and bordism between them) carrying a spin structure; the anomaly theory is related to the $\eta$ invariant for a certain Dirac operator on this manifold. This means that the second Stiefel-Whitney class of the allowed manifolds where the anomaly theory is to be evaluated will vanish,
\begin{equation}w_2=0.\end{equation}
In heterotic string theories, also the Bianchi identity \eq{bianch0} needs to be taken into account. Equation \eq{e34}  illustrates that cancellation of perturbative anomalies requires us to assume $dH=X_4$ even off-shell\footnote{If we insist on keeping the $B$-field as a background; see the discussion at the end of this Section.}. Therefore, we will restrict our bordism groups to consist of 11-dimensional manifolds in which \eq{bianch0} is satisfied. In particular, we will set
\begin{equation} \int_{M_4} X_4=0\end{equation}
for any closed 4-manifold $M_4$. The precise expression of $X_4$ in terms of characteristic classes depends on the particular theory under study. The particular case
\begin{equation} X_4=\frac{p_1}{2}\end{equation}
has been studied in the mathematical literature, and receives the name of a \emph{string} structure. The $X_4$'s that appear in heterotic string theories are always of the form
\begin{equation} X_4=a\, \frac{p_1}{2}(\text{Tangent bundle})+b\, c_2(\text{Gauge bundle}),\quad a,b\in\mathbb{Z},\label{err3}\end{equation}
and we will refer to the data of a solution to this equation for chosen $a$ and $b$ as a \emph{twisted} string structure. This notion appeared in the mathematical literature in \cite[Definition 8.4]{Wan08}.

The bordism groups related to the three non-supersymmetric string theories we are going to consider are
\begin{equation}\Omega_{11}^{\text{String}-\mathit{Sp}(16)}, \quad \Omega_{11}^{\text{String}-\SU(32)\ang{c_3}},\quad \Omega_{11}^{\text{String}-\mathit{Spin}(16)^2},\end{equation}
for the Sugimoto, Sagnotti, and $\SO(16)\times\SO(16)$ heterotic theories, respectively; these are the bordism groups of twisted string manifolds where the particular choice of twisted string structure is spelled out by the Green-Schwarz mechanisms for these theories as we discussed in Section~\ref{sec:local}.
  
\begin{rem}
Before presenting the results for the bordism groups, we must discuss an important subtlety, which affects the bordism calculation. Up to this point in this paper, we have been cavalier when writing down characteristic classes such as ``$p_1$'' or ``$c_2$'', and defined these characteristic classes as closed differential forms (e.g.\ in \eq{baddef}) by way of Chern-Weil theory. However, these differential forms have quantized periods, as is the case for data coming out of any quantum theory, and a proper treatment of the Green-Schwarz mechanism should take this into account. There are two ways to do this.
\begin{enumerate}
    \item The simplest approach is to lift to $\Z$-valued cohomology: the quantized periods are a reminder that the de Rham classes of the Chern-Weil forms of $p_1$, $c_2$, etc., lift canonically to classes in $H^4(BG;\Z)$ for various Lie groups $G$, and on many manifolds $M$, these integer-cohomology lifts of these characteristic classes can be torsion! Thus it is natural to wonder whether the B-field should be an element of $H^3(\text{--};\Z)$ and the Bianchi identity~\eqref{err3} should take place in $H^4(\text{--}; \Z)$. The definitions of string structure and twisted string structure in mathematics assume this lift has taken place.
    \item Alternatively, one could lift to differential cohomology $\check H^4(\text{--};\Z)$, which amounts to observing that it is not just the $\Z$-cohomology lift which is natural, but also the data of the Chern-Weil form; differential cohomology is a toolbox for encoding both of these pieces of data. Indeed, for any compact Lie group $G$ and class $c\in H^*(BG;\Z)$, there is a canonical differential refinement $\check c\in\check H^*(B_\nabla G;\Z)$~\cite{CS85, BNV16}, where $B_\nabla G$ is the classifying stack of $G$-connections.\footnote{If you do not want to think about stacks, this statement is essentially equivalent to the notion that for a principal $G$-bundle $P\to M$ with connection $\Theta$, the differential characteristic class $\check c(P, \Theta)\in \check H^*(M;\Z)$ is natural in $(P, \Theta)$.} Thus we could instead ask: should we begin with a B-field in $\check H^3(\text{--};\Z)$ and ask for the Bianchi identity to take place in $\check H^4(\text{--};\Z)$? This combines the two other formalisms we considered, differential forms and integral cohomology.
\end{enumerate}
The answer in the mathematics literature is often the second option, beginning with Freed~\cite[\S 3]{Freed:2000ta} and continuing in, for example,~\cite{Fre02, FMS07, SSS09, FL10, Bun11, DFM11a, DFM11b, FSS12, SSS12, FSS13, Sch13, FSS14, GS18, MM18, MM19, FSS20, Hsieh:2020jpj, GS21, Deb23}. In particular, \cite{FSS12, SSS12} interpret the data entering into the Green-Schwarz mechanism as specifying a connection for a Lie $2$-group built as an extension of the gauge group by $B\U(1)$, providing an appealing physical interpretation of the lift to differential cohomology.

We are interested in classifying anomalies, and while there is an interesting differential refinement of the story of the bordism classification of anomalies due to Yamashita-Yonekura~\cite{YY23, Yam23a, Yam23b}, the deformation classification of anomalies ultimately can proceed without differential-cohomological information, because it boils down to studying bordism groups. Because of this, we will work with characteristic classes in integral cohomology, noting here that the correct setup of the Green-Schwarz mechanism taking torsion and Chern-Weil forms into account uses differential cohomology, and that for our computations it makes no difference. 

\end{rem}

Note that cancellation of perturbative anomalies around \eq{e34} only requires the free part of $X_4\in\, H^4(\text{--};\mathbb{Z})$ to be trivial in a compact manifold, and poses no obvious restriction on torsion. Reference \cite{Witten:1985mj} studies a particular example suggesting that this should be the case, but does not attempt to make a general argument. To ascertain whether the torsion piece of $X_4$ must also be trivialized or not, consider the physical origin of the Bianchi identity, which is itself a two-dimensional version of the Green-Schwarz mechanism described above (see e.g. \cite{Garcia-Etxebarria:2014txa,Kim:2019vuc}). Consider a worldsheet wrapped on a 2-manifold $\Sigma_2$ of the ambient ten-dimensional spacetime manifold $M_{10}$. The configuration should be invariant under spacetime diffeomorphisms, and gauge transformations, which are manifested as global symmetries of the worldsheet. However, in heterotic or type I theories, the worldvolume degrees of freedom are chiral, and anomalous under these transformations. The anomaly  theory, which we denote $\alpha^{\text{worldsheet}}$, is encoded by a three-dimensional  $\eta$ invariant. Applying the APS index theorem \eq{eq:aps}, we obtain 
\begin{equation} \exp(2\pi i \alpha^{\text{worldsheet}})=\exp \left(2\pi i \int X_4 \right),\label{anomws}\end{equation}
where $X_4$ is a certain differential form built out of characteristic classes, and which is precisely the $X_4$ appearing above (indeed, \eq{anomws} is usually taken to give the definition of $X_4$). As things stand, any configuration with an insertion of a fundamental string worldsheet on $\Sigma_2$ has a gravitational anomaly; however, the worldsheet also has an electric coupling to the B-field,
\begin{equation} \exp\left(2\pi i\int_{\Sigma_2} B_2\right),\end{equation}
whose anomaly theory is simply
\begin{equation}\alpha_B=\int H.\end{equation}
Now, the total worldsheet anomaly is
\begin{equation} \exp\left(2\pi i \alpha^{\text{worldsheet}}_{\text{total}}\right)= \exp\left(2\pi i \alpha^{\text{worldsheet}}\right) \exp \left(2\pi i \alpha_B\right).\end{equation}
 The physical consistency condition is that the total anomaly is trivial
\begin{equation}\exp\left(2\pi i \alpha^{\text{worldsheet}}_{\text{total}}\right)=1,\quad \text{for all $M_3$}.\label{anows}\end{equation}
When $M_3$ is a boundary, anomaly cancellation is achieved, as above, by setting $dH=X_4$, precisely the Bianchi identity described above. However, this is not all there is to \eq{anows}. Assuming that anomalies vanish when $M_3$ is a boundary,  $\exp(2\pi i \alpha^{\text{worldsheet}}_{\text{total}}(M_3))$ is actually only dependent on the integer homology class of $M_3$. In fact, since it is a map that assigns a phase to each 3-cycle in the ambient 10-dimensional manifold $M_{10}$, it can be regarded as an element of $H^3(M_{10};U(1))$, with $U(1)$ coefficients.  Using the long exact sequence in cohomology associated to the short exact sequence of groups $\mathbb{Z}\,\rightarrow\,\mathbb{R}\,\rightarrow U(1)$, we obtain that \cite{Davighi:2020vcm}
\begin{equation}H^3(M_{10};\mathbb{R})\,\rightarrow\, H^3(M_{10};U(1))\,\rightarrow\, H^4(M_{10};\mathbb{Z})\,\rightarrow H^4(M_{10};\mathbb{R}),\label{exe3}\end{equation}
where the third map is taking the free part of the integer cohomology class. In general,  $\exp(2\pi i \alpha^{\text{worldsheet}}_{\text{total}})$ will have pieces both in the image of the first map and in its cokernel. An example where the anomaly theory has a non-trivial piece in the image of the first map of \eq{exe3} can be obtained by compactifying heterotic string theory on a Bieberbach  3- manifold, a fixed-point free quotient of the torus $T^3$.\footnote{We thank Cumrun Vafa for pointing out this example to one of us.} Since $T^3$ is Riemann-flat, a quick analysis would suggest that the Bianchi identity is satisfied automatically with no gauge  bundle or B-field turned on. However, trying to implement this manifold directly in the worldsheet results in a theory which is not level-matched. The problem is that $\exp(2\pi i \alpha^{\text{worldsheet}}_{\text{total}})$ with no gauge bundle turned on is nontrivial for most Bieberbach manifolds, and so the anomaly theory is a nontrivial class in $H^3(M_{10};\mathbb{R})$. Cancelling this anomaly forces either a B-field (discrete torsion) to be turned on, or a non-trivial flat gauge bundle to be present.

The rest of the anomaly theory is in the image of the second map in \eq{exe3}, and can therefore be represented by a certain  torsion integer cohomology class in $H^4(M_{10};\mathbb{Z})$, whose free part vanishes. We will now show that this is in fact the torsion part of $X_4-dH$. Consider a torsion 3-cycle $M_3$ of order $k$, i.e. such that $k M_3$ is the boundary of some 4-manifold $N_4$. Let us see how to compute the anomaly in this case. First, the anomaly theory $\alpha^{\text{worldsheet}}_{\text{total}}$ is a linear combination of $\eta$ invariants, which in this particular case can be re-expressed as linear combinations of gravitational and gauge Chern-Simons numbers as discussed above. 
The Chern-Simons invariant is additive on disconnected sums, and so, we have
\begin{equation}\alpha^{\text{worldsheet}}_{\text{total}}(M_3)=\frac{1}{k}\alpha^{\text{worldsheet}}_{\text{total}}(kM_3).\end{equation}
Next, we can use the fact that $kM_3=\partial N_4$, to write (after exponentiation)\footnote{The anomaly theory is a linear combination of real-valued eta invariants, thus division by $k$ is well-defined and there is no phase ambiguity.} 
\begin{equation}\exp\left(2\pi i \alpha^{\text{worldsheet}}_{\text{total}}(M_3)\right)=\exp\left(\frac{2\pi i}{k}\left(\text{Index}_{N_4}+\int_{N_4}(X_4-dH)\right)\right).\end{equation}
This expression is not obviously independent of the choice of $N_4$, but when $(X_4-dH)$ is pure torsion, it actually is. The reason is that the quantity $\int_{N_4}(X_4-dH)$ may be rewritten as a linking pairing in homology \cite{birman1980seifert}. If we Poincaré dualize $(X_4-dH)_{\text{tor}}$ to a torsion 6-cycle $M_6$, the linking pairing between $M_6$ and $M_3$ is constructed by choosing a boundary $N_4$ for $k M_3$ and computing $\int_{N_4}(X_4-dH)$ modulo $k$. Importantly, the result does not depend on the choice of $N_4$ (see \cite{birman1980seifert} for a review and proof of these facts). 

In short, the full analog of the Bianchi identity is \eq{anows}. Unpacking this condition, we recover that:\begin{itemize}
\item There is the condition on any $3$-manifold $M_3$ that  
\begin{equation} \int_{M_3} H = \int_{M_3} CS^{X_4}_3,\end{equation}
where $CS^{X_4}_3$ is a (local) Chern-Simons form obeying $dCS^{X_4}_3=X_4$. This will force discrete B-fields to be turned on in certain situations, such as on Bieberbach manifolds (these were referred to as ``worldsheet discrete theta angles'' in \cite{Tachikawa:2023lwf}). 
\item As a consequence of the previous point, when $M_3$ is a boundary, we get that the Bianchi identity $dH=X_4$ must hold over the integers.
\end{itemize}

The general analysis we just carried out is somewhat abstract; in the next Subsection, we will verify its correctness by explicitly checking, in a variety of backgrounds, that anomalies in ten dimensions only cancel if the torsional part of the Bianchi identity holds.

Finally, we comment on another possible way in which the anomaly calculation could have been set, avoiding the calculation of string bordism groups altogether, as in \cite{Tachikawa:2023lwf}. Anomalies are always studied with respect to a choice of background fields. The approach we have followed here takes the metric $g$, the gauge field $A$, and the 2-form field $B$ as background fields, and imposes the Bianchi identity as a restriction on the allowed backgrounds. However, in a quantum theory of gravity, there are no global symmetries, and therefore, there are no background fields either. This is manifested in the fact that all three of $g,A,B$ are actually dynamical fields that we are supposed to path-integrate over. Treating these as backgrounds is justified if there is some sort of weak coupling limit in which the fields become frozen. This is automatically the case at low energies in any ten-dimensional string theory, since the couplings of all of $g,A,B$ are dimensionful and become irrelevant in the deep IR.  It is not the case e.g.\ in six dimensions, where antisymmetric tensor fields are often strongly coupled and cannot be treated perturbatively. In such cases, the only approach available is to explicitly perform the path integral over the tensor fields, compute their effective action, and verify that the resulting path integral indeed cancels against the contributions of other chiral fields. There is no meaningful analog of the notion of having a string structure, since no weak coupling notion is available. The anomaly theory (as a function of the metric and background gauge fields) can then studied on general spin manifolds (and not just string manifolds), and anomalies cancel in a standard way, because the B-field (which is integrated over) couples to background 4 and 8-forms $X_4$ and $X_8$, and has a mixed anomaly captured by the anomaly polynomial $\int X_4 X_8$, just what is needed to cancel the anomaly of the fermions. From a perturbative string worldsheet point of view, we feel it is more natural to keep $B$ as a background field; furthermore, the techniques we use in this paper can be extended to compute lower-dimensional string bordism groups, which control solitonic objects in these non-supersymmetric theories via the Cobordism Conjecture \cite{McNamara:2019rup}.

 \subsection{Evidence for torsional Bianchi identities}

In the previous Subsection, we gave an argument that the Bianchi identity holds at the level of torsion, too. The argument relies heavily on string perturbation theory, and one may worry e.g. that it does not capture strongly coupled situations. In this Section, we provide independent evidence, which does not rely on the worldsheet at all, that the Bianchi identity holds at the level of integer cohomology. 
We do so by computing Dai-Freed anomalies of supersymmetric and non-supersymmetric string theories on simple eleven-dimensional lens spaces. Lens spaces are quotients of spheres by $\Z_p$ groups; they are the simplest examples of manifolds whose cohomology is purely torsional (except in bottom and top degrees, as usual). In particular, their first Pontryagin classes are torsion; the upshot of the calculation in this Section is that spacetime anomalies on lens spaces seem to vanish if and only if the Bianchi identities are satisfied at the level of integral cohomology, including torsional classes.

Now we turn to the details of evaluating anomalies on lens spaces; we refer the reader to \cite{Debray:2021vob, Debray:2023yrs} for more on lens spaces and the corresponding expressions for eta invariants. (Eleven-dimensional) lens spaces are defined to be quotients of the form
\begin{equation}L_p^{11}=S^{11}/\Z_p,\end{equation}
where the $\Z_p$ action acts as scalar multiplication by $e^{2\pi i/p}$ on the six complex coordinates $\mathbb{C}^6$ and where we embed the covering $S^{11}$ as the unit sphere. An important property of these lens spaces is that the Green-Schwarz term,
 \begin{equation} H\wedge X_8,\end{equation}
will automatically vanish on lens spaces, since $H^3(S^{11}/\Z_n;\mathbb{Z})=0$. As a result, the calculation of the full anomalies of string theories on lens spaces reduces to determining the anomalies of the chiral fields. We will now evaluate the anomaly theory of the Type I and the Heterotic string theories (supersymmetric and non-supersymmetric) on certain eleven-dimensional lens spaces. 

\subsubsection{Type I and HO heterotic theories}
As the Green-Schwarz (GS) contribution to the anomaly theory vanishes on lens spaces, the remaining fermion anomaly theory of the type I and HO heterotic theory is given by
\begin{eqaed}\label{eq:type-I_anomaly}
    \alpha(L^{11}_p) = \eta^\text{RS}_0(L^{11}_p) - 3 \, \eta^\text{D}_0(L^{11}_p) + \eta^\text{D}_\mathbf{adj}(L^{11}_p) \, .
\end{eqaed}
The Rarita-Schwinger eta invariant $\eta^{\mathrm{RS}}$ arises from the anomaly theory of a ten-dimensional gravitino according to $\alpha_\text{gravitino} = \eta^{\text{RS}} - 2 \eta^\text{D}$ \cite{Debray:2021vob}. In order to evaluate this anomaly theory on $L^{11}_p$, one can derive the branching rules for the adjoint representation of $\Spin(32)$ in terms of the charge-$q$ irreducible $\Z_p$ representations $\mathcal{L}^q$. This branching depends on how $\Z_p$ is included in the gauge group. We choose a family of inclusions of the form
\begin{eqaed}
    \Z_p \quad \hookrightarrow \quad U(1) \quad \overset{k}{\hookrightarrow} \quad SU(N) \quad \hookrightarrow \quad \Spin(2N) \, ,
\end{eqaed}
according to which the (complexified) vector representation of $\Spin(2N)$ splits as
\begin{eqaed}
    \mathbf{2N} \quad \longrightarrow \quad \mathbf{N} \oplus \mathbf{N}^* \, .
\end{eqaed}
The parameter $k$ denotes an inclusion that places the $U(1)$ fundamental representation $\mathcal{L}$ in $k$ diagonal blocks, in pairs $L \equiv \mathcal{L} \oplus \mathcal{L}^{-1}$, and the rest in the trivial representation $\mathcal{L}^0$. Then, the vector representation of $Spin(2N)$ further splits into
\begin{eqaed}\label{eq:vector_branching}
    \mathbf{2N} \quad \longrightarrow \quad \mathbf{N} \oplus \mathbf{N}^* \quad \longrightarrow \quad [kL \oplus (N-2k)\mathcal{L}^0] \oplus [kL \oplus (N-2k)\mathcal{L}^0] \, .
\end{eqaed}
In order to find the branching rules for other representations, it is convenient to use Chern characters. Letting $x \equiv c_1(\mathcal{L})$, the Chern character of $\mathbf{N}$ (and $\mathbf{N}^*$) decomposes into
\begin{eqaed}\label{eq:ch_suN}
    \text{ch}(\mathbf{N})  \quad \longrightarrow \quad k \left(e^x + e^{-x}\right) + (N-2k) \, .
\end{eqaed}
Then we can build the characters for adjoint, symmetric and antisymmetric $SU(N)$ representations, from which we can reconstruct the characters for $\Spin(2N)$ representations of interest, such as the adjoint (antisymmetric) and spinorial. The resulting branching rules involve the representations $L^q \equiv \mathcal{L}^q \oplus \mathcal{L}^{-q}$. In particular, the adjoint of $\Spin(2N)$ branches according to
\begin{eqaed}
    \mathbf{adj} \longrightarrow k(2k-1) L^2 \oplus 4k(N-2k)L \oplus \left[N(2N-1) - 2k(2k-1) - 8k(N-2k)  \right] \mathcal{L}^0 \, ,
\end{eqaed}
which gives the corresponding eta invariant. Using the expressions
\begin{eqaed}
    \eta^\text{D}_q(L^{11}_p) & = \frac{2 p^6 + 21 p^4 + 168 p^2 - 191 - 42 p^4 q^2 + 210 p^2 q^4 - 630 p^2 q^2}{60480p} \\
    & + \frac{- 252 p q^5 + 1260 p q^3 - 1008 pq + 84 q^6 - 630 q^4 + 1008 q^2}{60480p} \, , \\
    \eta^\text{RS}_0(L^{11}_p) & = \frac{22p^6 - 273p^4 - 3192p^2 + 3443}{60480p} \, ,
\end{eqaed}
the anomaly simplifies to
\begin{eqaed}
    \alpha^{(k)}_{Spin(32)}(L^{11}_p) = \frac{(p^2-1)(p^4 + (11-5k)p^2 + 10(k-3)^2)}{60p} \, .
\end{eqaed}
In order to compare the cases in which $\alpha = 0 \; \text{mod } 1$ to the Bianchi identity, let us recall that the total Pontryagin class of $L^{2k-1}_p$ is $p(L_p^{2k-1})= (1 + y)^k$ with $y$ a generator of $H^4(L^{2k-1}_p; \mathbb{Z}) \cong \Z_p$. Thus $p_1(L_p^{11}) = 6y$, and one can show that the canonical choice of $\tfrac{p_1}{2}$ afforded by the spin structure is
$\tfrac{p_1}{2} = 3y$. On the other hand, according to the branching rule \eqref{eq:vector_branching}, the total Chern class of the associated vector bundle is $c = (1 - y)^{2k}$, and thus $c_2 = -2k \, y$. Therefore,
\begin{eqaed}
    \frac{p_1 + c_2}{2} = (3-k) \, y
\end{eqaed}
vanishes if and only if $k = 3 \; \text{mod } p$. Plugging in $k= 3 + m p$ with $m$ integer, the anomaly does vanish (mod 1), and it does not vanish otherwise.

\subsubsection{\texorpdfstring{$E_8 \times E_8$}{E8 x E8} theory}

The calculation for the $E_8 \times E_8$ theory is almost the same as in the preceding case. The anomaly theory has the same form of \eqref{eq:type-I_anomaly}, the only difference being the branching of the adjoint representation $\mathbf{adj} = (\mathbf{248}, \mathbf{1}) \oplus (\mathbf{1}, \mathbf{248})$. We employ the same construction as before, embedding $\Z_p$ into the $\Spin(16)$ subgroup of $E_8$. The general construction is thus specified by a pair $(k_1 , k_2)$ pertaining to the two $E_8$ factors. One then has to compute the branching for the $\mathbf{120}$ and the spinorial $\mathbf{128}$ of $\Spin(16)$ which compose the adjoint representation of $E_8$. The former has been presented in the preceding section, now with $N=8$, while the latter can be constructed computing Chern characters of antisymmetric representations of $SU(8)$ whose direct sum gives the branching of the spinorial representation:
\begin{eqaed}
    \mathbf{128}_+ \oplus \mathbf{128}_- \quad \longrightarrow \quad \bigoplus_{m=0}^8 \mathbf{{\binom{8}{m}}} \, .
\end{eqaed}
The Chern character for the various antisymmetric representations can be found by expanding the graded Chern character for the exterior algebra $\Lambda(V) = \oplus_n \Lambda^n(V)$
\begin{eqaed}
    \text{ch}(\Lambda(V)) \equiv \sum_n t^n \, \text{ch}(\Lambda^n(V)) \, ,
\end{eqaed}
which can be computed exploiting the property $\Lambda(U \oplus V) \simeq \Lambda(U) \otimes \Lambda(V)$ and that, for line bundles $\mathcal{L}$,
\begin{eqaed}
    \text{ch}(\Lambda(\mathcal{L})) = 1 + t \, e^{c_1(\mathcal{L})} \, .
\end{eqaed}
Thus, \eqref{eq:ch_suN} gives
\begin{eqaed}
    \text{ch}(\Lambda(\mathbf{N})) \quad \longrightarrow \quad (1+t \, e^x)^k \, (1 + t \, e^{-x})^k \, (1+t)^{N-2k} \, .
\end{eqaed}
For instance for $N=8$ and $k=1$, summing the even or odd rank characters leads to
\begin{eqaed}
    \text{ch}(\mathbf{128}) \quad \longrightarrow \quad  64 + 32 \left(e^x + e^{-x}\right) \, ,
\end{eqaed}
which means that the spinorial representations branch according to $\mathbf{128} \to 32L \oplus 64\mathcal{L}^0$. Analogously, $\mathbf{120} \to L^2 \oplus 24 L \oplus 70 \mathcal{L}^0$, so that all in all
\begin{eqaed}
    \mathbf{248} \quad \longrightarrow \quad  L^2 \oplus 56L \oplus 134 \mathcal{L}^0 \, .
\end{eqaed}
The anomaly for this particular choice $(k_1, k_2) = (1,0)$ then simplifies to
\begin{eqaed}
    \alpha^{(1,0)}_{E_8 \times E_8}(L^{11}_p) = \frac{p^6 + 5p^4 + 34p^2 - 40}{60p}
\end{eqaed}
which vanishes (mod 1) for $p=2$. Let us now look at the Bianchi identity. The Chern class of the adjoint $E_8 \times E_8$ associated bundle is
\begin{eqaed}
    c = (1-4y) \, (1-y)^{56} \, ,
\end{eqaed}
with $y$ a generator of $H^4(L^{11}_p ; \mathbb{Z})$, and thus $c_2 = - 60 y$. For $E_8 \times E_8$ we have to divide $\frac{c_2}{2}$ by 30 in the Bianchi identity, thus getting
\begin{eqaed}
    \frac{p_1}{2} + \frac{c_2}{60} = 2 \, y \, .
\end{eqaed}
This class only vanishes if $p=2$, which is the same value for which the anomaly vanishes! Similarly, for $(k_1 , k_2) = (1,1)$ the Bianchi class is $y$, which never vanishes (except for the trivial case $p=1$), and accordingly the anomaly never vanishes either.

One can carry on with more complicated embeddings computing the spinorial branching of $\mathbf{128}$: for $(k_1 , k_2) = (2,1)$ the Bianchi class vanishes, and indeed the anomaly turns out to always vanish mod 1. At first glance, the case $(k_1, k_2) = (2,0)$ appears to present an exception, since the Bianchi class is $y \neq 0$ but the anomaly vanishes for $p=5$. However, in order to find the relationship between torsional Bianchi identities and anomalies, for given torsion the anomaly should vanish for all allowed backgrounds, and the $(1,1)$ embedding has the same Bianchi class but nonvanishing anomaly for $p=5$.

The general expression for any $(k_1, k_2)$ for the $E_8 \times E_8$ theory is more involved due to how the spinorial representations branch, but the procedure to compute the anomaly is systematic.

\subsubsection{Non-supersymmetric theories}

Let us now address the non-supersymmetric cases. An immediate consequence of the above result for the supersymmetric heterotic theories is that the anomaly on lens spaces satisfying the torsional Bianchi identity also vanishes for the non-SUSY heterotic theory, since its chiral matter content is in the virtual difference of the corresponding representations \cite{Schellekens:1986xh}. This fact will turn out to be useful when discussing fivebrane anomaly inflow in section \ref{sec:so16so16_inflow}.

For the Sagnotti model, the anomaly theory can be written as \cite{Sagnotti:1995ga, Sagnotti:1996qj}
\begin{eqaed}\label{eq:sagnotti_lens_anomaly}
    \alpha_{0'\text{B}}(L_p^{11}) & = \alpha_\text{self-dual}(L_p^{11}) - \eta^\text{D}_{\text{\bf antisym}}(L_p^{11}) \\
    & = - \, \alpha^\text{RS}_0(L_p^{11}) + 3 \, \eta^\text{D}_0(L_p^{11}) - \, \eta^\text{D}_{\text{\bf antisym}}(L_p^{11})
\end{eqaed}
since it contains a four-form RR field with self-dual curvature, similarly to type IIB. Following the same procedure as before, now with the simpler inclusion $\Z_p \hookrightarrow U(1) \hookrightarrow SU(32)$, one can evaluate the fermionic anomalies; for the self-dual field, in the second line of \cref{eq:sagnotti_lens_anomaly} we have used anomaly cancellation in type IIB supergravity to recast its anomaly theory in terms of fermionic contributions, along the lines of \cite{Debray:2021vob}. Thus we obtain
\begin{eqaed}
    \alpha^{(k)}_{0'\text{B}}(L_p^{11}) & = - \, \frac{(p^2-1)(5k^2 - 5k (p^2+12) +2 (p^4 + 11p^2 + 90) )}{120p} \, .
\end{eqaed}
The Chern class of the associated fundamental bundle is now $c = (1 - y)^k$, so that $c_2 = -k \, y$ and the Bianchi class
\begin{eqaed}
    \frac{p_1 + c_2}{2} = \left(3 - \, \frac{k}{2} \right) y
\end{eqaed}
vanishes for $k = 6 \; \text{mod } p$. Notice that $c_3 = 0$ as well for these bundles, since the total Chern class only contains powers of $y \in H^4(L_p^{11}, \mathbb{Z})$. Substituting $k = 6 + m p$ for integer $m$, the anomaly vanishes as expected, but not otherwise.

The calculation for the Sugimoto model is essentially identical: the anomaly theory is simply $\alpha_\text{Sugimoto} = - \, \alpha_{0'\text{B}}$, since the antisymmetric fermion has now positive chirality and the gravitino and dilatino contribute the opposite of the self-dual tensor. The inclusion we employ is $\Z_p \hookrightarrow U(1) \hookrightarrow \Sp(1) \simeq SU(2) \overset{k}{\hookrightarrow} \Sp(16)$, under which the $\mathbf{32}$ representation branches according to
\begin{eqaed}
    \mathbf{32} \quad \longrightarrow \quad kL \oplus (32 - 2k)\mathcal{L}^0 \, .
\end{eqaed}
Since the resulting Bianchi class is also the same, one obtains the same result: the anomalies cancel on lens backgrounds which satisfy the Bianchi identity at the torsional level.

\subsection{Vanishing bordism classes}
We now turn to the main results of this paper -- the calculation of string bordism groups with twisted string structures corresponding to the non-supersymmetric strings, by means of homotopy theory. These sections cover in some detail the mathematical aspects of the calculation; a table summarizing the results  can be found in the Conclusions.
\subsubsection{\texorpdfstring{$\Sp(16)$}{Sp(16)}}
\label{ss:sugi}
At this point we make our first bordism computation: that every closed, spin $11$-manifold $M$ with a principal
$\Sp(16)$-bundle $P$ satisfying the Green-Schwarz identity $\tfrac 12 p_1(M) + c_2(P) = 0$ is the boundary of a compact spin
$12$-manifold on which the $\Sp(16)$-bundle and Green-Schwarz data extend. This implies that the anomalies we
study in this paper vanish for the Sugimoto string.

To make these computations, we use the Adams and Atiyah-Hirzebruch spectral sequences. By now these are standard
tools in the mathematical physics literature, so we point the reader to~\cite{BC18, Debray:2023yrs} for background and
many example computations written for mathematical physicists. The computations in this paper are a little more
elaborate: twisted string bordism rather than twisted spin bordism. There are fewer such calculations in the
literature, but we found the references~\cite{Hil07, Hil09, BR21, Deb23, DY23} helpful.

On to business. The data of a $G$-gauge field and a B-field satisfying a Bianchi identity is expressed mathematically as a principal bundle for a Lie $2$-group extension of $G$ by $BU(1)$. Such extensions are classified by $H^4(BG;\Z)$~\cite[Corollary 97]{SP11}. Let $\String(n)\text{-}\Sp(16)$ denote the Lie $2$-group which is the extension of $\Spin(n)\times
\Sp(16)$ by $BU(1)$ classified by $\tfrac 12 p_1 + c_2\in H^4(B(\Spin(n)\times \Sp(16));\Z)$, and %
$\String\text{-}\Sp(16)$ be the colimit as $n\to\infty$ as usual. A string-$\Sp(16)$-structure on a manifold $M$ is
data of a spin structure, a $\Sp(16)$-bundle $P$, and a trivialization of the Green-Schwarz term $\tfrac 12 p_1(M) + c_2(P)$: exactly what we
need for the Sugimoto string.

Though we are primarily interested in showing $\Omega_{11}^{\String\text{-}\Sp(16)} = 0$, the lower-dimensional
bordism groups are barely more work. %
\begin{thm}
\label{sugimoto_thm}
The low-dimensional $\String\text{-}\Sp(16)$ bordism groups are:
\begin{alignat*}{2}
	\Omega_0^{\String\text{-}\Sp(16)} &\cong\Z \qquad & \Omega_6^{\String\text{-}\Sp(16)} &\cong \Z_2\\
	\Omega_1^{\String\text{-}\Sp(16)} &\cong\Z_2 \qquad & \Omega_7^{\String\text{-}\Sp(16)} &\cong 0\\
	\Omega_2^{\String\text{-}\Sp(16)} &\cong\Z_2 \qquad & \Omega_8^{\String\text{-}\Sp(16)} &\cong \Z^{\oplus
		3}\\
	\Omega_3^{\String\text{-}\Sp(16)} &\cong 0 \qquad & \Omega_9^{\String\text{-}\Sp(16)} &\cong (\Z_2)^{\oplus
		2}\\
	\Omega_4^{\String\text{-}\Sp(16)} &\cong\Z \qquad & \Omega_{10}^{\String\text{-}\Sp(16)} &\cong
		(\Z_2)^{\oplus 3}\\
	\Omega_5^{\String\text{-}\Sp(16)} &\cong\Z_2 \qquad & \Omega_{11}^{\String\text{-}\Sp(16)} &\cong 0.
\end{alignat*}
\end{thm}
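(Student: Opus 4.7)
The plan is to compute the homotopy groups of the Madsen--Tillmann Thom spectrum $X := MT(\String\text{-}\Sp(16))$ through degree $11$ via the Adams spectral sequence at $p=2$, together with a short argument that the odd-primary contribution vanishes in this range. The Thom spectrum $X$ is built from the Lie $2$-group fibration
\[
B\U(1) \longrightarrow B(\String\text{-}\Sp(16)) \longrightarrow B\Spin \times B\Sp(16) \xrightarrow{\;\tfrac{1}{2}p_1 + c_2\;} K(\Z,4),
\]
and this description is what lets us access $H^*(X;\mathbb{F}_2)$ as a module over the Steenrod algebra $\mathcal{A}$.

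The first main step is to determine $H^*(X;\mathbb{F}_2)$ as an $\mathcal{A}$-module in the range $* \le 12$. Combine the Thom isomorphism for the spin direction with the Serre spectral sequence for the $K(\Z,3)$-fibration, using that $H^*(B\Sp(16);\mathbb{F}_2)$ is polynomial on symplectic Pontryagin classes $q_i$ of degree $4i$ (so only $q_1 = c_2$ and $q_2$ matter here), that $H^*(K(\Z,3);\mathbb{F}_2)$ is generated by known Serre iterates of $\iota_3$, and that the twist $\tfrac{1}{2}p_1 + c_2$ acts as the $d_4$ differential killing the pulled-back degree-$4$ class. The output is a small $\mathcal{A}$-module whose $Sq^1, Sq^2, Sq^4$ structure can be written down explicitly.

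The second step is to run the Adams spectral sequence. Because $M\String$ is $7$-connected and the twisted string bordism of $B\Sp(16)$ through this range is of moderate size, a change of rings to the subalgebra $\mathcal{A}(2)$ brings the calculation within reach: write $H^*(X;\mathbb{F}_2)$ as an extension of $\mathcal{A}\otimes_{\mathcal{A}(2)} N$-style pieces built from familiar $\mathcal{A}(2)$-modules (the joker, $\mathcal{A}(2)/\!/\mathcal{A}(1)$, shifted copies of $\mathbb{F}_2$), so that $E_2 = \mathrm{Ext}_{\mathcal{A}(2)}^{*,*}(N,\mathbb{F}_2)$ is read off from the standard $\pi_*\tmf$ chart together with its twisted siblings used in \cite{Deb23, DY23}. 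Then the few possible Adams differentials and extensions in the $\le 11$ stem are pinned down by comparison with the known ASS for $M\String$ itself and with the twisted-string computations in the cited references. The odd-primary groups are handled by an Atiyah--Hirzebruch argument: since $B\Sp(16)$ has only even-dimensional cells and $M\String_{(p)}$ for $p$ odd is $7$-connected with very sparse low-degree homotopy (and torsion in $\pi_{\le 11}M\String$ is $2$-primary in this range), no odd-primary classes survive in degrees $\le 11$ after imposing the Bianchi identity.

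The principal obstacle is the second step: determining the precise $\mathcal{A}$-module structure on $H^*(X;\mathbb{F}_2)$ -- in particular the interaction between $Sq^4$ on the Thom class and on the $K(\Z,3)$-classes introduced by the twist -- and then resolving the Adams differentials and hidden $2$-extensions in the stems where $\pi_*M\String$ has nontrivial differentials (around stems $7$--$9$) or $\eta$- and $\nu$-extensions (around stems $8$--$10$). Once these are resolved, one simply reads off $\Omega^{\String\text{-}\Sp(16)}_n$ from the remaining $E_\infty$-page, with the vanishing of $\Omega^{\String\text{-}\Sp(16)}_{11}$ -- the physically crucial statement -- following from the absence of any permanent cycle in the $11$-stem after the differentials are accounted for.
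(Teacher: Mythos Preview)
Your overall shape is right---Adams spectral sequence with a change of rings to $\cA(2)$, plus a separate odd-primary argument---but there are two genuine gaps.

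First, your odd-primary step does not work as stated. You claim that torsion in $\pi_{\le 11}M\String$ is $2$-primary, but this is false: $\Omega_3^\String\cong\Z/24$ and $\pi_{10}\tmf$ both contain $3$-torsion, so a bare Atiyah--Hirzebruch argument at $p=3$ does not immediately collapse. The paper handles $p\ge 5$ by AHSS exactly as you say, but at $p=3$ it runs a separate Adams spectral sequence over the algebra $\cA^{\tmf}$ (a small quotient involving $\beta$ and $\cP^1$), computes the $\cA^{\tmf}$-module structure on $H^*((B\Sp(16))^{V-64};\Z_3)$ explicitly, and then reads off that the $E_2$-page consists only of $h_0$-towers in degrees $\le 11$, so there is no $3$-torsion. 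You need something like this.

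Second, and more seriously, you are too optimistic about resolving the $p=2$ differentials ``by comparison with $M\String$ and the cited references.'' The paper's $\cA(2)$-module decomposition is $M_4\oplus\Sigma^8 M_4\oplus\Sigma^8 M_4$ (where $M_4$ is two $\Z_2$'s joined by a $\Sq^4$), and the crucial differentials in stems $8$--$10$ are \emph{not} visible from $M\String$ alone. The paper resolves them by geometric input: (i) showing via an index computation on $8$-manifolds that any string-$\Sp(16)$ manifold has $\int c_4$ divisible by $12$, which forces certain $d_2$'s to be nonzero; (ii) exhibiting $\HP^2$ with a specific $\Sp(16)$-bundle to detect $c_2^2$; and (iii) mapping from the Thom spectrum over $\HP^1$ and invoking Bruner--Rognes' calculation of the $d_2$ on $E_2^{2,12}$ there. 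Without these ingredients you cannot pin down $\Omega_8$ through $\Omega_{10}$, and in particular the vanishing of $\Omega_{11}$ depends on knowing exactly which classes die. As a side remark, your proposed route to $H^*(X;\mathbb F_2)$ via the $K(\Z,3)$-fibration is workable but unnecessarily heavy: the paper instead identifies $\Omega_*^{\String\text{-}\Sp(16)}$ with $\tmf_*((B\Sp(16))^{V-64})$ in this range, and since $H^*(B\Sp(16))$ is polynomial in even degrees the Thom-spectrum cohomology is immediate.
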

\begin{proof}
Let $V\to B\Sp(16)$ be the vector bundle associated to the defining representation; it is rank $64$ as a real vector bundle. Then by an argument analogous
to~\cite[\S 10.4]{Debray:2023yrs}, there is an isomorphism $\Omega_*^{\String\text{-}\Sp(16)}\cong
\Omega_*^\String((B\Sp(16))^{V-64})$, where $(B\Sp(16))^{V-64}$ is the \emph{Thom spectrum} of the virtual vector bundle $V - \underline\R^{64}\to B\Sp(16)$. The Thom spectrum $X^V$ of $V\to X$ is a homotopy-theoretic object whose homotopy groups can be expressed as certain kinds of bordism groups by the Pontrjagin-Thom theorem; the upshot is that string bordism groups of $X^V$ are isomorphic to $(X, V)$-twisted string bordism groups of a point. See~\cite[\S 10.4]{Debray:2023yrs} for more information and references.

If $\tmf$ denotes the spectrum of connective
topological modular forms, then it follows that the map $\Omega_*^\String(X)\to \tmf_*(X)$ is an isomorphism in degrees $15$ and below
whenever $X$ is a space or connective spectrum~\cite[Theorem 2.1]{Hil09} (the latter condition includes all Thom spectra we study in this paper). Therefore for the rest of the proof we focus on $\tmf_*((B\Sp(16))^{V-64})$. These are finitely generated
abelian groups, so we may work one prime at a time (see~\cite[\S 10.2]{Debray:2023yrs}).

As input, we will need the following calculation of Borel.
\begin{prop}[{Borel~\cite[\S 29]{Bor53}}]
\label{borel_Sp}
$H^*(B\Sp(16);\Z)\cong\Z[c_2, c_4, \dotsc, c_{32}]$, where $c_i$ is the pullback of the $i^{\mathrm{th}}$ Chern
class under the map $B\Sp(16)\to B\U(32)$.
\end{prop}
For large primes $p$ (i.e.\ $p\ge 5$), we want to show that $\tmf_*((B\Sp(16))^{V-64})$ lacks $p$-torsion in
degrees $11$ and below. This follows because when $p\ge 5$, the homotopy groups of the $p$-localization
$\tmf_{(p)}$  are free and concentrated in even degrees~\cite[\S 13.1]{douglas2014topological}, and the $\Z_{(p)}$ cohomology
of $B\Sp(16)$ (hence also of $(B\Sp(16))^{V-64}$, by the Thom isomorphism) is always free and concentrated in even degrees as a consequence of \cref{borel_Sp} and the universal coefficient theorem, so the Atiyah-Hirzebruch spectral sequence
computing $p$-localized $\tmf_*((B\Sp(16))^{V-64})$ collapses with only free summands on the
$E_\infty$-page, preventing $p$-torsion in $\tmf_*((B\Sp(16))^{V-64})$ in the range we care about.

For $p = 3$, the $3$-localized Atiyah-Hirzebruch spectral sequence does not immediately collapse, so we use the
Adams spectral sequence (and we will see that this Adams spectral sequence does immediately collapse). The Adams spectral sequence takes the form
\begin{equation}\label{big_adams}
    E_2^{s,t} = \Ext_\cA^{s,t}(H^*(X; \Z_p), \Z_p) \Longrightarrow \pi_{t-s}^s(X)_p^\wedge.
\end{equation}
Let us explain this notation. We pick a prime $p$; then $\cA$ is the $p$-primary Steenrod algebra, the $\Z_p$-algebra of all natural transformations $H^*(\text{--};\Z_p)\to H^{*+t}(\text{--};\Z_p)$ that commute with the suspension isomorphism. The mod $p$ cohomology of any space or spectrum $X$ is thus naturally a $\Z$-graded $\cA$-module, so we may apply $\Ext_\cA$, the derived functor of $\Hom_\cA$. This gives us two gradings: the original $\Z$-grading on cohomology is the one labeled $t$, and the grading arising from the derived functors is the one labeled $s$. On the right-hand side of~\eqref{big_adams}, $\pi_*^s$ denotes stable homotopy groups, and $(\text{--})_p^\wedge$ denotes $p$-completion. We will not need to worry in too much detail about $p$-completion: we will only ever $p$-complete finitely generated abelian groups $A$, for which the $p$-completion carries the same information as the free summands and the $p$-power torsion summands of $A$. Thus we will typically be implicit about $p$-completion --- in particular, $\Z_p$ always denotes the cyclic group of order $p$, not the $p$-adic integers.

We are interested in $\tmf$-homology (or really string bordism), rather than stable homotopy, which means replacing $X$ with $\tmf\wedge X$ in~\eqref{big_adams}; then the Adams spectral sequence converges to $\tmf_{t-s}(X)_p^\wedge$.

By work of Henriques and Hill (see~\cite{Hil07, douglas2014topological}), building on work of Behrens~\cite{Beh06} and unpublished work of Hopkins-Mahowald, there is a change-of-rings theorem for the $3$-primary Adams spectral sequence for $\tmf$ simplifying~\eqref{big_adams} to
\begin{equation}
	E_2^{s,t} = \Ext_{\cA^{\tmf}}^{s,t}(H^*(X;\Z_3), \Z_3) \Longrightarrow \tmf_*(X)_3^\wedge.
\end{equation}
Here $\cA^{\tmf}$ is the graded $\Z_3$-algebra
\begin{equation}
	\cA^{\tmf} = \Z_3\ang{\beta, \cP^1}/(\beta^2, (\cP^1)^3, \beta(\cP^1)^2\beta - (\beta\cP^1)^2 -
	(\cP^1\beta)^2),
\end{equation}
with $\abs\beta= 1$ and $\abs{\cP^1} = 4$. For the Adams $E_2$-page, $\cA^{\tmf}$ acts on $H^*(X;\Z_3)$ by sending
$\beta$ to the Bockstein for $0\to\Z_3\to\Z_9\to\Z_3\to 0$ and $\cP^1$ to the first mod $3$ Steenrod power. See~\cite{Hil07, Hil09, BR21, DY23} for more information and some example computations with this variant of the
Adams spectral sequence.

As input, we need to know how $\beta$ and $\cP^1$ act on $H^*((B\Sp(16))^{V-32};\Z_3)$. This is determined in~\cite[Corollary 2.37]{DY23} from the input data of the action of the images of $\beta$ and $\cP^1$ on the mod $3$ Steenrod algebra on $H^*(B\Sp(16); \Z_3)$.
As the cohomology of $B\Sp(16)$ is concentrated in even degrees, $\beta$ must
act trivially, and thus likewise for the Thom spectrum $(B\Sp(16))^{V-64}$. Shay~\cite{Sha77} computes the action of $\cP^1$ on mod $3$ Chern classes;\footnote{We also found Sugawara's explicit calculations of this formula in~\cite[\S 5]{Sug79} helpful.} the formula implies
that in $H^*(B\Sp(16);\Z_3)$, $\cP^1(c_2) = c_4 + c_2^2$ and $\cP^1(c_4) = c_4c_2$. For the Thom class\footnote{We thank Y. Tachikawa for pointing out a minus sign error in this expression (which does not impact our calculation)}, $\cP^1(U) =- Uc_2$~\cite[Theorem 2.28]{DY23}. Using the Cartan formula, we can compute the $\cA^{\tmf}$-module
structure on $H^*((B\Sp(16))^{V-64};\Z_3)$.
\begin{defn}
If $M$ is a $\Z$-graded module over a $\Z$-graded algebra $A$, we will let $\Sigma^k M$ denote the same underlying $A$-module with the grading shifted up by $k$, i.e.\ if $x\in M$ is homogeneous of degree $m$, then $x\in\Sigma^k M$ has degree $m+k$. We will write $\Sigma$ for $\Sigma^1$.
\end{defn}
The notation $\Sigma^k$ is inspired by the suspension of a topological space, which has the effect of increasing the degrees of elements in cohomology by $1$.
\begin{defn}
\label{N3_defn}
Let $N_3$ denote the nontrivial
$\cA^{\tmf}$-module extension of $C\nu$ by $\Sigma^8\Z_3$, where $C\nu$ is the $\cA^{\tmf}$-module defined in~\cite[\S 3.2]{DY23}.
\end{defn}
Then, there is an $\cA^{\tmf}$-module isomorphism
\begin{equation}
\label{sugimoto_mod_3_coh}
	H^*((B\Sp(16))^{V-64};\Z_3)\cong \textcolor{BrickRed}{N_3} \oplus \textcolor{MidnightBlue}{\Sigma^8 N_3}
	\oplus P,
\end{equation}
where $P$ is concentrated in degrees $12$ and above (so we can ignore it). We draw the decomposition~\eqref{sugimoto_mod_3_coh} in \cref{sugimoto_figure_3}, left.

We need to compute $\Ext_{\cA^{\tmf}}(N_3, \Z_3)$. To do so, we use the fact that the short exact sequence of $\cA^{\tmf}$-modules (which we draw in \cref{N2_fig}, top)
\begin{equation}
\label{N2_SES}
    \shortexact{\textcolor{RubineRed}{\Sigma^8 \Z_3}}{N_3}{\textcolor{Periwinkle}{C\nu}}{}
\end{equation}
induces a long exact sequence on Ext groups; traditionally one draws the Ext of the first and third terms of a short exact sequence in the same Adams chart, so that the boundary maps have the same degree as a $d_1$ differential. See Beaudry-Campbell~\cite[\S 4.6, \S 5]{BC18} for more information and some examples for modules over a different algebra $\cA(1)$, and~\cite[Figures 2, 3, and 5]{DY23} for some $\cA^{\tmf}$-module examples.

We will draw the long exact sequence in Ext corresponding to~\eqref{N2_SES} in \cref{N2_fig}. To do so, we need $\Ext_{\cA^{\tmf}}(\textcolor{RubineRed}{\Z_3})$, which is due to Henriques-Hill~\cite{Hil07, douglas2014topological}, and $\Ext_{\cA^{\tmf}}(\textcolor{Periwinkle}{C\nu})$, which is computed in topological degree $14$ and below in~\cite[Figure 2]{DY23}. Our notation for names of Ext classes follows~\cite[\S 3]{DY23}; $\Ext_{\cA^{\tmf}}(\textcolor{RubineRed}{\Sigma^8\Z_3})$ is a free $\Ext_{\cA^{\tmf}}(\Z_3)$-module on a single generator, so call that generator $z$.\footnote{There are two classes which generate $\Ext_{\cA^{\tmf}}(\textcolor{RubineRed}{\Sigma^8\Z_3})$ as an $\Ext_{\cA^{\tmf}}(\Z_3)$-module, and one is $-1$ times the other. For the purposes of this paper, it does not matter which one we call $z$ and which one we call $-z$.} Most boundary maps are nonzero for ``degree reasons,'' meaning that their domain or codomain is the zero group. For $t-s\le 14$, there are two exceptions: $\partial(z)$ could be $\pm\alpha y$ or $0$, and $\partial(\alpha z)$ could be $\pm \beta x$ or $0$. Since the boundary maps commute with the $\Ext_{\cA^{\tmf}}(\Z_3)$-action and $\alpha(\alpha y) = \beta x$,\footnote{The equation $\alpha(\alpha y) = \beta x$ is stated in~\cite[Remark 3.21]{DY23}, but not proven there. One way to prove it is to compare with the equivalent $\alpha$-action $\alpha y\mapsto \beta x$ in $\Ext_{\cA^{\tmf}}(N_1)$ in the long exact sequence in (\textit{ibid.}, Figure 5): because $\partial(\alpha y) = \pm \beta w$ and $\alpha\beta w\ne 0$, and because $\alpha(\partial(\text{--})) = \partial(\alpha\cdot\text{--})$, $\alpha(\alpha y)\ne 0$, hence must be $\pm \beta x$, and we can choose the generator $x$ so that we obtain $\beta x$ and not $-\beta x$. The calculation of $\partial(\alpha x)$ in (\textit{ibid.}, Lemma 3.24) does not use any information about $\alpha(\alpha y)$.} these two boundary maps are either both zero or both nonzero. To see that they are both nonzero, we use that $\Ext_{\cA^{\tmf}}^{0,8}(N_3)\cong \Hom_{\cA^{\tmf}}(N_3, \Sigma^8\Z_3) = 0$, so $z\in\Ext_{\cA^{\tmf}}^{0,8}(\textcolor{RubineRed}{\Sigma^8 \Z_3})$ is not the image of an Ext class for $N_3$, so $\partial(z)\ne 0$.

\begin{figure}[htbp!]
\centering
\begin{subfigure}[c]{0.4\textwidth}
\begin{tikzpicture}[scale=0.4]
        \PoneL(0, 2);
        \begin{scope}[RubineRed]
                \foreach \x in {-5, 0} {
                        \tikzpt{\x}{4}{}{};
                }
                \draw[->, thick] (-4.5, 4) -- (-0.5, 4);
        \end{scope}
        \begin{scope}[Periwinkle]
                \foreach \x in {0, 5} {
                        \tikzpt{\x}{0}{}{};
                        \tikzpt{\x}{2}{}{};
                        \PoneL(\x, 0);
                }
                \draw[->, thick] (0.5, 0) -- (4.5, 0);
        \end{scope}
        \node[below=2pt] at (-5, 0) {$\Sigma^8\Z_3$};
        \node[below=2pt] at (0, 0) {$N_3$};
        \node[below=2pt] at (5, 0) {$C\nu$};
\end{tikzpicture}
\end{subfigure}\\
\begin{subfigure}[c]{0.47\textwidth}
\includegraphics[width=\textwidth]{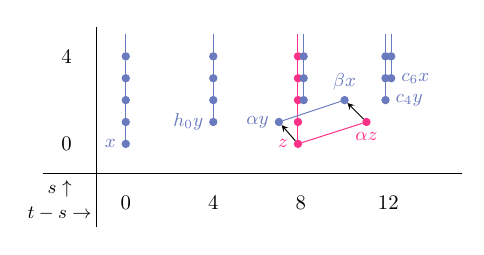}
\end{subfigure}
\begin{subfigure}[c]{0.47\textwidth}
\includegraphics[width=\textwidth]{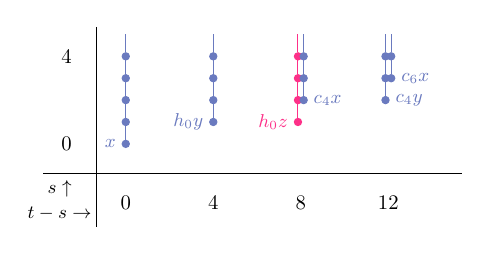}
\end{subfigure}
\caption{Top: the short exact sequence~\eqref{N2_SES} of $\cA^{\tmf}$-modules. Lower left: the induced long exact
sequence in Ext. Lower right: $\Ext_{\cA^{\tmf}}(N_3)$ as
computed by the long exact sequence.}
\label{N2_fig}
\end{figure}

With this Ext in hand, we can draw the $E_2$-page of the Adams spectral sequence in \cref{sugimoto_figure_3},
right. The spectral sequence collapses at $E_2$ in the range we study for degree reasons. The straight lines denote
actions by $h_0\in\Ext_{\cA^{\tmf}}^{1,1}(\Z_3, \Z_3)$, which lift to multiplication by $3$, so we see there is no
$3$-torsion in degrees $11$ and below.
\begin{figure}[htb!]
\centering
\begin{subfigure}[c]{0.2\textwidth}
  \begin{tikzpicture}[scale=0.6, every node/.style = {font=\tiny}]
    \foreach \y in {0, 4, ..., 16} {
      \node at (-2, \y/2) {$\y$};
    }
    \begin{scope}[BrickRed]
      \tikzpt{0}{0}{$U$}{};
      \tikzpt{0}{2}{}{};
	  \tikzpt{0}{4}{}{};
      \PoneL(0, 0);
      \PoneL(0, 2);
    \end{scope}
    \begin{scope}[MidnightBlue]
      \tikzpt{2}{4}{$Uc_2^2$}{};
      \tikzpt{2}{6}{}{};
      \tikzpt{2}{8}{}{};
      \PoneL(2, 4);
	  \PoneL(2, 6);
    \end{scope}
  \end{tikzpicture}
\end{subfigure}
\begin{subfigure}[htb!]{0.6\textwidth}
\includegraphics[width=\textwidth]{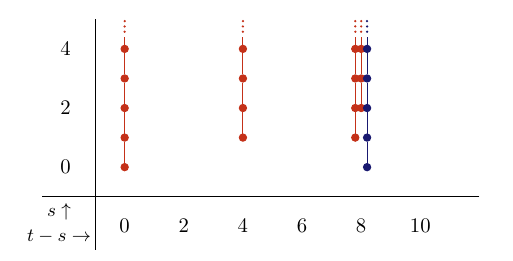}
\end{subfigure}
\caption{Left: the $\cA^{\tmf}$-module structure on $H^*((B\Sp(16))^{V - 64};\Z_3)$ in low degrees; the pictured submodule contains
all elements in degrees $11$ and below. Right: the $E_2$-page of the Adams spectral sequence computing
$\tmf_*((B\Sp(16))^{V-64})_3^\wedge$.}
\label{sugimoto_figure_3}
\end{figure}

Lastly, for $p = 2$, we use the Adams spectral sequence again; the outline of the proof is quite similar to the $p
= 3$ case, but the details are different. Specifically, we will once again use the Adams spectral sequence and a standard change-of-rings theorem to simplify the calculation of the $E_2$-page, but the algebra of cohomology operations is different.

Let $\cA(2)$ be the subalgebra of the mod $2$ Steenrod algebra generated by $\Sq^1$, $\Sq^2$, and $\Sq^4$. There is an isomorphism $H^*(\tmf;\Z_2)\cong \cA\otimes_{\cA(2)}\Z_2$~\cite{HM14, Mat16}, which by a standard argument simplifies the $E_2$-page of the $2$-primary Adams spectral sequence to
\begin{equation}
    E_2^{s,t} = \Ext_{\cA(2)}^{s,t}(H^*(X; \Z_2), \Z_2) \Longrightarrow \tmf_*(X)_2^\wedge.
\end{equation}
The next thing to do is to determine how $\cA(2)$ acts on $H^*((B\Sp(16))^{V-64};\Z_2)$. Since this cohomology ring vanishes in degrees not divisible by $4$, $\Sq^1$ and $\Sq^2$ act trivially. For $\Sq^4$,~\cite[\S 3.3]{BC18} says $\Sq^4(U) = Uw_4(V) = c_2$, and the Wu formula computes the Steenrod squares in $H^*(B\Sp(16);\Z_2)$, using that the mod $2$ reductions of the generators in \cref{borel_Sp} are Stiefel-Whitney classes. This allows us to completely describe the $\cA(2)$-action on $H^*((B\Sp(16))^{V-64};\Z_2)$ in the degrees we need: $\Sq^4(U) = Uc_2$, $\Sq^4(Uc_2^2) = Uc_2^3$, $\Sq^4(Uc_4) = Uc_6$, and all other actions by $\Sq^1$, $\Sq^2$, or $\Sq^4$ starting in degree $11$ or below vanish. Thus, if $M_4$ denotes the $\cA(2)$-module consisting of two $\Z_2$ summands in degrees $0$ and $4$ connected by a $\Sq^4$, there is an isomorphism
\begin{equation}
\label{BSp16_A2}
    H^*((B\Sp(16))^{V-64};\Z_2) \cong
        \textcolor{BrickRed}{M_4} \oplus
        \textcolor{Green}{\Sigma^8 M_4} \oplus
        \textcolor{MidnightBlue}{\Sigma^8 M_4} \oplus P,
\end{equation}
where $P$ contains no elements in degrees $11$ or below, and hence will be irrelevant to our calculations. We draw~\eqref{BSp16_A2} in \cref{sugimoto_figure_2}, left.

Bruner-Rognes~\cite[\S 4.4]{BR21} compute $\Ext_{\cA(2)}(M_4)$; using their result, we give the $E_2$-page of the Adams spectral sequence computing $\tmf_*((B\Sp(16))^{V-64})_2^\wedge$ in \cref{sugimoto_figure_2}, top right.

\begin{figure}[htb!]
\centering
\begin{subfigure}[c]{0.3\textwidth}
  \begin{tikzpicture}[scale=0.6, every node/.style = {font=\tiny}]
    \foreach \y in {0, 4, ..., 12} {
      \node at (-2, \y) {$\y$};
    }
    \begin{scope}[BrickRed]
      \tikzptR{0}{0}{$U$}{};
	  \tikzpt{0}{4}{}{};
      \sqfourL(0, 0);
    \end{scope}
    \begin{scope}[Green]
      \tikzptR{0}{8}{$Uc_2^2$}{};
	  \tikzpt{0}{12}{}{};
      \sqfourL(0, 8);
    \end{scope}
    \begin{scope}[MidnightBlue]
      \tikzptR{2}{8}{$Uc_4$}{};
	  \tikzpt{2}{12}{}{};
      \sqfourL(2, 8);
    \end{scope}
  \end{tikzpicture}
\end{subfigure}
\begin{subfigure}[htb!]{0.6\textwidth}
%
\includegraphics[width=\textwidth]{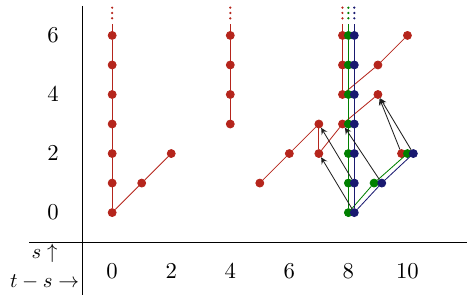}
\includegraphics[width=\textwidth]{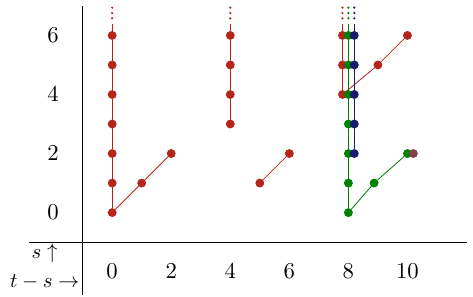}
\end{subfigure}
\caption{Left: the $\cA(2)$-module structure on $H^*((B\Sp(16))^{V - 64};\Z_2)$ in low degrees; the pictured submodule contains
all elements in degrees $11$ and below. Above right: the $E_2$-page of the Adams spectral sequence computing
$\tmf_*((B\Sp(16))^{V-64})_2^\wedge$. Below right: the $E_3 = E_\infty$ page.}
\label{sugimoto_figure_2}
\end{figure}

Looking at the $E_2$-page, most differentials are ruled out by degree considerations or the fact that they must commute with the action of $h_0$ or $h_1$. There is a $d_2$ out of $E_r^{0,8}$, and $d_2\colon E_2^{2,12}\to E_2^{4,13}$. There also four potentially non-trivial $d_2$ differentials out of $E_2^{0,8},\; E_2^{1,10},\; E_2^{1,9}$ and $E_2^{2,12}$, as shown in in \cref{sugimoto_figure_2}, top right.

These non-trivial differentials can kill classes in the $E_2$-page, as we will now see\footnote{These differentials were erroneously concluded to vanish in an earlier version of this manuscript. We are thankful to Shota Saito and Yuji Tachikawa for pointing this out to us and sharing a note with the corrected calculation.}. They can be understood as follows. Let us consider the generator of $E_2^{0,8}$ in the Adams spectral sequence. Classes $x\in E_2^{0,\bullet}$ of an Adams spectral sequence for $G$-bordism correspond (at prime 2) naturally to (a subset of) $\Z_2$-valued characteristic classes $c_x$ for manifolds with $G$-structure, and $x$ survives to the $E_\infty$-page if and only if there is a closed manifold $M$ with $G$-structure such that $\int_M c_x = 1$; see~\cite[\S 8.4]{Freed:2019sco}.

For the Adams spectral sequence for string-$\Sp(16)$ bordism at $p = 2$, the two classes corresponding to a basis of $E_2^{0,8}$ are the mod $2$ reductions of $c_2^2$ and $c_4$, since these generate the cohomology of $BSp(16)$ at prime 2 (see \cite[\S 8.4]{Freed:2019sco}). We now consider closed string-$\Sp(16)$ $8$-manifolds on which these classes do not vanish.
\begin{itemize}
    \item The quaternionic projective plane $\HP^2$ has a tautological principal $\Sp(1)$-bundle $P := S^{11}\to\HP^2$; let $P^\vee\to \HP^2$ be the same space with the quaternion-conjugate $\Sp(1)$-action, and let $Q\to \HP^2$ be the principal $\Sp(16)$-bundle induced from $P^\vee$ by the inclusion $i\colon \Sp(1)\to\Sp(16)$. Using the fact that $i$ pulls $c_2$ back to $c_2$ and Borel and Hirzebruch's calculation of the characteristic classes of $\HP^2$~\cite[\S 15.5, \S 15.6]{BH58} (see also~\cite[\S 5.2]{Freed:2019sco} for a good review), the reader can verify that $(\HP^2, Q)$ has a unique string-$\Sp(16)$ structure, meaning in particular that $c_2(Q) = -\tfrac 12 p_1(\HP^2)$, and that $\int_{\HP^2} c_2(Q)^2 = 1$.
    \item An 8-manifold with manifold with the smallest possible value of $c_4$,  $\int c_4=1$, was identified in  \cite{LY22}. However, this manifold does not admit a twisted-string structure. In fact, any $\text{Sp}(16)$-twisted string manifold in dimension eight has $\int c_4$ being a multiple of 12\footnote{We thank Y. Tachikawa for sharing a note on this with us.}, as we will now show. Consider an $Sp(16)$ bundle $P$ over a spin 8-manifold, let $V$ be the defining fundamental representation of $Sp(16)$ and let $\underline{V}$ be the corresponding virtual representation with dimension 0. Then the index of the Dirac operator on the spinor bundle tensored with $P \times _{Sp(16)} \underline V$ is 
    \begin{equation}
        \text{ind} D(\underline{V}) = \frac{1}{12}c_2(P)\left( c_2(P) + \frac{p_1(M)}{2}\right) + \frac{c_4(P)}{6},
    \end{equation} where the first term vanishes by definition of twisted string structure. Since $\underline{V}$ is a pseudorreal representation, the index is even, we can conclude that $\int_{S^8} c_4(P)$ is a multiple of 12 for any twisted-string manifold.
    A concrete example of a manifold with $\int c_4=\pm 12$ bis given by $S^8$ with principal $\Sp(16)$-bundle $P\to S^8$ classified by either generator of
    \begin{equation}
        [S^8, B\Sp(16)] = \pi_8(B\Sp(16))\overset\cong\to\pi_8(B\mathit{Sp}) = \pi_0(B\Sp) = \Z,
    \end{equation}
    using Bott periodicity. Since $H^4(S^8;\Z) = 0$, $c_2(P)$ and $\tfrac 12 p_1(S^8)$ vanish and therefore $(S^8, P)$ is string-$\Sp(16)$. We suspect, but did not prove, that this manifold attains the smallest possible value $\int c_4=\pm 12$. 
    \qedhere
\end{itemize}
\end{proof}
Since there is no mod 2 cohomology class that detects this second generator of twisted string bordism, the corresponding classes must be killed in the Adams spectral sequence. At prime 3, this appears automatically in the calculation of Ext groups leading to the $E_2$ page, which is why this subtlety did not appear in our discussion. At prime 2, however,  and this factor of 12 (or equivalently 4, modulo 2) needs to appear as nontrivial differentials. Therefore two non-zero differentials should kill the generators of $E_2^{0,8}$ and $E_2^{1,9}$ in the mod-2 Adams spectral sequence to give a factor of $2^2 = 4$, where the second on. There are two additional differentials coming from $E^{1,10}$ and $E^{2,12}$, since differentials commute with the action of $h_0$.

Taking a look now at the non-zero differential from $E^{1,10}$, we see that it cannot point to the green dot in $E^{3,11}_2$ as this will give another factor of 2, which goes against the existence of configurations with $\int c_4 = 12$, since we have to kill the two dots in $E_2^{0,8}$.

The differential out of the red dot in $E_2^{2,12}$ does not vanish --- to see this, consider the map
\begin{subequations}
\label{nonvan_diffs}
\begin{equation}
    f\colon \HP^1\longrightarrow\HP^\infty\simeq B\Sp(1)\longrightarrow B\Sp(16)
\end{equation}
and the induced map on Thom spectra
\begin{equation}
    f_*\colon \tmf_*((\HP^1)^{f^*V - 64})\longrightarrow\tmf_*((B\Sp(16))^{V - 64}).
\end{equation}
\end{subequations}
The map $f_*$ induces a pullback map on mod $2$ cohomology and on Adams spectral sequences; the map on mod $2$ cohomology is the quotient by all elements of degree greater than $4$, so the effect on Adams spectral sequences is to kill all summands in Ext except for the red summands. As $H^*((\HP^1)^{f^*V - 64};\Z_2)$ consists of two $\Z_2$ summands in degrees $0$ and $4$, joined by a $\Sq^4$, the Adams spectral calculating its $2$-completed $\tmf$-homology is worked out by Bruner-Rognes~\cite[Theorem 8.1]{BR21}, who show that $d_2\colon E_2^{2,12}\to E_2^{4,13}$ is an isomorphism. Thus this differential persists to the $\Sp(16)$ Adams spectral sequence.

Once we conclude that one differential out of $E_2^{2,12}$ is nonzero, and kills the only class in the target, the other one (coming from the blue dot) has no choice but to vanish. More precisely, the red and blue dots in $(10, 2)$ have the same $d_2$, so $d_2$ of their sum is $0$. Therefore their sum lives to $E_\infty$.

We now turn the page and obtain the $E_3$-page of the Adam's spectral sequence, in \cref{sugimoto_figure_2}, bottom right. Most differentials are ruled out by degree considerations or the fact that they must commute with the action of $h_0$ or $h_1$. By $h_0$- and $h_1$-linearity, all longer differentials vanish in the range depicted and therefore $E_3 = E_\infty$ in topological degrees $12$ and below.

\subsubsection{\texorpdfstring{$\U(32)$}{U(32)}}
\label{ss:sagnotti}
Now we discuss the Sagnotti string, whose gauge group is $\U(32)$. The Green-Schwarz mechanism for this theory
involves three classes in degrees $2$, $4$, and $6$ canceling $c_1$, $c_2$, and $c_3$ of the gauge bundle,
respectively.

We may impose the degree-$2$, $4$, and $6$ conditions on $B\U(32)$ in any order. Starting with $c_1$, we obtain
$B\SU(32)$; then, let $B\SU(32)\ang{c_3}$ denote the fiber of the map
\begin{equation}
	c_3\colon B\SU(32)\longrightarrow K(\Z, 6).
\end{equation}
A map $X\to B\SU(32)\ang{c_3}$ is equivalent data to a rank-$32$ complex vector bundle $V\to X$ with
$\SU$-structure and a trivialization of $c_3(V)$. There is a tautological such vector bundle $V_t\to
B\SU(32)\ang{c_3}$, which is the pullback of the tautological bundle over $B\SU(32)$.

Finally, the degree-$4$ condition for a $\U(32)$-bundle $V$ over a manifold $M$ asks for a trivialization of $\tfrac
12 p_1(M) + c_2(V)$. Thus, we ask for a $(B\SU(32)\ang{c_3}, V_t)$-twisted string structure on $M$, i.e.\ a map $f\colon M\to B\SU(32)\ang{c_3}$ and a string structure on $TM\oplus f^*V_t$; the Whitney sum formula for $\tfrac 12 p_1$ unwinds this into the usual Green-Schwarz condition. We will be a little casual with the notation and call a $(B\SU(32)\ang{c_3}, V_t)$-structure a
$\String\text{-}\SU(32)\ang{c_3}$-structure, even though we do not construct a Lie $2$-group
$\String\text{-}\SU(32)\ang{c_3}$ realizing this twisted string structure (and indeed, there is no guarantee one
exists).
\begin{thm}
\label{sag_thm}
$\Omega_{11}^{\String\text{-}\SU(32)\ang{c_3}}$ is isomorphic to either $0$ or $\Z_2$.
\end{thm}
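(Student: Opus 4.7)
The strategy parallels the proof of \cref{sugimoto_thm}. First, I would recast the bordism group as the homotopy of a Thom spectrum. Let $V_t\to B\SU(32)\ang{c_3}$ be the tautological rank-$32$ complex vector bundle pulled back from $B\SU(32)$; the same argument as in the $\Sp(16)$ case shows
\[
\Omega_*^{\String\text{-}\SU(32)\ang{c_3}} \cong \Omega_*^{\String}\!\left((B\SU(32)\ang{c_3})^{V_t-64}\right),
\]
and by Hill's theorem this agrees with $\tmf_*((B\SU(32)\ang{c_3})^{V_t-64})$ in degree $11$, so it suffices to compute the latter at each prime.

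The next step is to pin down the mod $p$ cohomology of $X := (B\SU(32)\ang{c_3})^{V_t-64}$ in degrees $\le 11$. Using the defining fibration
\[
K(\Z,5)\longrightarrow B\SU(32)\ang{c_3}\longrightarrow B\SU(32)
\]
and the Serre spectral sequence, the transgression kills $c_3$ and introduces a new odd-degree class arising from the fundamental class of $K(\Z,5)$; in degrees $\le 11$ the base contributes $c_2,c_4,c_2^2$ (and, after transgression, a relation involving $c_3$) while the fiber contributes an odd class in degree $5$ together with its Steenrod image. Combined with the Thom isomorphism this yields a manageable $\cA$-module structure on $H^*(X;\Z_p)$ up to degree $11$ for each prime $p$. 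For $p\ge 5$, the Atiyah--Hirzebruch spectral sequence collapses onto the free, even-degree summands (the new odd-degree class and its operations contribute beyond our range), ruling out $p$-torsion in degree $11$; at $p=3$, I would identify the $\cA^{\tmf}$-module structure by computing $\beta$ (which annihilates all integral-cohomology classes) and $\cP^1$ (via Shay's formula on Chern classes and the Thom class formula $\cP^1(U)=-Uc_2$, with an extra contribution from the new fiber class), and show that $\Ext_{\cA^{\tmf}}$ vanishes in $(t-s)=11$ by an Ext computation analogous to the one for $N_3$.

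The hard part is prime $2$. I would determine the $\cA(2)$-action on $H^*(X;\Z_2)$ using the Wu formula on the base $B\SU(32)$, the relation $\Sq^4(U) = Uc_2$ on the Thom class, and the nontrivial $\Sq^2$ and $\Sq^4$ action on the degree-$5$ transgression class coming from $K(\Z,5)$. After decomposing the resulting $\cA(2)$-module into familiar pieces (using the library built up in~\cite{BR21, DY23}) and computing $\Ext_{\cA(2)}$, I expect a small number of surviving classes in topological degree $11$ on the $E_2$-page; the standard degree and $h_0,h_1$-linearity arguments should kill almost all of them, leaving at most one class. The remaining ambiguity --- whether a single candidate $d_2$ (or $d_3$) differential hits this class --- cannot be settled by purely algebraic spectral-sequence considerations and mirrors the situation we saw for the blue class of $E_2^{2,12}$ in the Sugimoto calculation. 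This is what forces the conclusion to be either $0$ or $\Z_2$.

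The main obstacle is therefore \emph{not} the construction of the Adams input, but resolving this final potentially nontrivial differential. Settling it would require either producing a closed string-$\SU(32)\ang{c_3}$ $11$-manifold detecting the surviving class (to rule out the differential) or producing a $12$-dimensional witness to the differential (to confirm the class dies); neither is obviously available, which is why we must leave the answer as $0$ or $\Z_2$.
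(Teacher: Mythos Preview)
Your overall architecture matches the paper's proof: reduce to $\tmf_*$ of a Thom spectrum, handle large primes via Atiyah--Hirzebruch, and run Adams at $p=3$ and $p=2$. However, there is a concrete error in your Serre spectral sequence analysis that would derail the $p=2$ computation if carried out as written.

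You say the fiber $K(\Z,5)$ contributes ``an odd class in degree $5$ together with its Steenrod image.'' In fact the degree-$5$ fundamental class $E$ transgresses to $c_3$ and is \emph{killed}; nothing survives in degree $5$. The first surviving fiber contribution is $G:=\Sq^2(E)$ in degree $7$ (it survives because $\Sq^2(c_3)=0$), followed by $H=\Sq^1G$, $J=\Sq^2H$, $K=\Sq^4G$, $L=\Sq^4H$. This matters because the sole class in topological degree $11$ that causes the ambiguity is $e\in E_2^{0,11}$, the filtration-$0$ class corresponding to $Uc_2G$; without $G$ in degree $7$ you would never see it. At $p=3$ the analogous fiber class $\beta\cP^1(E)$ sits in degree $10$, giving the $\Sigma^{10}N_3$ summand.

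A second imprecision: the unresolved issue is not whether a differential \emph{hits} the degree-$11$ class, but whether $e$ \emph{supports} a nonzero $d_2$ into $E_2^{2,12}$. The paper shows $d_2(e)\in\{0,\,h_1^2c\}$ after ruling out the other three possible targets individually (using $h_0$-linearity, compatibility with the previously computed $d_2(a_5)$ and $d_2(d)$, and a comparison to the $\SU(3)$ spectral sequence). You should also be aware that if $d_2(e)=0$ there is a potential $d_6(e)$ to rule out; the paper does this by mapping to the $\ko$-homology Adams spectral sequence, where the target summand splits off. Your sketch does not anticipate this extra step.

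Finally, the comparison to ``the blue class of $E_2^{2,12}$ in the Sugimoto calculation'' is misleading: that differential was fully resolved there, whereas the present $d_2(e)$ genuinely remains open.
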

The ambiguity is in a differential we were not able to resolve. Unfortunately, this means we were not able to use
bordism-theoretic methods alone to calculate the anomaly of the Sagnotti string. Our proof also yields partial
information on lower-dimensional bordism groups; there is ambiguity due to Adams spectral sequence differentials,
some of which we suspect are nonzero.
\begin{proof}
Before we start our analysis, we need to understand $H^*(B\SU(3)\ang{c_3}; A)$ for various coefficient rings $A$.
As $B\SU(32)\ang{c_3}$ is the fiber of $c_3\colon B\SU(32)\to K(\Z, 6)$, the fiber of $B\SU(32)\ang{c_3}\to
B\SU(32)$ is $\Omega K(\Z, 6) \simeq K(\Z, 5)$; moreover, this fibration pulls back from the universal fibration
with fiber $K(\Z, 5)$, namely the loop-space-path-space fibration for $K(\Z, 6)$:
\begin{equation}
\label{2serre}
\begin{tikzcd}
	{K(\Z, 5)} & {K(\Z, 5)} \\
	& {B\SU(32)\ang{c_3}} & {*} \\
	& {B\SU(32)} & {K(\Z, 6).}
	\arrow["{c_3}", from=3-2, to=3-3]
	\arrow[from=2-2, to=3-2]
	\arrow[from=2-3, to=3-3]
	\arrow[from=1-2, to=2-3]
	\arrow[from=1-1, to=2-2]
	\arrow["{=}", from=1-1, to=1-2]
	\arrow[from=2-2, to=2-3]
	\arrow["\lrcorner"{anchor=center, pos=0.125}, draw=none, from=2-2, to=3-3]
\end{tikzcd}\end{equation}
We will compute $H^*(B\SU(32)\ang{c_3}; A)$ for various $A$ using the Serre spectral sequence, along with some
information gained from the map of Serre spectral sequences induced by~\eqref{2serre}.

As for the Sugimoto string, we work one prime at a time.
\begin{lem}
\label{lp_sagnotti}
For $p\ge 5$, there is no $p$-torsion in $H^*(B\SU(32)\ang{c_3}; \Z)$ in degrees $12$ and below, and all free summands are
concentrated in even degrees.
\end{lem}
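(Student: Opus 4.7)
The plan is to run the Serre spectral sequence of the pullback fibration
$$K(\Z, 5) \longrightarrow B\SU(32)\ang{c_3} \longrightarrow B\SU(32)$$
coming from~\eqref{2serre}, one prime $p \geq 5$ at a time. For each such $p$, the goal is to show that after $p$-localization the $E_\infty$-page in total degrees $\leq 12$ consists of free $\Z_{(p)}$-modules concentrated in even degrees; this implies that $H^*(B\SU(32)\ang{c_3}; \Z)$ is $p$-torsion-free in the relevant range and that its free part sits in even degrees.

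The base contributes no difficulty: by Borel, $H^*(B\SU(32); \Z_{(p)}) = \Z_{(p)}[c_2, c_3, \ldots, c_{32}]$ is already free and concentrated in even degrees. The analysis of the fiber $K(\Z, 5)$ is the substantive ingredient. Since $\iota_5$ has odd degree, graded commutativity forces $2\iota_5^2 = 0$, so $\iota_5^2$ vanishes after localizing at any odd prime. Moreover, at an odd prime $p$ the first $p$-primary Steenrod power $\cP^1$ raises degree by $2(p-1)$, so the first nontrivial Steenrod operation on $\iota_5$ lands in degree $5 + 2(p-1) \geq 13$ when $p \geq 5$, and the Bockstein $\beta\iota_5$ vanishes because $\iota_5$ admits an integral lift. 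Hence $H^q(K(\Z, 5); \Z_{(p)}) = 0$ for $0 < q \leq 12$ except for $q = 5$, where it equals $\Z_{(p)}$.

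With these inputs the $E_2$-page in total degree $\leq 12$ has nonzero entries only in the rows $q = 0$ and $q = 5$. The transgression $d_6 \iota_5 \in H^6(B\SU(32); \Z_{(p)})$ is the pullback of the universal transgression $\iota_5 \mapsto \iota_6$ along the classifying map $c_3 \colon B\SU(32) \to K(\Z,6)$, so $d_6(\iota_5) = c_3$. Multiplicativity of the Serre spectral sequence extends this to $d_6(x \cdot \iota_5) = x \cdot c_3$ for every $x$ in the base. Because $\Z_{(p)}[c_2, \ldots, c_{32}]$ is an integral domain, multiplication by $c_3$ is injective, so the entire $q = 5$ row in total degree $\leq 12$ is killed and the $q = 0$ row is quotiented by the ideal $(c_3)$. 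No higher differentials can act in this range because both source and target rows have already been exhausted, so $E_7 = E_\infty$ up to total degree $12$.

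What remains is $\bigl(\Z_{(p)}[c_2, c_3, \ldots, c_{32}]/(c_3)\bigr)^{\leq 12}$, a free $\Z_{(p)}$-module concentrated in even degrees, which establishes both halves of the lemma with no extension problems. The only real obstacle is being certain of the low-degree cohomology of $K(\Z, 5)$ at $p \geq 5$; the hypothesis $p \geq 5$ is exactly what pushes all higher Steenrod operations out of the range of interest, so the argument relies on nothing more than the transgression computation and the polynomial structure of the base.
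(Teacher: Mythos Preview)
Your proof is correct and follows essentially the same approach as the paper: both run the Serre spectral sequence for the fibration $K(\Z,5)\to B\SU(32)\ang{c_3}\to B\SU(32)$, identify the transgression $d_6(\iota_5)=c_3$ via the comparison in~\eqref{2serre}, and use the Leibniz rule plus injectivity of multiplication by $c_3$ to collapse the sequence. The only cosmetic difference is that the paper works with $\Z[1/6]$-coefficients all at once and cites Cartan--Serre for the fiber cohomology, whereas you localize at each $p\ge 5$ separately and give a direct Steenrod-operation argument for why $H^*(K(\Z,5);\Z_{(p)})$ is concentrated in degrees $0$ and $5$ in the relevant range.
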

\begin{proof}
It suffices to work with cohomology valued in the ring $\Z[1/6]$ of rational numbers whose denominators in lowest terms are of the form $2^m3^n$, as tensoring with $\Z[1/6]$ preserves all $p$-power torsion for $p\ge 5$.

Cartan~\cite{Car54} and Serre~\cite{Ser52} computed $H^*(K(\Z, n);\Z[1/6])$; their formulas imply that when $p \ge 5$, $H^k(K(\Z, 5);\Z[1/6])$ is torsion-free for $k\le 12$, and vanishes apart from $H^5(K(\Z, 5);\Z[1/6])\cong\Z[1/6]$.

Now consider the Serre spectral sequence for the fibration on the left in~\eqref{2serre} using cohomology with $\Z[1/6]$ coefficients. The map of fibrations~\eqref{2serre} induces a map of Serre spectral sequences, and this map is an isomorphism on
$E_2^{0,\bullet}$. Since this map commutes with differentials, this means the fate of all classes in
$E_2^{0,\bullet}$ is determined by their preimages in the spectral sequence for $K(\Z, 5)\to * \to K(\Z, 6)$. For
example, we know thanks to Serre~\cite[\S 10]{Ser53} that in that spectral sequence, $E$ transgresses to the mod $2$
reduction of the tautological class $F$ of $K(\Z, 6)$. Therefore in the spectral sequence for $B\SU(32)\ang{c_2}$,
$E$ transgresses to the pullback of $F$, which is $c_3$. The Leibniz rule then tells us $d_6(xE) = xc_3$ for $x\in H^*(B\SU(32);\Z[1/6])$; since this cohomology ring is polynomial, $xc_3\ne 0$ as long as $x\ne 0$, so these differentials never vanish. Therefore the nonzero part of the $E_\infty$-page, at least in total degree $12$ and below, is a quotient of $E_2^{*, 0} = H^*(B\SU(32);\Z[1/6])$. Since $H^*(B\SU(32);\Z[1/6])$ is free and concentrated in even degrees. This implies the $E_\infty$-page is also free and concentrated in even degrees in total degree $12$ and below, which implies the lemma statement.
\end{proof}
\begin{cor}
For $p\ge 5$, $\Omega_k^{\String\text{-}\SU(32)\ang{c_3}}$ lacks $p$-torsion for $k\le 11$.
\end{cor}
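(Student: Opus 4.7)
The plan is to mimic the corresponding step in the proof of \cref{sugimoto_thm}, using \cref{lp_sagnotti} as the cohomological input. By the same Pontrjagin-Thom argument as in \cref{ss:sugi} (cf.~\cite[\S 10.4]{Debray:2023yrs}), there is an isomorphism
\begin{equation*}
  \Omega_*^{\String\text{-}\SU(32)\ang{c_3}} \cong \Omega_*^{\String}\bigl((B\SU(32)\ang{c_3})^{V_t - 64}\bigr),
\end{equation*}
and by Hill's theorem~\cite[Theorem 2.1]{Hil09}, in degrees $\le 15$ the map to $\tmf$-homology is an isomorphism. So it suffices to prove the $p$-torsion statement for $\tmf_*\bigl((B\SU(32)\ang{c_3})^{V_t-64}\bigr)$.

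Next, I would run the $p$-local Atiyah-Hirzebruch spectral sequence for $\tmf_{(p)}$-homology of this Thom spectrum, at a fixed prime $p\ge 5$. For such $p$, the homotopy groups of $\tmf_{(p)}$ are free and concentrated in even degrees~\cite[\S 13.1]{douglas2014topological}. Meanwhile, \cref{lp_sagnotti} asserts that $H^*(B\SU(32)\ang{c_3};\Z)$ has no $p$-torsion in degrees $\le 12$ and that all free summands in this range lie in even degrees; equivalently, $H^*(B\SU(32)\ang{c_3};\Z_{(p)})$ is free and concentrated in even degrees through degree $12$. Because $V_t$ carries an $\SU$-structure it is in particular complex-oriented and thus integrally oriented, so the Thom isomorphism transports this property to $H^*\bigl((B\SU(32)\ang{c_3})^{V_t-64};\Z_{(p)}\bigr)$ in the relevant range.

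Consequently, the $E_2$-page of the AHSS, which is $H_*\bigl((B\SU(32)\ang{c_3})^{V_t-64};\tmf_*\bigr)_{(p)}$, has both tensor factors free and even in the bidegrees contributing to total degree $\le 11$. All classes in this range therefore live in even total degree, so every differential starting from or landing in the relevant region is forced to vanish for degree reasons; the spectral sequence collapses there with only free summands on the $E_\infty$-page. This precludes $p$-torsion in $\tmf_k\bigl((B\SU(32)\ang{c_3})^{V_t-64}\bigr)$, and hence in $\Omega_k^{\String\text{-}\SU(32)\ang{c_3}}$, for $k\le 11$.

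There is really no substantial obstacle here: the only thing to check carefully is that the Thom isomorphism applies with $\Z_{(p)}$-coefficients in the required range, which is automatic from the $\SU$-structure on $V_t$. The work has all been done in \cref{lp_sagnotti}; this corollary is its formal consequence once one invokes the Hill-theorem reduction to $\tmf$ and the even-free structure of $\tmf_{(p)}$ for $p\ge 5$.
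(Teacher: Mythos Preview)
Your proof is correct and follows essentially the same approach as the paper: reduce to $\tmf$-homology of the Thom spectrum via Hill's theorem, then use \cref{lp_sagnotti} together with the even-free structure of $\tmf_{(p)}$ for $p\ge 5$ to collapse the Atiyah-Hirzebruch spectral sequence in the relevant range. You have simply spelled out a bit more explicitly why the Thom isomorphism applies (via the $\SU$-structure on $V_t$), which the paper leaves implicit.
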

\begin{proof}
We want to compute $\Omega_*^\String((B\SU(32)\ang{c_3})^{V_t -64})$, and as noted above, we may replace
$\Omega_*^\String$ with $\tmf$ for the degrees $k$ in the corollary statement. Because of \cref{lp_sagnotti} and
the fact that $\tmf_{(p)}$ has homotopy groups concentrated in even degrees and lacks $p$-torsion, the
Atiyah-Hirzebruch spectral sequence computing $\tmf_*((B\SU(32)\ang{c_3})^{V_t - 64})_p^\wedge$ collapses with no
$p$-torsion in the range $11$ and below.
\end{proof}
As usual, $p = 2$ and $p = 3$ are harder.
\begin{lem}\hfill
\begin{enumerate}
	\setcounter{enumi}{1}
	\item  $H^*(B\SU(32)\ang{c_3}; \Z_2)\cong\Z_2[c_2, G, c_4, H, J, K, c_6, L,\dotsc]/(\dots)$, with $\abs{c_i} =
	2i$, $\abs{G} = 7$, $\abs{H} = 8$, $\abs{J} =10$, $\abs{K} = 11$, and $\abs{L} =12$; all missing generators
	and relations are in degrees $13$ and above. In addition, we have the following Steenrod squares:
	\begin{itemize}
		\item $\Sq^1$ vanishes on the named generators except $\Sq^1(G) = H + \lambda_1 c_4$ and $\Sq^1(K) = L +
		\lambda_2 c_2c_4 + \lambda_3c_6$ for some $\lambda_1,\lambda_2,\lambda_3 \in\Z_2$.
		\item $\Sq^2$ vanishes on the named generators except for $\Sq^2(H) = J$ and possibly on $c_6$, $K$, and
		$L$.
		\item $\Sq^4(c_2) = c_2^2$, $\Sq^4(c_4) = c_2c_4 + c_6$, $\Sq^4(G) = K$, and $\Sq^4(H) = L$.
	\end{itemize}
	\item $H^*(B\SU(32)\ang{c_3};\Z_3)\cong\Z_3[c_2, c_4, J, c_6, \dotsc]/(\dots)$ with $\abs{c_i} = 2i$ and
	$\abs{J} =10$, and with all missing generators and relations in degrees $13$ and above; $c_i$ denotes the
	pullback of the mod $3$ reduction of the $i^{th}$ Chern class along $B\SU(32)\ang{c_3}\to B\SU(32)$, and
	$\cP^1(c_2) = c_2^2 + c_4$ and $\cP^1(c_4) = c_2c_4$.
\end{enumerate}
\end{lem}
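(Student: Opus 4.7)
The plan is to apply the Serre spectral sequence of the fibration $K(\Z,5) \to B\SU(32)\ang{c_3} \to B\SU(32)$ of~\eqref{2serre}, once with $\Z_2$ coefficients and once with $\Z_3$. The base cohomology is the polynomial ring $\Z_p[c_2,c_3,\dots,c_{32}]$, and for the fiber I would quote Serre and Cartan's classical computations: in the relevant range $H^*(K(\Z,5);\Z_2)$ is polynomial on admissible Steenrod words $\Sq^I\iota_5$ of excess less than $5$ and last entry not equal to $1$, while $H^*(K(\Z,5);\Z_3)$ is generated in degrees $\le 12$ by $\iota_5$, $\cP^1\iota_5$, and $\beta\cP^1\iota_5$. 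The engine controlling the differentials is Kudo's transgression theorem: since $\iota_5$ transgresses to $c_3$ (forced by the defining pullback square), $\Sq^I\iota_5$ transgresses to $\Sq^I c_3$, and similarly $\cP^I\iota_5$ to $\cP^I c_3$ at prime $3$.

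The second step is to compute these transgressions inside $H^*(B\SU(32);\Z_p)$: at $p=2$ by Wu's formula for $\Sq^i c_j$ (using $c_1=0$ on $\SU$), and at $p=3$ by Shay's formula for $\cP^1$. A direct computation gives for instance $\Sq^2 c_3 = 0$, $\Sq^3 c_3 = 0$, and $\Sq^4 c_3 = c_2 c_3 + c_5$ on $\SU$; this last equality simultaneously explains why $c_5$ does not appear among the surviving generators (it is killed by $d_{10}$ from $\Sq^4\iota_5$) and why many other Steenrod words on $\iota_5$ survive to $E_\infty$. The same mechanism at prime $3$ forces $\cP^1\iota_5$ to kill both $c_3$ and $c_5$, while $\beta\cP^1\iota_5$ survives because $\cP^1 c_3$ is an integral class and $\beta$ annihilates it. The resulting $E_\infty$-page then identifies $G$ with the image of $\Sq^2\iota_5$ (degree $7$), $H$ with $\Sq^3\iota_5$ (degree $8$), $J$ with $\iota_5^2=\Sq^5\iota_5$ (degree $10$), $K$ with $\Sq^4\Sq^2\iota_5$ (degree $11$), and $L$ with $\Sq^5\Sq^2\iota_5$ (degree $12$) at $p=2$, and $J$ with $\beta\cP^1\iota_5$ at $p=3$.

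With the generators fixed, the Steenrod actions follow from Adem relations applied to the fiber expressions together with naturality under the fiber inclusion $K(\Z,5) \hookrightarrow B\SU(32)\ang{c_3}$. For example $\Sq^1 G=\Sq^1\Sq^2\iota_5=\Sq^3\iota_5$ is the image of the class defining $H$, so $\Sq^1 G = H$ up to a possible base correction by $c_4$; $\Sq^2 H=\Sq^2\Sq^3\iota_5 = \Sq^5\iota_5 + \Sq^4\Sq^1\iota_5 = \iota_5^2 = J$; $\Sq^4 G = \Sq^4\Sq^2\iota_5 = K$ by admissibility; and $\Sq^4 H = \Sq^4\Sq^3\iota_5 = \Sq^5\Sq^2\iota_5 + \Sq^6\Sq^1\iota_5 = L$. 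The Steenrod actions on the Chern classes themselves, like $\Sq^4 c_2 = c_2^2$ and $\Sq^4 c_4 = c_2 c_4 + c_6$, come directly from Wu's formula applied on the base.

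The main obstacle will be resolving the \emph{hidden extensions} between fiber and base summands on the $E_\infty$-page, i.e.\ determining whether $\Sq^1 G$ equals $H$ exactly or $H+c_4$, and similarly for $\Sq^1 K$; the coefficients $\lambda_i\in\Z_2$ in the statement record precisely this ambiguity, which does not affect any downstream Adams computation. A subsidiary task is to verify that no unaccounted-for generators or relations appear in degrees $\le 12$, which I would do by counting ranks of $E_\infty$-subquotients against the claimed presentation and by checking closure of the Steenrod action against the Adem relations.
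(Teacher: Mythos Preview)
Your proposal is correct and follows essentially the same route as the paper: the Serre spectral sequence for $K(\Z,5)\to B\SU(32)\ang{c_3}\to B\SU(32)$, Kudo transgression to compute the differentials, Wu/Shay formulas on the base, and Adem relations on the fiber classes to read off the Steenrod action. Your identifications of $G,H,J,K,L$ with specific $\Sq^I\iota_5$ and your Adem-relation computations of $\Sq^1G$, $\Sq^2H$, $\Sq^4G$, $\Sq^4H$ all match the paper's.

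The one ingredient you have not supplied is a mechanism to \emph{partially} resolve the extension problem: as stated, $\Sq^1(G)=H+\lambda_1 c_4$ contains no $c_2^2$ term, and $\Sq^1(K)$ contains no $c_2^3$ term, even though these lie in the correct degrees. The paper pins this down by pulling back along $B\SU(2)\to B\SU(32)$, where $c_3$ is canonically trivial so the fibration splits and the Künneth formula computes the Steenrod action exactly; this kills all ambiguity involving only $c_2$, leaving precisely the $\lambda_i$ attached to $c_4$ and $c_6$. Without some such comparison your argument would produce additional undetermined constants in front of $c_2^2$ and $c_2^3$, so you should add this step (or an equivalent one, such as redefining $H$ and $L$ to absorb the $c_2$-only corrections) to match the statement as written.
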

\begin{proof}
This is a standard argument with the Serre spectral sequence for the fibration on the left in~\eqref{2serre}, so we
sketch the details.

For the mod $2$ cohomology, we need as input $H^*(B\SU(32);\Z_2)\cong\Z_2[c_2, c_3, \dotsc, c_{32}]$ with
$\abs{c_i} = 2i$~\cite[\S 29]{Bor53}: these are the mod $2$ reductions of the Chern classes. We also need
$H^*(K(\Z,5);\Z_2)$, which was computed by Serre~\cite[\S 10]{Ser53}. This is a polynomial ring on infinitely many
generators; the six in degrees below $13$ are $E\in H^5$, the mod $2$ reduction of the tautological class;
$G:=\Sq^2(E)$; $H:=\Sq^1(H)$; $I:=\Sq^4(E)$; $K:=\Sq^4(G)$; and $L:=\Sq^5(G)$.

In the Serre spectral sequence, the class $E$ transgresses to $c_3$, and the proof is the same as in the proof of \cref{lp_sagnotti}.
Similarly, we divine the fate of the other classes on the line $p = 0$:
\begin{itemize}
	\item In the Serre spectral sequence for the rightmost fibration in~\eqref{2serre}, $G$ transgresses via $d_8$ to
	$\Sq^2(F)$ by the Kudo transgression theorem~\cite{Kud56}, so in the leftmost fibration, $d_r(G) = 0$ for
	$r\le 7$, and $d_8(G) = \Sq^2(c_3) = 0$. Thus $G$ is a permanent cycle.
	\item In a similar way, $H$ transgressing to $\Sq^3(F)$ via $d_9$ pulls back to imply $d_9(H) = \Sq^3(c_3) =
	0$, so $H$ is also a permanent cycle. Likewise, $E^2$ is a permanent cycle, because in the fibration over
	$K(\Z, 6)$, it supports the transgressing $d_{11}(E^2) = \Sq^5(F)$, and $\Sq^5(c_3) = 0$, and similarly $K$ and
	$L$ are permanent cycles.
	\item $I = \Sq^4(E)$ transgresses to $\Sq^4(c_3) = c_5 + c_2c_3$.
\end{itemize}
The Leibniz rule then cleans up the rest of the spectral sequence in degrees $12$ and below. This gives us the ring
structure. For the Steenrod squares, we use the information of the $\cA$-action on $H^*(B\SU(32);\Z_2)$ coming from
the Wu formula, together with the $\cA$-actions we gave when describing $H^*(K(\Z, 5);\Z_2)$ above. There is
ambiguity in the Steenrod squares in $H^*(B\SU(32)\ang{c_3};\Z_2)$ coming from the loss of information passing
to the associated graded on the $E_\infty$-page, which is the source of $\lambda_1$, $\lambda_2$, and $\lambda_3$ in
the theorem statement. However, by pulling back to the analogous fibration over $B\SU(2)$, where the fiber bundle
admits a section (as $c_3$ of an $\SU(2)$-bundle is canonically trivial), so we can use the Künneth formula to
compute Steenrod squares. Pulling back to $\SU(2)$ loses all information about $c_i$ for $i > 2$, so this leaves
ambiguity in $c_4$ and $c_6$ as described in the theorem statement, but resolves the ambiguity involving $c_2$.
Some ambiguity can be erased by redefining generators, which is how we disambiguate $\Sq^4(H) = L$, but this still
leaves the choices listed in the theorem statement.

For $\Z_3$ cohomology, we begin with $H^*(K(\Z, 5);\Z_3)\cong\Z_3[E, \cP^1(E), \beta\cP^1(E), \cP^2(E), \dots]$,
with the remaining generators in degrees $15$ and above, where $E\in H^5(K(\Z, 5);\Z_3)$ is the mod $3$ reduction
of the tautological class. Just like for $\Z_2$ cohomology, $E$ transgresses via $d_5$ to $c_3$; then the Kudo
transgression theorem~\cite{Kud56} tells us
\begin{itemize}
	\item $\cP^1(E)$ transgresses via $d_{10}$ to $\cP^1(c_3) = c_2c_3 - c_5 = -c_5$ by the $E_{10}$-page (as
	$d_5(c_2E) = c_2c_3$), and
	\item $\beta\cP^1(E)$ is a permanent cycle (it transgresses to $\beta(-c_5) = 0$, which we know for degree
	reasons).
\end{itemize}
We obtain $\cP^1(c_i)$ from Sugawara's calculations~\cite[\S 5]{Sug79} of Shay's formula~\cite{Sha77}. With the fate of these classes known, the Leibniz rule cleans up the rest of the spectral sequence in total degrees $12$ and below to obtain the theorem statement.
\end{proof}
Now, just as in the proof of \cref{sugimoto_thm}, we run the Adams spectral sequences at $p = 3$ and $p = 2$. The twist
by $V_t$ twists the action of $\cP^1$ at $p = 3$, and the action of $\Sq^4$ at $p = 2$, in an analogous way.
Reusing names of $\cA^{\tmf}$-modules from \S\ref{ss:sugi}, we conclude that there is an $\cA^{\tmf}$-module
isomorphism
\begin{equation}
\label{sag_3_Atmf}
	H^*((B\SU(32)\ang{c_3})^{V_t - 64}; \Z_3) \cong
		\textcolor{BrickRed}{N_3} \oplus
		\textcolor{Green}{\Sigma^8 N_3} \oplus
		\textcolor{MidnightBlue}{\Sigma^{10} N_3} \oplus P,
\end{equation}
where $P$ is concentrated in degrees $12$ and above, so will not affect us. We draw~\eqref{sag_3_Atmf} in
\cref{sagnotti_figure_3}, left. We calculated
$\Ext_{\cA^{\tmf}}(\textcolor{BrickRed}{N_3})$ in \cref{N2_fig}; 
using this, we discover that, like for the Sugimoto string, in degrees $11$ and below, the
$E_2$-page consists only of $h_0$-towers in degrees $0$, $4$, and $8$, so there can be no $3$-torsion. See
\cref{sagnotti_figure_3}, right, for a picture of this Adams spectral sequence.
\begin{figure}[!htbp]
\centering
\begin{subfigure}[c]{0.3\textwidth}
  \begin{tikzpicture}[scale=0.6, every node/.style = {font=\tiny}]
    \foreach \y in {0, 4, ..., 16} {
      \node at (-2, \y/2) {$\y$};
    }
    \begin{scope}[BrickRed]
      \tikzpt{0}{0}{$U$}{};
      \tikzpt{0}{2}{}{};
	  \tikzpt{0}{4}{}{};
      \PoneL(0, 0);
      \PoneL(0, 2);
    \end{scope}
    \begin{scope}[Green]
      \tikzpt{2}{4}{$Uc_4$}{};
      \tikzpt{2}{6}{}{};
      \tikzpt{2}{8}{}{};
      \PoneL(2, 4);
      \PoneL(2, 6);
    \end{scope}
	\tikzpt{0}{5}{$U J$}{MidnightBlue};
        \tikzpt{0}{7}{}{};
        \tikzpt{0}{9}{}{};
        \PoneL(0, 5);
        \PoneL(0, 7);
  \end{tikzpicture}
\end{subfigure}
\begin{subfigure}[!htbp]{0.6\textwidth}
\includegraphics[width=\textwidth]{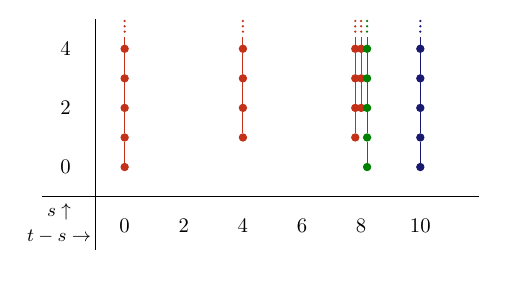}
\end{subfigure}
\caption{Left: the $\cA^{\tmf}$-module structure on $H^*((B\SU(32)\ang{c_3})^{V_t - 64};\Z_3)$ in low degrees; the pictured submodule contains all elements in degrees $11$ and below. Right: the $E_2$-page of the Adams spectral sequence computing
$\tmf_*((B\SU(32)\ang{c_3})^{V_t-64})_3^\wedge$. This figure is part of the proof of \cref{sag_thm}.}
\label{sagnotti_figure_3}
\end{figure}

Last, $p = 2$. The ambiguity in the Steenrod actions is not severe enough to get in the way of the existence of an
isomorphism of $\cA(2)$-modules
\begin{equation}
\label{sagnotti_A2}
	H^*((B\SU(32)\ang{c_3})^{V_t - 64}; \Z_2) \cong
		\textcolor{BrickRed}{M_4} \oplus
		\textcolor{RedOrange}{\Sigma^7 N_1} \oplus
		\textcolor{Green}{\Sigma^8 M_4} \oplus
		\textcolor{MidnightBlue}{\Sigma^8 M_4} \oplus
		\textcolor{Fuchsia}{\Sigma^{11} N_2} \oplus P
\end{equation}
where $P$ is concentrated in degrees $12$ and above, so will be irrelevant for us, and:
\begin{itemize}
	\item $\textcolor{RedOrange}{N_1}$ is isomorphic to $\cA(2)\otimes_{\cA(1)} Q$ in degrees $6$ and below (i.e.\
	the quotients of these two $\cA(2)$-modules by their submodules of elements in degrees $7$ and above are
	isomorphic); and
	\item $\textcolor{Fuchsia}{N_2}$ is isomorphic to $\cA(2)\otimes_{\cA(1)} Q$ in degrees $3$ and below.
\end{itemize}
Here $Q$ is the ``question mark,'' the $\cA(1)$-module which has a $\Z_2$-vector space basis $\{x_0, x_1,
x_3\}$ with $\abs{x_i} = i$, and with $\Sq^1(x_0) = x_1$, $\Sq^2(x_1) = x_3$ (all other $\cA(1)$-actions are trivial
for degree reasons). The module $\textcolor{Fuchsia}{\Sigma^{11}N_2}$ is generated by $Uc_2G$. We draw the
decomposition~\eqref{sagnotti_A2} in \cref{sagnotti_figure_2}, left.

Bruner-Rognes~\cite[\S 4.44]{BR21} calculate $\Ext_{\cA(2)}(M_4)$. For $\textcolor{RedOrange}{N_1}$ and
$\textcolor{Fuchsia}{N_2}$, we use that an isomorphism of $\cA(2)$-modules in degrees $k$ and below implies the
existence of an isomorphism of Ext groups in topological degrees $k-1$ and below, so it is good enough to know
$\Ext_{\cA(2)}(\cA(2)\otimes_{\cA(1)}Q)$; then the change-of-rings theorem (see, e.g., \cite[\S 4.5]{BC18}) implies
\begin{equation}
	\Ext_{\cA(2)}(\cA(2)\otimes_{\cA(1)}Q)\cong \Ext_{\cA(1)}(Q),
\end{equation}
and Adams-Priddy~\cite[Table 3.11]{AP76} compute $\Ext_{\cA(1)}(Q)$. Putting all this together, we can draw the
$E_2$-page in \cref{sagnotti_figure_2}, right. In the range relevant to us, the $E_2$-page is generated as an $\Ext_{\cA(2)}(\Z_2)$-module by the following ten summands.
\begin{enumerate}
    \item Coming from $\Ext(\textcolor{BrickRed}{M_4})$: $a_1\in\Ext^{0,0}$, $a_2\in\Ext^{3,7}$, $a_3\in\Ext^{1,6}$, $a_4\in\Ext^{2,9}$, and $a_5\in\Ext^{2,12}$.
    \item Coming from $\Ext(\textcolor{RedOrange}{\Sigma^7 N_1})$: $b_1\in\Ext^{7,0}$ and $b_2\in\Ext^{2,12}$.
    \item Coming from $\Ext(\textcolor{Green}{\Sigma^8 M_4})$: $c\in\Ext^{0,8}$.
    \item Coming from $\Ext(\textcolor{MidnightBlue}{\Sigma^8 M_4})$: $d\in\Ext^{0,8}$.
    \item Coming from $\Ext(\textcolor{Fuchsia}{\Sigma^{11}N_2})$: $e\in\Ext^{0,11}$.
\end{enumerate}
These are subject to various relations: notably, if $x$ is any one of these generators, $h_2x = 0$.

\begin{figure}[!htbp]
\centering
\begin{subfigure}[c]{0.38\textwidth}
  \begin{tikzpicture}[scale=0.6, every node/.style = {font=\tiny}]
    \foreach \y in {0, 2, ..., 12} {
      \node at (-2, \y) {$\y$};
    }
    \begin{scope}[BrickRed]
      \tikzptR{0}{0}{$U$}{};
      \tikzpt{0}{4}{}{};
      \sqfourL(0, 0);
    \end{scope}
	\begin{scope}[RedOrange]
		\tikzptR{0}{7}{$UG$}{};
		\foreach \y in {8, 10, 11, 12} {
			\tikzpt{0}{\y}{}{};
		}
		\sqone(0, 7);
		\sqone(0, 11);
		\sqtwoL(0, 8);
		\sqfourL(0, 7);
		\sqfourR(0, 8);
	\end{scope}
	\begin{scope}[Green]
		\tikzptR{2}{8}{$Uc_2^2$}{};
		\tikzpt{2}{12}{}{};
		\sqfourL(2, 8);
	\end{scope}
	\begin{scope}[MidnightBlue]
		\tikzptR{4}{8}{$Uc_4$}{};
		\tikzpt{4}{12}{}{};
		\sqfourL(4, 8);
	\end{scope}
	\begin{scope}[Fuchsia]
		\tikzpt{5}{11}{$Uc_2G$}{};
		\tikzpt{5}{12}{}{};
		\sqone(5, 11);
	\end{scope}
  \end{tikzpicture}
\end{subfigure}
\begin{subfigure}[!htbp]{0.6\textwidth}
\includegraphics{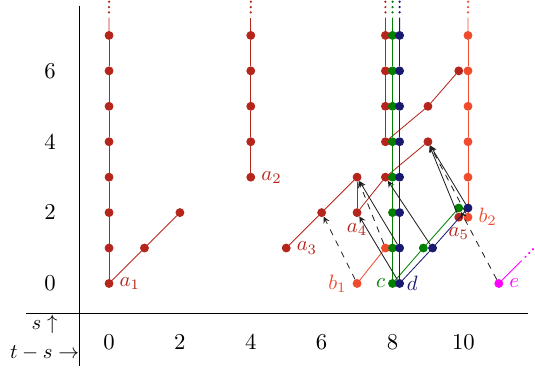}
\end{subfigure}
\caption{Left: the $\cA(2)$-module structure on the quotient of $H^*((B\SU(32)\ang{c_3})^{V_t - 64};\Z_2)$ by its submodule of elements in degrees greater than $12$; the pictured submodule contains all elements in degrees $11$ and below. Right: the $E_2$-page of the Adams spectral sequence computing
$\tmf_*((B\SU(32)\ang{c_3})^{V_t-64})_2^\wedge$. This figure is part of the proof of \cref{sag_thm}. Solid arrows indicate $d_2$ differentials that we deduce in \cref{sugimoto_to_sagnotti}; dashed arrows indicate differentials that we do not determine. No matter the values of the dashed differentials, the spectral sequence collapses at $E_3$ within this range, which follows from \cref{nod6}.}
\label{sagnotti_figure_2}
\end{figure}

Once we take into account the fact that differentials commute with $h_0$ and $h_1$, we still need to determine $d_2(b_1)$, $d_r(c)$, $d_r(d)$, $d_2(a_5)$, $d_2(b_2)$, and $d_r(e)$.
\begin{lem}
\label{sugimoto_to_sagnotti}
For all $r$, $d_r(c) = 0$ and $d_r(d) = a_4$; $d_2(a_5) = h_1^2a_4$.
\end{lem}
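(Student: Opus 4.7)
The plan is to leverage naturality of the Adams spectral sequence under a map $f\colon B\Sp(16)\to B\SU(32)\ang{c_3}$ that lifts the natural inclusion $\iota\colon\Sp(16)\hookrightarrow\SU(32)$. Such a lift exists because the defining $32$-dimensional complex representation of $\Sp(16)$ is pseudoreal, so by \cref{borel_Sp} all odd Chern classes (in particular $c_3$) vanish on $B\Sp(16)$; choosing a nullhomotopy of $c_3\circ B\iota$ produces $f$. Since $f^*V_t\cong V$, we obtain an induced map of Thom spectra
\[
\tilde f\colon (B\Sp(16))^{V-64}\longrightarrow (B\SU(32)\ang{c_3})^{V_t-64}.
\]

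Next I would check that $\tilde f^*$ on mod-$2$ cohomology preserves the Thom class and the mod-$2$ reductions of $c_2$ and $c_4$, while killing the generators $G, H, J, K, L$ inherited from $H^*(K(\Z,5);\Z_2)$, since $f$ factors through $B\SU(32)$. Consequently $\tilde f^*$ restricts to an isomorphism of $\cA(2)$-modules between the red $M_4$, green $\Sigma^8 M_4$, and blue $\Sigma^8 M_4$ summands appearing in both~\eqref{BSp16_A2} and~\eqref{sagnotti_A2}, while sending the orange $\Sigma^7 N_1$ and fuchsia $\Sigma^{11}N_2$ summands of Sagnotti to zero. After applying $\Ext_{\cA(2)}(-,\Z_2)$, the induced map of Adams $E_2$-pages identifies the Sagnotti classes $c$, $d$, $a_4$, $a_5$ with the images of their Sugimoto counterparts $c_{\mathrm{Sug}}$, $d_{\mathrm{Sug}}$, $a_{4,\mathrm{Sug}}$, $a_{5,\mathrm{Sug}}$ analyzed during the proof of \cref{sugimoto_thm}.

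The three required statements then follow by naturality of differentials together with the Sugimoto computations: $c_{\mathrm{Sug}}$ is a permanent cycle, detected by $c_2(Q)^2=1$ on the canonical twisted string-$\Sp(16)$ structure on $(\HP^2,Q)$; $d_2(d_{\mathrm{Sug}})=a_{4,\mathrm{Sug}}$, because $\int c_4\in 12\Z$ on any twisted string-$\Sp(16)$ $8$-manifold forces the blue generator of $E_2^{0,8}$ to die and $a_{4,\mathrm{Sug}}$ is the only available target compatible with the $h_0$-module structure; and $d_2(a_{5,\mathrm{Sug}})=h_1^2 a_{4,\mathrm{Sug}}$ (the ``red dot at $(t-s,s)=(10,2)$''), obtained by pulling back along $\HP^1\to B\Sp(1)\to B\Sp(16)$ and invoking Bruner--Rognes~\cite[Theorem 8.1]{BR21}. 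Naturality transports each identity verbatim to the Sagnotti $E_2$-page, yielding $d_r(c)=0$ for all $r$, $d_2(d)=a_4$, and $d_2(a_5)=h_1^2 a_4$.

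The main obstacle is pinning down the precise \emph{targets} of the three Sugimoto differentials above, rather than merely their nonvanishing. This reduces to low-degree bookkeeping in $\Ext_{\cA(2)}(M_4)$ (using Bruner--Rognes) to verify uniqueness of each target within the $h_0$- and $h_1$-module structure, together with the $\HP^1$-pullback argument that identifies the destination of $d_2(a_{5,\mathrm{Sug}})$ via the Bruner--Rognes calculation for $\cA(2)\otimes_{\cA(1)}$ on the two-cell complex; both ingredients are already assembled in the course of proving \cref{sugimoto_thm}.
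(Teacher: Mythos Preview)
Your proposal is correct and follows essentially the same approach as the paper. The paper's proof is a single sentence stating that ``the proofs are the same as we gave for them in \S\ref{ss:sugi},'' and your naturality argument via the lift $B\Sp(16)\to B\SU(32)\ang{c_3}$ of the inclusion $\Sp(16)\hookrightarrow\SU(32)$ is a precise way to make that transfer explicit: the three Sugimoto inputs (the $\HP^2$ example detecting $c_2^2$, the divisibility $\int c_4\in 12\Z$ forcing the blue filtration-$0$ class to die, and the $\HP^1$ pullback together with Bruner--Rognes for $d_2(a_5)$) are exactly what the paper invokes, and you transport them along the induced map of Adams spectral sequences.
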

The values of these differentials are the same as for the corresponding classes in the Adams spectral sequence for the Sugimoto string, and the proofs are the same as we gave for them in \S\ref{ss:sugi}.
\end{proof}
Ultimately
we need to address $d_2\colon E_2^{0,11}\to E_2^{2,13}$. If this differential vanishes, there is also potential for
$d_6\colon E_6^{0,11}\to E_6^{6, 16}$ to be nonzero. The fate of these two differentials determines whether
$\Omega_{11}^{\String\text{-}\SU(32)\ang{c_3}}$ is nonzero, so it is unfortunate that the techniques we applied
were unable to resolve them.

We are able to obtain some partial information, though.
\begin{lem}\label{nod6}
If $d_2(e) = 0$, so that $d_6(e)$ is defined, then $d_6(e) = 0$.
\end{lem}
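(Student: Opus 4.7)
The plan is to narrow down the target $E_6^{6,16}$ of $d_6(e)$ to the point where we can confirm it vanishes, or else rule out each surviving class by a naturality argument. The target sits in bidegree $(t-s,s) = (10,6)$, whereas $e$ sits at $(11,0)$, so $e$ itself has no $h_0$- or $h_1$-multiple in the target bidegree; the candidate classes come from the other summands in~\eqref{sagnotti_A2}.

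First, I would enumerate the generators of $E_2$ at bidegree $(10,6)$ by working through the decomposition~\eqref{sagnotti_A2}: each summand contributes via its Ext, and classes in $(10,6)$ arise as suitable $h_0$- and $h_1$-multiples of the named generators $a_1,\dots,a_5,b_1,b_2,c,d$. The salient candidates are $h_0^4 a_5$ (coming from $a_5\in\Ext^{2,12}$) together with contributions propagated from $\Ext(\textcolor{RedOrange}{\Sigma^7 N_1})$ and $\Ext(\textcolor{BrickRed}{M_4})$; the change-of-rings identification $\Ext_{\cA(2)}(\cA(2)\otimes_{\cA(1)}Q)\cong\Ext_{\cA(1)}(Q)$ makes this bookkeeping tractable.

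Second, I would propagate the $d_2$ differentials established in \cref{sugimoto_to_sagnotti}, namely $d_r(d)=a_4$ and $d_2(a_5)=h_1^2 a_4$, through $(10,6)$ using $h_0$- and $h_1$-linearity, computing $E_3^{6,16}$ and continuing through $E_6^{6,16}$. Many candidate classes should be killed by these propagated differentials, either as $h_0^k a_5$ falling into the image of the $d_2$ pattern, or as multiples of $d$-classes that die by $d_r(d)=a_4$.

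For any class that survives to $E_6^{6,16}$, the remaining tools are: the constraint that $d_6(e)$ must be $h_2$-annihilated (since $h_2\cdot e = 0$ on $E_2$ and differentials commute with $h_2$); naturality with respect to maps from simpler Thom spectra, along the lines of~\eqref{nonvan_diffs}, using for instance a map $\HP^1 \to B\SU(32)\ang{c_3}$ factoring through $B\Sp(1)\to B\SU(2)$ where $c_3$ is canonically trivial and the resulting Adams spectral sequence is known by Bruner-Rognes~\cite{BR21}; and the observation that $e$ lies in Adams filtration zero, so a nonzero $d_6(e)$ would have to be detected by a specific mod-$2$ bordism invariant in degree $11$ which one can attempt to rule out directly.

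The hard part will almost certainly be the last step: the $d_2$ propagation may leave one or two genuine survivors at $(10,6)$, and for those the naturality argument is delicate because $e = Uc_2 G$ depends essentially on the class $G$ coming from the $K(\Z,5)$ fiber in~\eqref{2serre}, which is lost under the simplest comparison maps. If no single comparison map suffices, the fallback is to exploit the module structure of $\tmf_*((B\SU(32)\ang{c_3})^{V_t-64})$ over $\tmf_*$ and show that $e$ lies in an ideal on which the $d_6$ pattern acts trivially, or equivalently that the image of $e$ under the edge homomorphism to $H_{11}$ obstructs any nonzero $d_6$.
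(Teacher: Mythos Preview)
Your outline does not reach the actual argument, and the proposed tools would not close the gap.

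First, your enumeration of $E_6^{6,16}$ is off. The paper identifies $E_2^{6,16}\cong(\Z_2)^{\oplus 2}$ with basis $w_1 h_1^2 a_1$ and $h_0^4 b_2$, where $w_1\in\Ext_{\cA(2)}^{4,12}(\Z_2)$ is the Bott-type class; the class $h_0^4 a_5$ you single out is not a generator here. One of these two classes, $h_0^4 b_2$, is ruled out by the $h_0$-linearity argument you allude to ($h_0 e=0$). The remaining obstruction is the single class $w_1 h_1^2 a_1$, and none of the comparison maps you propose can see it: under any map factoring through a space where $G$ dies (such as your $\HP^1$ comparison), $e$ goes to zero, and you correctly flag this as the hard point but do not resolve it. Your fallback suggestions (module structure over $\tmf_*$, the edge homomorphism) are too vague to constitute an argument.

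The missing idea is a comparison in the \emph{other} direction: map the $\tmf$-homology Adams spectral sequence to the $\ko$-homology Adams spectral sequence for the same Thom spectrum. On $E_2$-pages this is induced by $\cA(1)\hookrightarrow\cA(2)$, and one checks that both $e$ and $w_1 h_1^2 a_1$ survive this map. Now in $\ko$-homology, the twisting bundle $V_t-32$ is spin, so there is a Thom isomorphism $\ko_*((B\SU(32)\ang{c_3})^{V_t-32})\cong\ko_*(B\SU(32)\ang{c_3})$, which splits off $\ko_*(\pt)$ at the level of spectra. This splits the submodule generated by $a_1$ off the Adams spectral sequence, forcing all differentials from other summands into it to vanish. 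Hence $d_6(e)\ne w_1 h_1^2 a_1$ in the $\ko$ spectral sequence, and by naturality the same holds in the $\tmf$ spectral sequence. This $\tmf\to\ko$ comparison, exploiting the spin Thom isomorphism, is the step your proposal lacks.
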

\begin{proof}
$d_6(e)\in E_6^{6,16}$. No other nonzero differentials have source or target $E_r^{6,16}$, so $E_6^{6,16}\cong E_2^{6,16}\cong (\Z_2)^{\oplus 2}$, spanned by the classes $w_1h_1^2 a_1$ and $h_0^4b_2$. Here $w_1\in\Ext_{\cA(2)}^{4,12}(\Z_2)$ is the class whose image in $\Ext_{\cA(1)}(\Z_2)$ is the Bott periodicity class. Thus there are $\lambda_1,\lambda_2\in\Z_2$ such that
\begin{equation}
    d_6(e) = \lambda_1 w_1h_1^2 a_1 + \lambda_2 h_0^4 b_2.
\end{equation}
Because $d_6$ commutes with $h_0$ and $h_0e = 0$, $\lambda_2 = 0$.

To show $\lambda_1 = 0$, consider the map of Adams spectral sequences induced by the map from the $\tmf$-homology to the $\ko$-homology of $(B\SU(32)\ang{c_3})^{V_t - 32}$. The map on $E_2$-pages is the map
\begin{equation}
\label{A2to1}
    \Ext_{\cA(2)}(H^*((B\SU(32)\ang{c_3})^{V_t - 32};\Z_2), \Z_2) \longrightarrow
    \Ext_{\cA(1)}(H^*((B\SU(32)\ang{c_3})^{V_t - 32};\Z_2), \Z_2)
\end{equation}
induced by the inclusion $\cA(1)\to \cA(2)$ of algebras. It is possible to compute the right-hand Ext groups using the decomposition~\eqref{sagnotti_A2} and the techniques in~\cite{BC18}; one learns that $e$ and $w_1h_1^2 a_1$ both remain nonzero after~\eqref{A2to1}, so it suffices to compute $d_6(e)$ in the $\ko$-homology Adams spectral sequence. There, though, the submodule $M_{a_1}$ of the Ext groups generated by $a_1$ splits off: because $V_t -32$ is spin, there is a Thom isomorphism $\ko_*((B\SU(32)\ang{c_3})^{V_t - 32})\cong\ko_*(B\SU(32)\ang{c_3})$, so $\ko_*(\pt)$ splits off; as this splitting lifts to the level of spectra, it also splits $M_{a_1}$ off of the Adams spectral sequence, so all differentials into $M_{a_1}$ from any other summand vanish. Thus $d_6(e)$ cannot be $h_1^2 w_1a_1$ in the $\ko$-homology Adams spectral sequence, so the same is true in the $\tmf$-homology Adams spectral sequence.
\end{proof}
Likewise, since $E_2^{2,12}\cong (\Z_2)^{\oplus 4}$, spanned by the classes $a_5$, $b_2$, $h_1^2 c$, and $h_1^2d$, then there are $\lambda_1,\dotsc,\lambda_4\in\Z_2$ such that
\begin{equation}
\label{d2_sagnotti}
    d_2(e) = \lambda_1 a_5 + \lambda_2 b_2 + \lambda_3 h_1^2 c + \lambda_4 h_1^2 d.
\end{equation}
\begin{lem}
In~\eqref{d2_sagnotti}, $\lambda_1 = 0$, $\lambda_2 = 0$, and $\lambda_4 = 0$.
\end{lem}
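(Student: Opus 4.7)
The plan is to pin down each coefficient by combining $d_2 \circ d_2 = 0$ with naturality under a pair of maps of Thom spectra, in the style of the preceding \cref{nod6}. The key auxiliary map is
\begin{equation*}
F\colon (B\Sp(16))^{V-64} \longrightarrow (B\SU(32)\ang{c_3})^{V_t-64},
\end{equation*}
induced by the canonical lift of the symplectic inclusion $\Sp(16) \hookrightarrow \SU(32)$; the lift to $B\SU(32)\ang{c_3}$ exists because $H^6(B\Sp(16);\Z) = 0$ makes the trivialization of $c_3$ on any $\Sp(16)$-bundle canonical. On mod-$2$ cohomology $F^*$ annihilates the odd-degree class $G$ (since $H^{\mathrm{odd}}(B\Sp(16);\Z_2)=0$), and therefore kills the $\textcolor{RedOrange}{\Sigma^7 N_1}$ and $\textcolor{Fuchsia}{\Sigma^{11} N_2}$ summands while restricting isomorphically on the other three summands; the induced map on Ext is an inclusion sending the Sugimoto classes $a_5$, $c$, $d$ of \S\ref{ss:sugi} to their Sagnotti namesakes.

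Applying $d_2$ to both sides of~\eqref{d2_sagnotti} and using $d_2(a_5) = h_1^2 a_4$, $d_2(h_1^2 c) = 0$, and $d_2(h_1^2 d) = h_1^2 a_4$ (all deduced from~\cref{sugimoto_to_sagnotti} by transferring the Sugimoto differentials across $F$), the identity $d_2 \circ d_2 = 0$ yields
\begin{equation*}
(\lambda_1 + \lambda_4)\, h_1^2 a_4 \;+\; \lambda_2\, d_2(b_2) \;=\; 0 \qquad \text{in } \Ext^{4,13}.
\end{equation*}
Since $h_1^2 a_4$ lies in the image of the Sugimoto Ext (the $\textcolor{BrickRed}{M_4}$-summand), whereas $d_2(b_2)$ lies in the complementary $\textcolor{RedOrange}{\Sigma^7 N_1}$-portion and is nonzero by the change-of-rings isomorphism $\Ext_{\cA(2)}(\cA(2)\otimes_{\cA(1)}Q)\cong \Ext_{\cA(1)}(Q)$ combined with the Adams--Priddy computation cited above, the two terms are linearly independent. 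This forces $\lambda_1 + \lambda_4 = 0$ and $\lambda_2 = 0$.

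To separate $\lambda_1 = 0$ from $\lambda_4 = 0$, I would invoke a second naturality argument along the lines of \cref{nod6}: the map $\tmf \to \ko$ gives a map of Adams spectral sequences effecting a change of rings from $\cA(2)$ to $\cA(1)$, and because $V_t - 64$ is spin, the Thom isomorphism splits $\ko_*(\mathrm{pt})$ off of $\ko_*((B\SU(32)\ang{c_3})^{V_t-64})$. This splitting isolates the $\textcolor{BrickRed}{M_4}$-summand containing $a_5$, and repeating the argument of~\cref{nod6} one verifies that the image of $d_2(e)$ in this isolated summand vanishes, forcing $\lambda_1 = 0$ and hence also $\lambda_4 = 0$. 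The main obstacle is that the $\cA(2)$-module splitting of $H^*((B\SU(32)\ang{c_3})^{V_t-64};\Z_2)$ does not lift to a splitting of spectra, so $d_2$ genuinely mixes summands; in particular none of these naturality arguments constrain $\lambda_3$, and this unresolved coefficient is precisely what leaves the $\{0,\Z_2\}$ ambiguity in \cref{sag_thm}.
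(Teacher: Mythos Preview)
Your argument for $\lambda_1=\lambda_4$ via $d_2\circ d_2=0$ is correct and matches the paper. However, your deduction of $\lambda_2=0$ has a genuine gap: you assert that $d_2(b_2)$ is nonzero ``by the change-of-rings isomorphism $\Ext_{\cA(2)}(\cA(2)\otimes_{\cA(1)}Q)\cong \Ext_{\cA(1)}(Q)$ combined with the Adams--Priddy computation,'' but those inputs only describe the $E_2$-page, not Adams differentials. Nothing you have cited computes $d_2(b_2)$, and indeed the paper never determines it. If $d_2(b_2)=0$ (which you have not ruled out), the identity $(\lambda_1+\lambda_4)h_1^2a_4+\lambda_2\,d_2(b_2)=0$ gives you $\lambda_1=\lambda_4$ but places no constraint on $\lambda_2$. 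The paper's argument for $\lambda_2=0$ is much more elementary: $h_0e=0$ forces $h_0\,d_2(e)=0$, while $h_0b_2\ne 0$ and $h_0h_1^2=0$, so any $b_2$-component in $d_2(e)$ would survive multiplication by $h_0$.

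For separating $\lambda_1$ and $\lambda_4$, your $\tmf\to\ko$ idea is in the spirit of \cref{nod6}, but you would still need to verify that $a_5$ maps to a nonzero class in the split-off $\ko_*(\pt)$ summand of the $\ko$-Adams $E_2$-page (in \cref{nod6} this check was done explicitly for $w_1h_1^2a_1$). The paper instead uses the map $r\colon B\SU(3)\ang{c_3}\to B\SU(32)\ang{c_3}$: since $r^*$ kills $c_4$ but preserves $c_2$ and $G$, the class $e$ (corresponding to $c_2G$) lies in $\mathrm{Im}(r_*)$ while $h_1^2d$ (corresponding to $c_4$) does not, so naturality of $d_2$ forces $\lambda_4=0$ directly. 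This route avoids any question about the image of $a_5$ under restriction to $\cA(1)$.
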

\begin{proof}
Because $h_0 b_2\ne 0$ but $h_0h_1^2 = 0$, if $\lambda_2 \ne 0$, then $h_0d_2(e)\ne 0$. However, since $d_2$ commutes with $h_0$-multiplication, and $h_0e = 0$, $\lambda_2$ must vanish and $d_2(e)\in\mathrm{span}(a_5, h_1^2c, h_1^2d)$.

By \cref{sugimoto_to_sagnotti}, $d_2(c) = 0$ and $d_2(d) = a_4$, so $d_2(h_1^2c) = 0$ and $d_2(h_1^2d) =h_1^2a_4$, and $d_2(a_5) = h_1^2a_4$. Therefore $d_2\colon\mathrm{span}(a_5, h_1^2c, h_1^2d)\to E_2^{4,13}$ is nonzero on a class $\mu_1 a_5 + \mu_2 h_1^2c + \mu_3 h_1^2d$ if and only if $\mu_1 \ne \mu_3$. Thus $\lambda_1 = \lambda_4$: otherwise $d_2(d_2(e))\ne 0$, and it is always true that $d_2\circ d_2 = 0$.

For $\lambda_4$, consider the map $r\colon B\SU(3)\ang{c_3}\to B\SU(32)\ang{c_3}$ and the map $r_*$ it induces of Adams spectral sequences. The pullback $r^*$ on cohomology kills $c_4$ but leaves $c_2$ and $G$ alone; therefore on Ext groups, $e\in\mathrm{Im}(r_*)$ (because $e$ is the filtration $0$ class corresponding to to $c_2G$), $h_1^2c\in\mathrm{Im}(r_*)$ (because $c$ is the filtration $0$ element corresponding to $c_2^2$), and $h_1^2 d\not\in\mathrm{Im}(r_*)$ (because $d$ corresponds to $c_4$). The map $r_*$ commutes with differentials, so $d_2(e)\in\mathrm{Im}(r_*)$, which is only consistent if $\lambda_4 = 0$. Thus, by the previous paragraph, $\lambda_1 = 0$ too.
\end{proof}
Determining whether $\lambda_3 = 0$ appears to be difficult. This would be a good problem to address because if $\lambda_3\ne 0$, so that $d_2(e)\ne 0$, then the bordism group controlling the anomaly of the Sagnotti string would vanish, and the anomaly would cancel, at least on the class of backgrounds we studied.

Because the class $e$ potentially causing a nonzero bordism group is in Adams filtration $0$, the corresponding bordism invariant is the integral of a modulo $2$ characteristic class, explicitly
\begin{equation}
    \int c_2G.
\end{equation}
The class $G\in H^7(B\SU(32)\ang{c_3};\Z_2)$ is a little mysterious, so we go into some more detail; it is an example of a \emph{secondary characteristic class} in the sense of Peterson-Stein~\cite{PS62}.

Recall that by a trivialization of a cohomology class $z\in H^k(X; A)$, where $A$ is an abelian group, we mean a null-homotopy of a map $f_z\colon X\to K(A, k)$ whose homotopy class represents $z$. There is a space of such trivializations, and a standard result in obstruction theory implies that its set of path components is a torsor over $H^{k-1}(X; A)$. In other words, given two trivializations of $f_z$, their difference is well-defined as an element of $H^{k-1}(X; A)$.

The Wu formula implies $\Sq^2(c_3) = 0$ in $H^8(B\SU(32);\Z_2)$, and in fact provides a canonical trivialization for $\Sq^2(c_3)$. Pulling back to $B\SU(32)\ang{c_3}$ trivializes $c_3$, and therefore provides a second trivialization of $\Sq^2(c_3)$. The difference between these two trivializations is the class $G\in H^7(B\SU(32)\ang{c_3};\Z_2)$.

As a final comment, if we knew of a manifold with a non-trivial integral of $c_2 G$, it would by definition be a generator of the bordism group. We could evaluate the anomaly theory on it in order to determine whether or not the anomaly vanishes. Regrettably, we do not know of such a manifold. Finding it would actually also settle the bordism question as well, since classes at the bottom row of the spectral sequence must be detected by cohomology classes.

\subsubsection{\texorpdfstring{$\Spin(16)\times\Spin(16)$}{Spin(16) x Spin(16)}}
\label{SO16_no_swap}
Next we discuss the symmetry type of the non-supersymmetric heterotic string with gauge Lie algebra
$\mathfrak{so}(16)\oplus\mathfrak{so}(16)$. Although usually called $SO(16)^2$, there are fields transforming in spinor representations in the massless spectrum of the theory which means that we should instead consider $\Spin(16)^2$. There is a further subtlety: according to~\cite{McI99}, the gauge group $G$ is the quotient
of $\Spin(16)\times\Spin(16)$ by the diagonal $\Z_2$ subgroup $\ang{(k, k)}$, where $k\in\Spin(16)$ is either
central element not equal to $\pm 1$.\footnote{Strictly speaking, the analysis of~\cite{McI99} does not take into account the full string spectrum. Therefore, \emph{a priori} the correct gauge group $G$ may differ from this particular quotient of $\Spin(16)^2$.}

As the computation of $H^*(BG)$ is complicated, we will make a simplifying assumption: only working with the double
cover $\Spin(16)\times\Spin(16)$, as we mentioned above. Thus our anomaly cancellation results are only partial information: if we found an anomaly for $\Spin(16)^2$, it would imply the existence of an anomaly for the actual gauge group $G$. However, we found that anomalies cancel for $\Spin(16)^2$, which is only partial information: there could be an anomaly of the theory which vanishes when restricted to gauge fields induced from a $\Spin(16)^2$ gauge field. 
It would be
interesting to address the more general question of the anomaly for $G$.\footnote{Like for any double cover, for any odd prime $p$, the quotient $B\Spin(16)\times B\Spin(16)\to BG$ is a $p$-primary equivalence, so the lack of $p$-primary torsion we establish for $\Spin(16)\times\Spin(16)$ remains valid for $G$.}

Let $\String\text{-}\Spin(16)^2$ be the Lie $2$-group which is the string cover of
$\Spin\times\Spin(16)\times\Spin(16)$ corresponding to the degree-$4$ cohomology class $\tfrac 12 p_1^{(1)} -
\tfrac 12 p_1^{(2)} - \tfrac 12 p_1^{(3)}$, where $c^{(i)}$ refers to the cohomology class $c$ coming from the
$i^{\mathrm{th}}$ factor of $B\Spin$ or $B\Spin(n)$.\footnote{Elsewhere in the paper we have referred to $\tfrac 12 p_1^L$ and $\tfrac 12 p_1^R$ as Chern classes, and indeed they are Chern classes of the representations that play a role in the Green-Schwarz mechanism for this string theory. However, the bordism computation we perform in this section only depends on the characteristic class, not the representation (this is the thesis of~\cite{DY23}), so to emphasize this independence, we use the more intrinsic name $\tfrac 12 p_1$, as this class is one-half of the first Pontrjagin class of the vector representation of $\Spin(n)$.}
Quotienting $\String\text{-}\Spin(16)^2$ by the $\Spin(16)^2$
factor produces a map to $\Spin$; composing with $\Spin\to\mathit O$ we obtain a tangential structure as usual.
\begin{thm}\label{noswap}
In degrees $11$ and below, the $\String\text{-}\Spin(16)^2$ bordism groups are:
\begin{alignat*}{2}
	\Omega_0^{\String\text{-}\Spin(16)^2} &\cong \Z \qquad & \Omega_6^{\String\text{-}\Spin(16)^2} &\cong 0\\
	\Omega_1^{\String\text{-}\Spin(16)^2} &\cong \Z_2 \qquad & \Omega_7^{\String\text{-}\Spin(16)^2} &\cong 0\\
	\Omega_2^{\String\text{-}\Spin(16)^2} &\cong \Z_2 \qquad & \Omega_8^{\String\text{-}\Spin(16)^2} &\cong \Z^2
		\oplus \textcolor{BrickRed}{\Z^3} \oplus \textcolor{MidnightBlue}{\Z}\\
	\Omega_3^{\String\text{-}\Spin(16)^2} &\cong 0 \qquad & \Omega_9^{\String\text{-}\Spin(16)^2} &\cong
	(\Z_2)^{\oplus 2} \oplus \textcolor{BrickRed}{(\Z_2)^{\oplus 2}} \oplus \textcolor{MidnightBlue}{\Z_2}\\
	\Omega_4^{\String\text{-}\Spin(16)^2} &\cong \Z \oplus\textcolor{BrickRed}{\Z}
		\qquad & \Omega_{10}^{\String\text{-}\Spin(16)^2} &\cong (\Z_2)^{\oplus 3} \oplus
		\textcolor{BrickRed}{(\Z_2)^{\oplus 3}} \oplus \textcolor{MidnightBlue}{\Z_2}\\
	\Omega_5^{\String\text{-}\Spin(16)^2} &\cong 0 \qquad & \Omega_{11}^{\String\text{-}\Spin(16)^2} &\cong 0.
\end{alignat*}
\end{thm}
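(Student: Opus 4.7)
The plan is to adapt the approach of \cref{ss:sugi} and \cref{ss:sagnotti} to this setting. As in those theorems, I would first repackage the twisted bordism groups as string bordism of a Thom spectrum: if $W\to B\Spin(16)^2$ denotes the vector bundle associated to the direct sum of the defining $16$-dimensional representations of the two $\Spin(16)$ factors, then a string-$\Spin(16)^2$ structure on a spin manifold $M$ equipped with a map $f\colon M\to B\Spin(16)^2$ amounts to a trivialization of $\tfrac 12 p_1(TM\oplus f^*W)$, so there is an isomorphism $\Omega_*^{\String\text{-}\Spin(16)^2}\cong\Omega_*^\String((B\Spin(16)^2)^{W-32})$. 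By Hill's theorem~\cite[Theorem 2.1]{Hil09}, in degrees at most $15$ this coincides with $\tmf_*((B\Spin(16)^2)^{W-32})$, so it suffices to compute the latter, one prime at a time.

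For primes $p\ge 5$, $H^*(B\Spin(16);\Z_{(p)})$ is torsion-free and concentrated in even degrees (it is polynomial on the Pontryagin classes and the Euler class), and Künneth gives the same for $B\Spin(16)^2$. Since $\tmf_{(p)}$ has homotopy groups that are free and concentrated in even degrees for $p\ge 5$, the Atiyah-Hirzebruch spectral sequence collapses with only free even summands, ruling out $p$-primary torsion below degree $12$ and accounting for the free summands in the statement.

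For $p = 3$, I would use the Hopkins-Mahowald change-of-rings Adams spectral sequence described in \cref{ss:sugi}. The input is the $\cA^{\tmf}$-module structure on $H^*((B\Spin(16)^2)^{W-32};\Z_3)$. Since $H^*(B\Spin(16);\Z_3)$ is concentrated in even degrees, $\beta$ acts as zero; the action of $\cP^1$ on Pontryagin classes is given by Shay's formula~\cite{Sha77} (with Sugawara's explicit transcription~\cite{Sug79}), twisted on the Thom spectrum by $\cP^1(U) = \pm Uc_2(W_\C)$ as in~\cite[Theorem 2.28]{DY23}. I would decompose the resulting $\cA^{\tmf}$-module into copies of $N_3$, $C\nu$, and $\Z_3$, whose Ext is computed in \cref{ss:sugi} and in~\cite{DY23}, and conclude as there that in the relevant range the $E_2$-page consists only of $h_0$-towers in even total degrees, forcing no $3$-torsion in degrees $\le 11$.

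For $p = 2$ I would run the standard $\cA(2)$-based Adams spectral sequence for $\tmf$-homology. The input is $H^*(B\Spin(16);\Z_2)$, which by Quillen's calculation is generated by Stiefel-Whitney classes together with classes in degrees $2^i$ coming from the universal cover relation; the Wu formula computes all Steenrod operations on these, and the Thom isomorphism together with $\Sq^k(U) = Uw_k(W)$ extends this to an $\cA(2)$-module description of $H^*((B\Spin(16)^2)^{W-32};\Z_2)$ in degrees at most $12$. I would decompose this module into familiar pieces such as $M_4$, $N_1$, and $Q$-type summands whose Ext groups are computed in Bruner-Rognes~\cite{BR21} and Adams-Priddy~\cite{AP76}, and read off the $E_2$-page. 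The main obstacle, just as in \cref{ss:sugi}, is resolving the potentially nonzero $d_2$ differentials at $p=2$: I expect to have to pin them down by comparing with the Sugimoto spectral sequence via the inclusion of each $\Spin(16)$ factor (so that nonvanishing differentials from $\HP^1$ and $\HP^2$ classes pull back in), and by using the parity obstruction $\int c_4 \in 12\Z$ on twisted string $8$-manifolds that already appeared in \cref{ss:sugi}. Once enough differentials are shown to be nonzero to kill all classes in total degree $11$, and all longer differentials are ruled out by $h_0,h_1$-linearity, the vanishing $\Omega_{11}^{\String\text{-}\Spin(16)^2}\cong 0$ follows, and the lower-degree entries of the table are read off from the surviving classes on the $E_\infty$-page.
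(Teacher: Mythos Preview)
Your plan is the brute-force Adams approach used for $\Sp(16)$ and $\SU(32)$, and in principle it can be made to work, but it misses the paper's key simplification and would be substantially harder to execute than you suggest. The paper does \emph{not} run an Adams spectral sequence here at all. Instead it exploits that, unlike for $\Sp(16)$ or $\SU(32)$, the gauge factors are themselves spin groups, so the twist class $\tfrac 12 p_1^{(1)} - \tfrac 12 p_1^{(2)} - \tfrac 12 p_1^{(3)}$ is additive in the same sense for tangent and gauge bundles. Writing $W = E - V^L - V^R$, the data $(E,V^L,V^R)$ with a trivialization of $\tfrac 12 p_1(W)$ is equivalent to the data of a spin structure on $E$, a virtual \emph{string} bundle $W$, and a virtual spin bundle $V^R$. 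After stabilizing $\Spin(16)\to\Spin$ (a $15$-connected map), this identifies $\Omega_*^{\String\text{-}\Spin(16)^2}\cong\Omega_*^\Spin(B\Spin\times B\String)$, which the paper then reads off from the stable splitting $\Sigma_+^\infty(A\times B)\simeq \mathbb S\vee\Sigma^\infty A\vee\Sigma^\infty B\vee\Sigma^\infty(A\wedge B)$ and known computations of $\Omega_*^\Spin$, $\widetilde\Omega_*^\Spin(B\Spin)$ (Francis), and $\widetilde\Omega_*^\Spin(B\String)$ (Davis, Stong, Giambalvo), with the smash summand vanishing below degree $12$ for connectivity reasons. This is what the colors in the theorem statement record.

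Your route, by contrast, would require computing the $\cA(2)$-module structure on $H^*((B\Spin(16)^2)^{W-32};\Z_2)$ through degree $12$, which is considerably more complicated than you indicate: $H^*(B\Spin(16);\Z_2)$ already has generators $w_4,w_6,w_7,w_8,\dotsc$ with nontrivial Steenrod action, and after taking the product and Thom class you would face many summands and many potential differentials, not just a few $M_4$'s. Your proposed differential arguments (pulling back from the Sugimoto spectral sequence, the $12\mid\int c_4$ obstruction) are plausible ingredients but far from a complete resolution. So the plan is not wrong, but it is the hard way; the paper's untwisting trick is the idea you are missing.
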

The colors in the theorem statement will be explained below; they correspond to different summands in (an
approximation to) $\mathit{MT}(\String\text{-}\Spin(16)^2)$.
\begin{proof}
The inclusion $i\colon \Spin(16)\to\Spin$ induces a map $Bi\colon B\Spin(16)\to B\Spin$ which is $15$-connected,
because it is an isomorphism on cohomology in degrees $15$ and below. This map sends $\tfrac 12 p_1$ to $\tfrac 12
p_1$, so is compatible with the construction of $\String\text{-}\Spin(16)^2$ --- that is, if
$\String\text{-}\Spin^2$ is defined in the same way as $\String\text{-}\Spin(16)^2$ but using $\Spin$ instead
of $\Spin(16)$, then $i$ induces a map of tangential structures
\begin{equation}
	i_2\colon B(\String\text{-}\Spin(16)^2)\to B(\String\text{-}\Spin^2),
\end{equation}
as well as the analogous map on bordism groups. Because $i$ is $15$-connected, $i_2$ is also $15$-connected, so the
induced map of Thom spectra is also $15$-connected (e.g.\ check on cohomology, where it follows from
$15$-connectivity of $i_2$ via the Thom isomorphism). Therefore for $k\le 15$, the map
$\Omega_k^{\String\text{-}\Spin(16)^2}\to\Omega_k^{\String\text{-}\Spin^2}$ induced by $i$ is an isomorphism.
Therefore for the rest of this proof, we can work only with $\String\text{-}\Spin^2$ bordism without affecting the
results.

Concretely, a string-$\Spin^2$ structure on a vector bundle $E\to X$ is data of a spin structure on $E$ and two
virtual spin vector bundles $V^L,V^R\to X$ and a trivialization of $\tfrac 12 p_1(E) - \tfrac 12p_1(V^L) -\tfrac
12p_1(V^R)$. Since $\tfrac 12p_1$ is additive in direct sums~\cite[Lemma 1.6]{Deb23}, this is equivalent to a
trivialization of $\tfrac 12 p_1(E - V^L - V^R)$, meaning that a string-$\Spin^2$ structure is equivalent to the data
of $V^L$ and $V^R$ and a string structure on $W := E - V^L - V^R$.

The data $(E, V^L, V^R)$ and $(E, W, V^R)$ are equivalent, as $V^L = E - W - V^R$, and the spin structure on $V^L$
can be recovered from the spin structures on $E$, $W$, and $V^R$ by the two-out-of-three property (the string
structure on $W$ includes data of a spin structure). Therefore the data of a string-$\Spin^2$ structure on $E\to X$
is equivalent to the following data:
\begin{itemize}
	\item a spin structure on $E$,
	\item a virtual string vector bundle $W\to X$, and
	\item a virtual spin vector bundle $V^R\to X$.
\end{itemize}
Taking bordism groups, we learn
\begin{equation}
	\Omega_*^{\String\text{-}\Spin^2} \overset\cong\longrightarrow \Omega_*^\Spin(B\Spin\times B\String).
\end{equation}
For any spaces $A$ and $B$, the stable splitting $\Sigma_+^\infty(A)\simeq \Sigma^\infty A\vee \mathbb S$ and its analogue for $B$ together imply a stable splitting
\begin{subequations}
\begin{equation}
	\Sigma_+^\infty(A\times B) \simeq \mathbb S\vee \Sigma^\infty A\vee \Sigma^\infty B\vee \Sigma^\infty(A\wedge
	B),
\end{equation}
implying that for any generalized homology theory $h$,
\begin{equation}
\label{product_splitting}
	h_*(A\times B)\cong h_*(\mathrm{pt})\oplus \widetilde h_*(A)\oplus \widetilde h_*(B)\oplus \widetilde h_*(A\wedge B).
\end{equation}
Here $\widetilde h(X)$ denotes ``reduced $h$-homology'' of a space $X$, meaning the quotient $h(X)/i_*(h(\mathrm{pt}))$ induced by a choice of basepoint $i\colon \mathrm{pt}\to X$. Thus for example $\widetilde\Omega_*^\Spin(X)$ denotes reduced spin bordism, etc.

Apply~\eqref{product_splitting} for $h = \Omega_*^\Spin$, $A = B\Spin$, and $B = B\String$:
\begin{equation}
\label{wherecolor}
\begin{aligned}
	\Omega_*^{\String\text{-}\Spin^2} &\cong\Omega_*^\Spin(B\Spin\times B\String)\\
 &\cong \Omega_*^\Spin \oplus
		\textcolor{BrickRed}{\widetilde\Omega_*^\Spin(B\Spin)} \oplus
		\textcolor{MidnightBlue}{\widetilde\Omega_*^\Spin(B\String)} \oplus
		\textcolor{Green}{\widetilde\Omega_*^\Spin(B\Spin\wedge B\String)}.
\end{aligned}
\end{equation}
\end{subequations}
The colors in~\eqref{wherecolor} indicating the pieces of this direct-sum decomposition correspond to the colors in
the theorem statement displaying which pieces of the bordism groups come from which summands in~\eqref{wherecolor}.

The final step is to determine the four summands in~\eqref{wherecolor}.
\begin{itemize}
	\item $\Omega_*^\Spin$ was calculated by Milnor~\cite[\S 3]{Mil65} and Anderson-Brown-Peterson~\cite{ABP67}.
	\item $\textcolor{BrickRed}{\widetilde\Omega_*^\Spin(B\Spin)}$ was calculated by Francis~\cite[\S 2.2]{Fra11}.
	\item $\textcolor{MidnightBlue}{\widetilde\Omega_*^\Spin(B\String)}$ was computed by Davis~\cite{Dav83} at $p = 2$. At odd primes, these groups are easy to calculate in the range we need: because $B\String$ is $7$-connected,
	$\textcolor{MidnightBlue}{\widetilde\Omega_k^\Spin(B\String)}$ vanishes for $k < 8$; for $8\le k \le 11$, use
	the Atiyah-Hirzebruch spectral sequence. Work of Stong~\cite{Sto63} and Giambalvo~\cite{Gia69} implies that in degrees $11$ and
	below, $\widetilde H^*(B\String; \Z)$ consists of a single summand isomorphic to $\Z$ in
	degree $8$, and the remaining groups vanish. This suffices to collapse the Atiyah-Hirzebruch spectral sequence
	into the blue groups in the theorem statement.
	\item For $A= \Z$ or $\Z_2$,  $\widetilde H^k(B\Spin; A)$ vanishes for $k <4$, and $H^k(B\String; A)$ vanishes
	for $k < 8$, so by the Künneth formula, $\widetilde H^k(B\Spin\wedge B\String; A)$ vanishes for $k < 12$.
	Therefore the Atiyah-Hirzebruch spectral sequence for $\textcolor{Green}{\widetilde\Omega_*^\Spin(B\Spin\wedge
	B\String)}$ vanishes in degrees $11$ and below.
	\qedhere
\end{itemize}
\end{proof}
\begin{rem}[Analogy with $E_8\times E_8$]\label{susy_noswap}
The two-step simplification of $\String\text{-}\Spin(16)^2$ (first replace $\Spin(16)$ with $\Spin$, then recast as spin bordism of a space) is directly analogous to Witten's~\cite[\S 4]{Wit86} simplification of the symmetry type of the $E_8\times E_8$ heterotic string: first, there is a $15$-connected map $BE_8\to K(\Z, 4)$, so in dimensions relevant to string theory we may replace the former with the latter; then Witten recast the data of the two maps to $K(\Z, 4)$ and the twisted string structure given by the Green-Schwarz procedure as a spin structure and a single map to $K(\Z, 4)$.
\end{rem}
\begin{rem}[Analogy with $\Spin(32)$ and detecting a non-supersymmetric $0$-brane]
The same two-step procedure also works for the $\mathit{Spin}(32)/\Z_2$ heterotic string when one restricts to $\Spin(32)$-bundles, showing that the relevant twisted string bordism groups coincide with $\Omega_*^\Spin(B\String)$, which vanishes in dimension $11$. As with $\Spin(16)\times\Spin(16)$, this is only partial information towards a complete anomaly cancellation result.

However, the partial information provided by these bordism groups is already useful: combined with the Cobordism Conjecture~\cite{McNamara:2019rup}, it detects Kaidi-Ohmori-Tachikawa-Yonekura's non-supersymmetric $0$-brane~\cite{Kaidi:2023tqo}. To see this, consider $\Omega_8^\Spin(B\String)\cong\Z^3$: two of the $\Z$ summands come from $\Omega_*^\Spin(\mathrm{pt})$, and as such are generated by $\HP^2$ and the Bott manifold; the third $\Z$ summand is represented by $S^8$ with the map to $B\String$ given by the generator of $[S^8, B\String] = \pi_8(B\String) \cong \Z$. Tracing through the simplification from twisted string bordism of $B\Spin(32)$ to the spin bordism of $B\String$, we see that this $S^8$ has the $\Spin(32)$-bundle arising from the generator of $\pi_8(\Spin(32))\cong\Z$, which is detected by $p_2$.

The Cobordism Conjecture predicts that associated to this bordism class (or rather its image in the corresponding bordism group for $\mathit{Spin}(32)/\Z_2$), there is a $0$-brane in $\mathrm{Spin}(32)/\Z_2$ heterotic string theory whose link is $S^8$ with this $\Spin(32)$-bundle and twisted string structure. This is precisely the $0$-brane discovered by Kaidi-Ohmori-Tachikawa-Yonekura~\cite{Kaidi:2023tqo}. Those authors also discuss a $6$-brane in the $\mathit{Spin}(32)/\Z_2$ heterotic string, but its description uses $\pi_1(\mathit{Spin}(32)/\Z_2)\cong\Z_2$, so it is invisible to the $\Spin(32)$ computation we made here. 
\end{rem}

\subsection{Physical intuition from fivebrane anomaly inflow}\label{sec:physicalintuition}

As we have just seen, the relevant bordism groups vanish, and therefore there are no Dai-Freed anomalies (except possibly for the Sagnotti string). It is instructive to study the vanishing of anomalies more explicitly in particular examples, to better understand the physics at play. Let us recall from Section \ref{sec:global_anomalies} the structure of the anomaly theory for ten dimensional theories that feature a Green-Schwarz mechanism:\begin{equation}\label{eq:GSanomtheory2}\alpha_{\text{GS}}(Y_{11})= \int_{Y_{11}} H \wedge X_8.\end{equation}
The boundary mode of this eleven dimensional field theory gives exactly the contribution of the Green-Schwarz term to the classical action: \begin{equation}S_{\text{GS}}= \int_{Y_{10}} B_2 \wedge   X_8.\end{equation}
In this section, we consider simple backgrounds of the factorized form $Y_{11} = S^3 \times M_8$ for the anomaly theory \eqref{eq:GSanomtheory2}. We will also take one unit of three-form $H$ flux threading the sphere, so that the Green-Schwarz term gives a nontrivial contribution, and $M_8$ a spin manifold equipped with a gauge bundle $E$ such that $\frac{p_1(M_8) + c_2(E)}{2}$ is trivial in integer cohomology. Unlike more general backgrounds, these factorized ones allows for an intuitive understanding of how anomalies are cancelled, via inflow.

On these backgrounds, the eta invariant contribution to the anomaly theory (coming from the fermions) vanishes on account of the factorization property
\begin{align}
    \eta(A \times B) = \eta(A) \, \text{index}(B),
\end{align}
where $A$ is odd-dimensional. The eta invariant of fermions on $S_H^3$ vanishes modulo 1, as it is the same as the eta invariant on a three-sphere, which is the boundary of $\mathbb{R}^4$.  As a result, the anomaly theory simplifies to the Green-Schwarz term
\begin{align}\label{eq:intX8}
    \alpha(S_H^3 \times M_8) = \int_{M_8} X_8 \, .
\end{align}
If we can now show that this quantity is always an integer, Dai-Freed anomalies will vanish on all such factorized backgrounds. This result does not hinge on the the precise bordism groups computed in the preceding section.

In order to prove that eq. \eqref{eq:intX8} is always an integer, we can connect it with the anomaly inflow mechanism on a fivebrane. Specifically, $S^3_H$ is a non-trivial bordism class, and one possible boundary for it in string theory is a fivebrane. The fivebrane is a codimension four object, and it is characterized precisely by the fact that the angular $S^3$ in the transverse space is threaded by one unit of H-flux. These fivebranes are precisely D5-branes in the orientifold models and NS5-branes in the heterotic model. We will now show that $X_8$ coincides with the anomaly polynomial of such a fivebrane, up to terms which vanish when the Bianchi identity holds. In the presence of fivebranes coupling to $B_6$, the dual of $B_2$, the classical gauge variation of the effective action is compensated by the quantum anomaly of the chiral worldvolume degrees of freedom. For this inflow mechanism to work, the anomaly polynomial of a single fivebrane has to be $I_8 = X_8$ (up to terms that vanish on a twisted String manifold). To see this, notice that the bulk action receives additional worldvolume contributions of the form
\begin{eqaed}\label{eq:brane_action}
    S = S_\text{bulk} + S_\text{GS} + S_\text{wv} + \mu \int_W B_6 \, ,
\end{eqaed}
where $W$ denotes the worldvolume of the fivebrane(s) and $B_6$ the dual of $B_2$.  The bulk action $S_\text{bulk}$, which describes the ten-dimensional effective (super)gravity theory, is accompanied by the Green-Schwarz term $S_\text{GS}$ of eq.\ \eqref{eq:GSterm10d} to cancel bulk anomalies. The brane is instead described by the worldvolume DBI action $S_\text{wv}$ accompanied by the magnetic coupling to $B_6$, which is the relevant coupling in the following argument. The equation of motion for $B_6$ and the corresponding dual Bianchi identity are
\begin{eqaed}\label{eq:brane_bianchi}
    d \star dB_6 & = \mu \, \delta(W \hookrightarrow M_{10}) \, , \\
    dH_3 & = \mu \, \delta(W \hookrightarrow M_{10}) \, ,
\end{eqaed}
where $H_3 = dB_2$ and the $\delta$ is a distribution-valued four-form that describes the embedding of $W$ in spacetime. Correspondingly, the Bianchi identity for the gauge invariant field strength $H \equiv \widetilde{H}_3 = dB_2 - \omega_\text{CS}$, which ordinarily reads $dH = X_4$, also receives a new localized contribution. Because of this Bianchi identity, there is a new classical contribution to the gauge variations. Using descent, $X_8 = dX_7^{(0)}$, $\delta X_7^{(0)} = dX_6^{(1)}$, one finds
\begin{eqaed}\label{eq:classical_anomaly}
    \delta_\text{new} S_\text{GS} & = - \int_{M_{10}} dB_2 \wedge \delta X_7^{(0)} \\
    & = - \int_{M_{10}} dH_3 X_6^{(1)} \\
    & = - \mu \int_W X_6^{(1)} \, ,
\end{eqaed}
which cancels by inflow provided that the worldvolume theory of the D5-brane has an anomaly polynomial $I_8 = \mu \, X_8$~\cite{Dixon:1992if, Mourad:1997uc}. With our choice of units, the elementary charge $\mu = n_5 \in \Z$ counts the number of fivebranes.

As described in Section \ref{sec:local}, the anomaly polynomial is a sum of indices, given by the APS index theorem for each one of the anomalous degrees of freedom propagating on the fivebrane. As such, we know that $I_8$ is an integer and we can conclude from the previous discussion that $X_8$ must also be an integer. This anomaly inflow argument thus allows one to show that the anomaly \eqref{eq:intX8} always vanishes.

For the orientifold models, this mechanism can be implemented explicitly, since these theories have D5-branes whose worldvolume degrees of freedom are known. We describe this in detail in section \ref{sec:5braneorient}.

On the heterotic side, although the $SO(16)\times SO(16)$ theory is known to have NS5 branes, their worldvolume degrees of freedom are not known and so we have to resort to other arguments to prove that the anomaly \eqref{eq:intX8} vanishes. The proof can be found at the end of Section \ref{sec:so16so16_inflow}. There is, however, a more physical way of understanding why anomalies cancel in the heterotic case: one can show that the anomaly polynomial of $SO(16)\times SO(16)$ can be directly related to that the supersymmetric heterotic theories. Therefore, one use this connection to show that anomalies cancel for $SO(16)\times SO(16)$ by showing that they cancel in the supersymmetric cases. This is done explicitly in Section \ref{sec:so16so16_inflow}.

Finally, since we have proven that anomalies vanish in the heterotic case, we can reverse the anomaly inflow argument above to speculate about the worldvolume degrees of freedom of the NS5 brane. Indeed, we identify what kind of degrees of freedom give rise to the correct anomaly polynomial so as to have $X_8 = I_8$. We do so away from strong coupling effects, in the puffed-up instanton limit of the NS5 brane. This is detailed in section \ref{sec:SO16NS5}.

\subsubsection{\texorpdfstring{$\Sp(16)$ and $U(32)$}{Sp(16) and U(32)}}\label{sec:5braneorient}

Let us begin with the orientifold models. This cancellation of anomalies by inflow was first constructed for the case of $\Spin(32)/\Z_2$ in~\cite{Dixon:1992if,Mourad:1997uc}. The chiral fermions on the worldvolume of the D5-brane consist of one vector multiplet of $\Spin(32)/\Z_2$ and two gauge singlets, such that,
\begin{equation}\label{eq:inflowso32}
   (X_8)_{\Spin(32)/\Z_2}- I_{\Spin(32)/\Z_2} = -\frac{1}{24}p_1 (X_4)_{\Spin(32)/\Z_2}\, ,
\end{equation}
where $(X_8)_{\Spin(32)/\Z_2}$ and $(X_4)_{\Spin(32)/\Z_2}$ can be read off from \eqref{eq:anompolyso32}. This shows how one recovers $I_8 = X_8$ up to a term that vanishes on a twisted string manifold. 

One may wonder where the extra term in \eqref{eq:inflowso32} comes from, even if we know it to vanish on a twisted string manifold. This can be understood as follows; from the perspective of the ten-dimensional supergravity action, a D5-brane amounts to introducing a delta function localized on the brane. The D5-brane gives a localized contribution to the 10d action of the form $B_2 \wedge Y_4 \wedge \delta_4$ where $Y_4$ is some 4-form which in this case is reduces to $Y_4 = -\frac{1}{24}p_1$, expanding the A-roof genus in the Chern-Simons effective worldvolume action. Indeed, using the Bianchi identity for the $H_3$ flux, we see that this term contributes to the anomaly polynomial as:
\begin{equation}
    \int _{Z_{12}}X_4\wedge Y_4 \wedge \delta_4 = \int _ {X_8} X_4 \wedge Y _ 4 \,.
\end{equation}
Therefore, the appearance of the extra term in \eqref{eq:inflowso32} can be traced down to not properly taking into account the delta-function source that corresponds to the localized D5-brane. 

The same mechanism happens in the two non-supersymmetric orientifold models, as was found by~\cite{Dudas:2000sn} along the lines of~\cite{Mourad:1997uc,Imazato:2010qz}. The worldvolume degrees of freedom on D5-branes can be extracted from one-loop open-string amplitudes~\cite{Dudas:2001wd}, and the chiral fermions arrange in the virtual representation
\begin{eqaed}
    \left( \mathbf{\frac{N(N+1)}{2}}, \mathbf{1} \right) - \left( \mathbf{\frac{N(N-1)}{2}}, \mathbf{1} \right) - \left( \mathbf{N}, \mathbf{32} \right)
\end{eqaed}
of $SO(N) \times \Sp(16)$ (for the Sugimoto model\footnote{In this case, a single brane corresponds to $N=2$.}) or $U(N) \times U(32)$ (for the Sagnotti model). In order to compare the anomaly polynomial $I_8$ (without worldvolume gauge field) with the bulk $X_8$, one needs to decompose characteristic classes of the bulk tangent bundle in terms of the worldvolume tangent bundle $TW$ and normal bundle $N$. In detail,
\begin{eqaed}\label{eq:pontryagin_split}
    p_1(TM_{10}) & = p_1(TW) + p_1(N) \, , \\
    p_2(TM_{10}) & = p_2(TW) + p_1(TW) \, p_1(N) + p_2(N) \, , \\
    p_1(N) & = c_1(N)^2 - 2 \, c_2(N) \, , \\
    p_2(N) & = c_2(N)^2 = \chi(N)^2 \, . \\
\end{eqaed}
When the normal bundle of the worldvolume is trivial, one obtains
\begin{eqaed}
    I_8 - X_8 \propto p_1(TM_{10}) \, X_4 \, ,
\end{eqaed}
and therefore the inflow mechanism implies that $X_8$ integrates to an integer on any spin 8-manifold with $X_4 = 0$. When the normal bundle is non-trivial, there are additional contributions to the above expression, proportional to the Euler class of $N$. However, the full brane action also contains another term~\cite{Dixon:1992if, Mourad:1997uc} proportional to $B_2$ rather than $B_6$, which induces another classical variation to be canceled by inflow. As a result, the anomaly polynomial of the fivebrane worldvolume theory is not quite the above $I_8$, but has an additional contribution that cancels the normal bundle terms~\cite{Mourad:1997uc}. In more detail, adding a coupling of the type $\int_W B_2 Y_4$ to the fivebrane worldvolume action contributed a new classical variation to the effective action, which arises by descent from $\Delta I_8 = - \, (X_4 + n_5 \, \chi(N)) Y_4$. Therefore, the full anomaly polynomial of the fivebrane worldvolume ought to be $I_8 = n_5 \, X_8 - \Delta I_8$, again up to terms that vanish on twisted String backgrounds. This additional coupling can be shown to cancel the normal bundle terms in the anomaly~\cite{Mourad:1997uc} (see also~\cite{Witten:1996hc} for a discussion in the context of M-theory).

\subsubsection{\texorpdfstring{$SO(16)\times SO(16)$} {SO(16)xSO(16)}}\label{sec:so16so16_inflow}

For the heterotic model, no such result is available, since the worldvolume degrees of freedom of NS5-branes are not understood without supersymmetry or dualities at one's disposal. However, one can nonetheless express $X_8$ as an index of six-dimensional chiral fields; since index are manifestly integers, this will be enough to establish that anomalies cancel. In order to do so, let us observe that the formal difference of representations of the chiral fermions of the non-supersymmetric heterotic model can be rewritten as\footnote{To our knowledge, this was first explicitly stated in the literature in ~\cite{Schellekens:1986xh}, though we learned from Luis Álvarez-Gaumé that the authors of \cite{Alvarez-Gaume:1986ghj} were also aware of this fact.}
\begin{align}
\begin{split}
    & (\mathbf{128}, \mathbf{1}) +  (\mathbf{1}, \mathbf{128}) - (\mathbf{16}, \mathbf{16}) \\
    & = (\mathbf{128}, \mathbf{1}) +  (\mathbf{1}, \mathbf{128}) + (\mathbf{120}, \mathbf{1}) + (\mathbf{1}, \mathbf{120}) \\
    & - (\mathbf{120}, \mathbf{1}) -(\mathbf{1}, \mathbf{120}) - (\mathbf{16}, \mathbf{16}) \, ,
\end{split}
\end{align}
The matter fields in the first line after the equal correspond precisely to the decomposition of the adjoint of $\mathfrak{e}_8\oplus\mathfrak{e}_8$ into representations of the $\mathfrak{so}_{16}\oplus\mathfrak{so}_{16}$ subalgebra; they are the field content that would arise after giving a vev to an adjoint $\mathfrak{e}_8\oplus\mathfrak{e}_8$ field. Similarly, the fields in the second line are (with reversed chirality) those fields that would arise after adjoint Higgsing from the $\mathfrak{so}(32)$ algebra to its $\mathfrak{so}_{16}\oplus\mathfrak{so}_{16}$ subalgebra. What we are seeing here is that, at a formal level (as far as the chiral spectrum is concerned), the $SO(16)^2$ is equivalent to one copy of the $E_8\times E_8$ string stacked on top of a copy of the $\Spin(32)/\mathbb{Z}_2$ string, with opposite chirality, and Higgsed to a common subgroup with algebra $\mathfrak{so}_{16}\oplus\mathfrak{so}_{16}$. Therefore, we can write, at the level of anomaly polynomials, the equality
\begin{align}\label{eq:e8e8minusso32}
    P^{E_8 \times E_8}_{12}|_{SO(16)^2} - P^{\Spin(32)/\Z_2}_{12}|_{SO(16)^2} = P^{SO(16)^2}_{12}\,, 
\end{align}
where we have merely restricted to $SO(16)^2$ bundles inside of the two groups above. Since each of the supersymmetric string theories are anomaly-free by themselves, the formal linear combination will also be. This argument, which can be carried out at the level of eta invariants etc.\ and not just anomaly polynomials, is yet another proof of the fact that the $SO(16)^2$ theory is anomaly free\footnote{Even with the right global quotient.}, without relying explicitly on bordism calculations. Furthermore, in particular, this holds for the Green-Schwarz terms, which are
\begin{align}
\begin{split}
     & (X_8)_{\Spin(32)/\Z_2}|_{SO(16)^2} -(X_8)_{E_8 \times E_8}|_{SO(16)^2}  \; \\ & =  \frac{1}{24} \left( (c_{\mathbf{16},2}^{(1)})^2 + (c_{\mathbf{16},2}^{(2)})^2 + c_{\mathbf{16},2}^{(1)} \, c_{\mathbf{16},2}^{(2)} - 4 \, c_{\mathbf{16},4}^{(1)} - 4 \, c_{\mathbf{16},4}^{(2)} \right) \;    = (X_8)_{SO(16)^2} \, .\end{split}
\end{align}
It is unclear whether this connection between the non-supersymmetric $SO(16) \times SO(16)$ theory and the supersymmetric theories persists beyond a formal equality at the level of (super)gravity, or whether on the contrary it has a deeper meaning. Some previous work ~\cite{Blum:1997cs, Blum:1997gw} (see also~\cite{Faraggi:2007tj}) identified connections between supersymmetric and non-supersymmetric strings via interpolating models, which are nine-dimensional compactifications recovering either supersymmetric or non-supersymmetric strings in different decompactification limits. In particular, an interpolating model was constructed between the $SO(16) \times SO(16)$ theory and the supersymmetric $\Spin(32)/\Z_2$ theory, matching the worldsheet CFT descriptions and solitons in between the two\footnote{Since the non-supersymmetric theories have NS-NS tadpoles and would-be moduli run in the absence of (large) stabilizing fluxes~\cite{Mourad:2016xbk, Basile:2018irz} and/or spacetime warping~\cite{Dudas:2000ff, Blumenhagen:2000dc}, there may be additional subtleties in understanding the dynamics of this duality.}.

The cancellation of anomalies by fivebrane inflow for the $SO(16)\times SO(16)$ theory thus follows from that of the two supersymmetric heterotic theories. The anomaly inflow in the case of $\Spin(32)/\Z_2$ was discussed above \eqref{eq:inflowso32}. The case of $E_8 \times E_8$ is slightly more involved and we will discuss it now. The anomaly inflow of the NS5-brane was famously discussed in~\cite{Horava:1996ma} where the limit in which an instanton in $E_8 \times E_8$ becomes point-like was matched to the world volume theory of the NS5 brane at strong coupling. For our purposes, we can ignore strong coupling dynamics and focus on matching a 6-dimensional anomaly theory of chiral fermions to $(X_8)_{E_8 \times E_8}$ which can be read off from \eqref{eq:anompolye8e8}. This guarantees that $(X_8)_{E_8 \times E_8}$ is an integer and that local anomalies cancel in 10d. We now detail how this can be done.

One can show that $(X_8)_{E_8 \times E_8}$ can be decomposed as follows:
\begin{equation}\label{eq:X8E8E8}
    (X_8)_{E_8 \times E_8} =  \frac{(c^{(1)}_{\mathbf{16},2}-c^{(2)}_{\mathbf{16},2})^2}{32} + \frac{1}{24} X_4^2 + \mathcal{I}_{\text{SD}} + 2 \,\mathcal{I}_{\text{Dirac}}
\end{equation}
where $\mathcal{I}_{\text{SD}} + 2 \, \mathcal{I}_{\text{Dirac}}$ is the index of a self-dual form field and 2 fermion singlets in 8 dimensions. The index of a self-dual form field in 8 dimensions can be shown to be an integer over 8 \cite{Tachikawa:2021mby}. Indeed, it can be written in terms of the signature of the 8-manifold as follows \cite{Hsieh:2020jpj}:
\begin{equation}\label{eq:isd}
   \mathcal{I}_{\text{SD}} = -\frac{\sigma}{8}\,.
\end{equation}
On the other hand, the index of chiral fermions is always an integer. In order to simplify the first term in \eqref{eq:X8E8E8}, we can rewrite the Chern classes in an embedded $SU(2)$ subgroup of each $E_8$, which are known to be integer-valued. Therefore, on a twisted string manifold, $X_8$ reduces to:
\begin{equation}\label{eq:X8E8E82}
    X_8 =  \frac{( c^{(1)}_{\mathbf{2},2}- c^{(2)}_{\mathbf{2},2})^2}{8}  -\frac{\sigma}{8} + n\;\;\;\text{with}\;\; n \in \mathbb{Z}
\end{equation}
where $\frac{1}{2} c^{(i)}_{\mathbf{16},2} \xrightarrow{}  c^{(i)}_{\mathbf{2},2}  $ are the 2nd Chern classes in the fundamental of the SU(2) subgroup of the i-th $E_8$. The Bianchi identity $(X_4)_{E_8 \times E_8}=0$ gives us 
\begin{equation}
     c^{(2)}_{\mathbf{2},2}= -\frac{p_1}{2} -  c^{(1)}_{\mathbf{2},2}\,.
\end{equation}
Plugging this into \eqref{eq:X8E8E82}, we see that  the condition for anomalies to vanish comes down to showing that the following quantity is an integer: 
\begin{equation}\label{eq:E8E8proof}
     \frac{ ( c^{(1)}_{\mathbf{2},2})^2}{2} + \frac{ c^{(1)}_{\mathbf{2},2}  \;p_1}{4}+ \frac{p_1^2}{32}  -\frac{\sigma}{8}\,.
\end{equation}
As it happens, it was shown in~\cite{Hsieh:2020jpj} that the last two terms give (28 times) an integer. Indeed, one can show that:
\begin{equation}\label{eq:diractops}
   28 \; \mathcal{I}_{\text{Dirac}}=  \frac{p_1^2}{32}  -\frac{\sigma}{8}\, .
\end{equation}
Now, to show that the first two terms of \eqref{eq:E8E8proof} are an integer, one can note that on a twisted string manifold, $\frac{p_1}{2}$ is a characteristic vector of $H^4(X;\mathbb{R})$. This, in particular, means that:
\begin{equation}
    \frac{p_1}{2}c^{(i)}_{\mathbf{2},2} \mod{2} \;= \;(c^{(i)}_{\mathbf{2},2})^2 \mod{2}\;.
\end{equation}Therefore we have shown that on a twisted string manifold, $X_8$ is always an integer; and so there can never be an anomaly.

Given that the $X_8$ of $SO(16)\times SO(16)$ is a linear combination of those of $E_8\times E_8$ and $\mathit{Spin}(32)/\Z_2$, we can infer that $ (X_8)_{SO(16)^2}$ is an integer and so that all anomalies vanish for this non-supersymmetric theory. Nevertheless, for completeness, let us detail explicitly how $ (X_8)_{SO(16)^2}$ can be proven to be an integer. One can write $(X_8)_{SO(16)^2}$ as follows: 
\begin{align} \label{eq:X8so16}
  (X_8)_{SO(16)^2} & =   -\frac{1}{32}(c^{(1)}_{\mathbf{16}, 2}-c^{(2)}_{\mathbf{16}, 2})^2 - \mathcal{I}_{\text{SD}} - 4\, \mathcal{I}_{\text{Dirac}} \\ \; &-\frac{1}{48}(X_4)_{SO(16)^2} (c^{(1)}_{\mathbf{16}, 2}+c^{(2)}_{\mathbf{16}, 2}+ 3 p_1) + \mathcal{I}^{\mathbf{16}_{(1)}}_{\text{Dirac}}+ \mathcal{I}^{\mathbf{16}_{(2)}}_{\text{Dirac}}\nonumber
\end{align}
where $\mathcal{I}^{\mathbf{16}_{(i)}}_{\text{Dirac}}$ is the contribution of a fermion that transforms in the $\mathbf{16}$ of $SO(16)_i$, which is known to be integer-valued. Therefore, on a twisted string manifold, the cancellation of anomalies comes down to showing that the following quantity is an integer:
\begin{equation}
    -\frac{1}{32}(c^{(1)}_{\mathbf{16}, 2}-c^{(2)}_{\mathbf{16}, 2})^2 - \mathcal{I}_{\text{SD}} = -\frac{1}{32}(c^{(1)}_{\mathbf{16}, 2}-c^{(2)}_{\mathbf{16}, 2})^2 + \frac{\sigma}{8}
\end{equation}
Given that one can put the 2nd Chern classes in the SU(2) subgroup of $SO(16)$ as $c^{(i)}_{\mathbf{16}, 2} \to 2  c ^{(i)}_{\mathbf{2}, 2}$; the proof goes exactly as in the $E_8 \times E_8$ case.

One can sometimes read-off the chiral field content of a theory from the anomaly polynomial. For instance, for the $E_8 \times E_8$ case, the anomaly polynomial \eqref{eq:X8E8E8} suggests that the chiral field content of the NS5 brane is a self-dual form field and 2 fermion singlets. There are no chiral fields charged under the gauge group, since all the gauge-dependent parts of \eqref{eq:X8E8E8} are in the factorized piece. As it happens, this exactly the chiral field content of a 6d $(1,0)$ tensor multiplet, which is precisely the worldvolume field content of the NS5 brane in $E_8 \times E_8$ string theory. This answer is essentially determined by anomalies together with supersymmetry. In the non-supersymmetric case of the $SO(16)^2$ string, reading off the chiral field content from \eqref{eq:X8so16} in the same way suggests that the chiral field content of the $SO(16)^2$ NS5 brane is:
\begin{itemize}
    \item Four fermion singlets,
    \item A fermion transforming in the $(\mathbf{16},\mathbf{1})\oplus(\mathbf{1},\mathbf{16})$ of $SO(16)$,
    \item A self-dual 2-form field.
\end{itemize}
There are some subtleties in assessing whether or not these are truly the chiral degrees of freedom propagating on this non-supersymmetric brane. First of all, there is no supersymmetry to constrain the worldvolume theory of the NS5 brane which can therefore carry any kind of chiral degrees of freedom. As we have seen from \eqref{eq:diractops} and \eqref{eq:isd}, indices can sometimes be exchanged for one another and yet give the same integer. This means that the $X_8$ does not completely fix the worldvolume content of the NS5 brane, and that any chiral field content with the same anomaly as the one proposed above remains a possibility. Another reason why we cannot be sure that \eqref{eq:X8so16} correctly describes the degrees of freedom propagating on the NS5 brane is that we cannot be sure that an NS5 brane (understood as a small instanton where the full spacetime gauge group symmetry gets restored) exists to begin with. Unlike in the supersymmetric case, in general we expect that the size modulus of the instanton, being non-supersymmetric, receives a potential due to quantum effects that may lead to the small instanton limit being obstructed. The study of the strong coupling effects near the small instanton limit is beyond the validity of effective field theory, and thus beyond the scope of this paper (although it may be amenable to a version of the constructions in \cite{Kaidi:2023tqo}), but we point out that, if the limit does exist and the small instanton transition does survive, the transition point would be a natural place to look for a non-supersymmetric interacting CFT, a cousin of the $E_8$  SCFT. It would be interesting to explore this further. On the other hand, studying the anomaly inflow on the worldvolume of the puffed-up NS5 brane instanton is accessible within the effective field theory (see fig.\ref{fig:puffed-up_brane}). We do this explicitly in the next section.

\begin{figure}[ht!]
    \centering
    \includegraphics[scale=0.25]{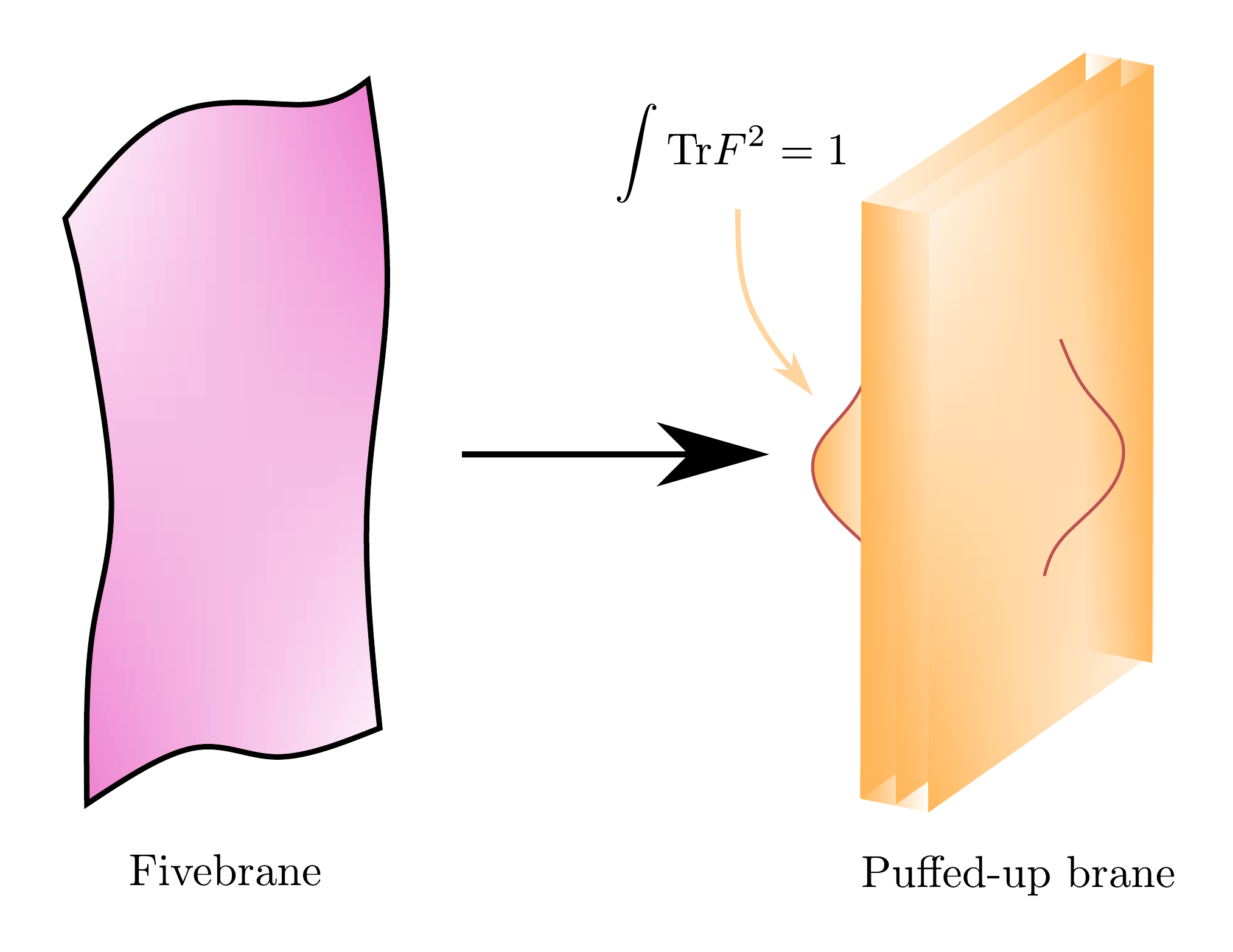}
    \caption{A sketch of a fivebrane puffing up into an instanton which can be described within the effective field theory.}
    \label{fig:puffed-up_brane}
\end{figure}

\subsubsection{Anomaly inflow on puffed-up fivebrane instantons}\label{sec:SO16NS5}

The above result shows that there are no Dai-Freed anomalies on factorized backgrounds of the form $S^3_H \times M_8$, since $X_8$ integrates to an integer. Anomaly inflow on fivebranes dictates that $X_8$ be the anomaly polynomial associated to the worldvolume theory on a single fivebrane, possibly up to terms that vanish when the Bianchi identity is satisfied. As explained above, without direct access to the relevant degrees of freedom on the fivebrane worldvolume, studying the anomaly inflow on the point-like NS5 brane is impossible. Luckily, one can still examine the anomaly inflow on the worldvolume of puffed-up fivebrane instantons, which can be described in the low-energy approximation. Puffing-up the fivebrane corresponds to delocalizing it along its transverse dimensions, making it look like a four dimensional gauge instanton.

Introducing an instanton Higgses one of the $SO(16)$ factors, say the first $SO(16)^{(1)}$, according to $SO(16) \to SU(2)^{(a)} \times SU(2)^{(b)} \times SO(12)$, so that the vector and spinor representations branch into

\begin{subequations}
\begin{align}\mathbf{16}&= (\mathbf{1^{(a)},1^{(b)},12})+ (\mathbf{2^{(a)},2^{(b)},1}) \, ,\\
\mathbf{128}&=(\mathbf{2^{(a)},1^{(b)},32})+ (\mathbf{1^{(a)},2^{(b)},\bar{32}}) \, .\end{align}\end{subequations}

If the instanton bundle only involves $SU(2)^{(a)}$, there an $SU(2)^{(b)}\times SO(12)\times SO(16)^{(2)}$ unbroken symmetry and the background has fermion zero modes (fzm) arising from the representations $(\mathbf{8_s,16^{(1)},16^{(2)}}) $ and $ (\mathbf{8_c,128^{(1)},1^{(2)}})$ of the spacetime isometries and the original gauge group. As a result, one has
\begin{equation}
    \begin{cases}
        1 \text{ fzm} \; \text{in the rep}\; (\mathbf{8_s,2^{(b)},1,16^{(2)}}) \, ,\\
        1 \text{ fzm} \;  \text{in the rep}\; (\mathbf{8_c,1^{(b)},32,1^{(2)}})\,.\\
    \end{cases}
\end{equation}
The two types of fermion zero modes have different chirality, and thus the corresponding worldvolume anomaly polynomial reads\begin{equation}
    P_8 = \frac{1}{2}[\hat A(R)(\text{ch}(F)_{(\mathbf{2^{(b)},1,16^{(2)}})}- \text{ch}(F)_{(\mathbf{1^{(b)},32,1^{(2)}})})]_8 \, ,
\end{equation} 
which evaluates to
\begin{equation}\begin{gathered}\label{eq:P8SO16}
    P_8= \frac{1}{24} \big[-2 p_1 c_{\mathbf{12},2}+p_1 (c^{(2)}_{\mathbf{16},2}+8 c^{(b)}_{\mathbf{2},2})-c_{\mathbf{12},2}^2-4 c_{\mathbf{12},4}\\+2 (c^{(2)}_{\mathbf{16},2}+2 c^{(b)}_{\mathbf{2},2}) (c^{(2)}_{\mathbf{16},2}+4 c^{(b)}_{\mathbf{2},2})-4 c^{(2)}_{\mathbf{16},4}\big] \, .
\end{gathered}\end{equation}
The next step is to evaluate $X_8$ on this background. This amounts to decomposing characteristic classes according to the branching rules, and one finds
\begin{equation}\label{eq:rulesSO161}
    \begin{gathered}
        c^{(1)}_{\mathbf{16},2}\to c_{\mathbf{12},2}+2 c^{(a)}_{\mathbf{2},2}+2 c^{(b)}_{\mathbf{2},2} \, , \\
        c^{(1)}_{\mathbf{16},4}\to 2 c_{\mathbf{12},2} c^{(a)}_{\mathbf{2},2}+2 c_{\mathbf{12},2} c^{(b)}_{\mathbf{2},2}+c_{\mathbf{12},4}-2 c^{(a)}_{\mathbf{2},2} c^{(b)}_{\mathbf{2},2}+(c^{(a)}_{\mathbf{2},2})^2+(c^{(b)}_{\mathbf{2},2})^2 \, .
    \end{gathered}
\end{equation}
Finally, $(c^{(a)}_{\mathbf{2},2})^2$ should be replaced by zero, since it is proportional to the square of the worldvolume current $\delta(W)$ of the fivebrane. All in all, when the dust settles one arrives at
\begin{equation}\label{X8P8}
    X_8 - P_8 = \frac{1}{24} \left(2 c_{\mathbf{12}, 2}-c^{(2)}_{\mathbf{16}, 2}- 8 c^{(b)}_{\mathbf{2},2}\right) \left(c_{\mathbf{12}, 2}+ 2 c^{(b)}_{\mathbf{2},2}+c^{(2)}_{\mathbf{16}, 2}+p_1\right) \, ,
\end{equation}
where the second factor corresponds to $X_4$ for the unbroken piece of the gauge group. The inflow therefore works when $X_8$ and $P_8$ are equal on manifolds where the Bianchi identity holds. A similar argument works for more general choices of instanton bundles.

For the orientifold models, one expects the small limit of the ``fat'' fivebrane instantons to yield the worldvolume degrees of freedom of D5-branes. This is a nice crosscheck that we detail now. For the Sugimoto model (the calculation is identical in the Sagnotti model), the anomaly polynomial $P_8$ associated to the fermion zero modes of the instanton is
\begin{equation}\label{eq:P8SUgi}
    P_8 = [\hat A(R) \text{ch}(F)_{\mathbf{30}}]_8 = \frac{1}{192} \left(8 p_1 c_{\mathbf{30},2 }-4 \left(8 c_{ \mathbf{30},4}+p_2\right)+16 c_{\mathbf{30},2}^2+7 p_1^2\right) \, ,
\end{equation}
since under the branching $\Sp(16) \to \SU(2) \times \USp(30)$ the adjoint representation, containing the gauginos, decomposes according to
\[\mathbf{495}=(\mathbf{2,30})+(\mathbf{1,434})+(\mathbf{1,1})\]
where the only charged contribution comes from the first term on the right-hand side. In the small limit, the $SU(2)$ Chern classes vanish and the remaining ones are enhanced to $\Sp(16)$ classes, ending up with
\begin{equation}
     P_8^\text{small} =  \frac{1}{192} \left(8 p_1 c_{\mathbf{32},2 }-4 \left(8 c_{ \mathbf{32},4}+p_2\right)+16 c_{\mathbf{32},2}^2+7 p_1^2\right) = X_8 + \frac{1}{24} \, X_4 \, p_1 ,
\end{equation}
Thus reproducing the anomaly polynomial of a D5-brane worldvolume up to terms that vanish on the allowed backgrounds.

\section{Anomalies and bordism for the swap \texorpdfstring{$\Z_2$}{Z2} action}\label{sec:swapping}
\subsection{Overview and the bordism computation}
The $E_8\times E_8$ heterotic string theory has a $\Z_2$ symmetry given by swapping the two copies of $E_8$, so it
is possible to expand the gauge group of the theory to $(E_8\times E_8)\rtimes\Z_2$. To our knowledge, this fact
first appears in~\cite[\S I]{McI99} (see also~\cite[\S 2.1.1]{dBDH00}). The question of anomaly cancellation for
this string theory is completely different in the absence versus in the presence of this extra $\Z_2$: without it,
the anomaly is known to vanish, as Witten~\cite[\S 4]{Wit86} showed it is characterized by a bordism invariant
$\Omega_{11}^\Spin(BE_8)\to\C^\times$, and Stong~\cite{Sto86} showed $\Omega_{11}^\Spin(BE_8)\cong 0$. But with the
$\Z_2$ swapping symmetry turned on, the relevant bordism group has order $64$~\cite[Theorem 2.62]{Deb23} courtesy
of a harder computation; even though we cannot determine this group exactly, we will show that the anomaly vanishes, in accordance with the results in \cite{Yam23b} obtained from a worldsheet perspective.

In this section, we discuss a closely analogous story for the $\Spin(16)\times\Spin(16)$ non-supersymmetric
heterotic string. The gauge group $\Spin(16)\times_{\Z_2}\Spin(16)$ (where the diagonal $\Z_2$ we quotient by
corresponds to either of the subgroups in each $\Spin(16)$ whose quotient is \emph{not} $\mathit{SO}(16)$) admits a $\Z_2$
automorphism switching the two $\Spin(16)$ factors, enlarging the gauge group of this theory to
$(\Spin(16)\times_{\Z_2}\Spin(16))\rtimes\Z_2$; see~\cite[\S III]{McI99}.

In this paper, we chose to work with $\Spin(16)\times\Spin(16)$, which simplifies the bordism computations at the
expense of applying to only some backgrounds. The $\Z_2$ symmetry enlarges the gauge group to $G_{16,16} :=
(\Spin(16)\times\Spin(16))\rtimes\Z_2$. The Green-Schwarz mechanism is analogous: if $x\in H^*(B\Spin(16); A)$ for
some coefficient group $A$, let $x^L$ and $x^R$ denote the copies of $x$ in $H^*(B(\Spin(16)\times\Spin(16)); A)$
coming from the first, resp.\ second copies of $\Spin(16)$ via the Künneth formula. Then the class $\tfrac 12p_1^L
+ \tfrac 12 p_1^R$, which was the characteristic class of the Green-Schwarz mechanism in the absence of the $\Z_2$
symmetry, descends through the Serre spectral sequence for the fibration
\begin{equation}\label{G1616fib}\begin{tikzcd}
	{B(\Spin(16)\times\Spin(16))} & {BG_{16,16}} \\
	& {B\Z_2}
	\arrow[from=1-1, to=1-2]
	\arrow[from=1-2, to=2-2]
\end{tikzcd}\end{equation}
to define a class in $H^*(BG_{16,16};\Z)$, and the Green-Schwarz mechanism asks, on a spin manifold $M$ with a
principal $G_{16,16}$-bundle $P\to M$, for a trivialization of
\begin{equation}
\label{Z2_GS}
	\tfrac 12 p_1(M) - (\tfrac 12 p_1^L + \tfrac 12 p_1^R)(P).
\end{equation}
Let $\mathbb G_{16,16}$ denote the Lie $2$-group corresponding to this data, i.e.\ the string cover of $\Spin\times
G_{16,16}$ corresponding to the class~\eqref{Z2_GS}. Quotienting by $G_{16,16}$ defines a map to $\Spin$ and
therefore a tangential structure in the usual way; a $\mathbb G_{16,16}$-structure on a vector bundle $E\to M$ is a
spin structure on $E$, a double cover $\pi\colon M'\to M$, a pair of rank-$16$ spin vector bundles $V^L$ and $V^R$
on $M'$ identified under the deck transformation of $M'$, and a trivialization of $\tfrac 12 p_1(E) - (\tfrac 12
p_1(V^L) - \tfrac 12 p_1(V^R))$ (the class $\tfrac 12 p_1(V^L) + \tfrac 12 p_1(V^R)$ descends from $M'$ to $M$). If
the double cover $M'\to M$ is trivial, this is equivalent to a $\Spin\text{-}\Spin(16)^2$ structure as defined in \S\ref{SO16_no_swap}.
\begin{thm}
\begin{alignat*}{2}
	\Omega_0^{\mathbb G_{16,16}} &\cong \Z \qquad\qquad &\Omega_6^{\mathbb G_{16,16}} &\cong \Z_2\\
	\Omega_1^{\mathbb G_{16,16}} &\cong (\Z_2)^{\oplus 2} &\Omega_7^{\mathbb G_{16,16}} &\cong \Z_{16}\\
	\Omega_2^{\mathbb G_{16,16}} &\cong (\Z_2)^{\oplus 2} &\Omega_8^{\mathbb G_{16,16}} &\cong \Z^{\oplus 3} \oplus
		(\Z_2)^{\oplus i}\\
	\Omega_3^{\mathbb G_{16,16}} &\cong \Z_8 &\Omega_9^{\mathbb G_{16,16}} &\cong (\Z_2)^{\oplus j}\\
	\Omega_4^{\mathbb G_{16,16}} &\cong \Z\oplus\Z_2 &\Omega_{10}^{\mathbb G_{16,16}} &\cong (\Z_2)^{\oplus k}\\
	\Omega_5^{\mathbb G_{16,16}} &\cong 0 &\Omega_{11}^{\mathbb G_{16,16}} &\cong A,
\end{alignat*}
where either $i = 1$, $j = 4$, and $k = 4$, or $i = 2$, $j= 6$, and $k = 5$, and $A$ is an abelian group of order
$64$ isomorphic to one of $\Z_8\oplus\Z_8$, $\Z_{16}\oplus\Z_4$, $\Z_{32}\oplus\Z_2$, or $\Z_{64}$.
\end{thm}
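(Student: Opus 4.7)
The plan is to adapt the techniques used for $\Omega_*^{\String\text{-}\Spin(16)^2}$ in \S\ref{SO16_no_swap} and the closely analogous $(E_8\times E_8)\rtimes\Z_2$ computation~\cite[Theorem 2.62]{Deb23}, working prime-by-prime with the Adams spectral sequence and a Thom-spectrum reformulation. First, I would simplify the tangential structure: the inclusion $\Spin(16)\hookrightarrow\Spin$ induces a $15$-connected map of classifying spaces compatible with both the $\Z_2$ swap action and with $\tfrac 12 p_1^L + \tfrac 12 p_1^R$, so through degree $15$ I may replace $\mathbb G_{16,16}$ by the string cover $\widehat G$ of $\Spin\times ((\Spin\times\Spin)\rtimes\Z_2)$ corresponding to the class~\eqref{Z2_GS}. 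As in the proof of \cref{noswap}, additivity of $\tfrac 12 p_1$ reinterprets a $\widehat G$-structure on $E\to M$ as a double cover $\pi\colon M'\to M$, a virtual spin bundle $V$ on $M'$ (which by pushforward recovers the rank-$32$ bundle $V^L\oplus V^R$ on $M$), and a string structure on $E-\pi_* V$. By Pontrjagin-Thom and the isomorphism $\Omega_*^{\String}\to \tmf_*$ in this range~\cite[Theorem 2.1]{Hil09}, the bordism group in question becomes $\tmf_*$ of the Thom spectrum $(BG_{16,16})^{-V^L-V^R}$.

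Next I would attack each prime. At primes $p\ge 5$, the Serre spectral sequence for the fibration~\eqref{G1616fib}, combined with the vanishing of $\widetilde H^*(B\Z_2;\Z[1/6])$, identifies $H^*(BG_{16,16};\Z[1/6])$ with the swap-invariants of $H^*(B\Spin(16)\times B\Spin(16);\Z[1/6])$; this is free and concentrated in even degrees, so the Atiyah-Hirzebruch spectral sequence for $\tmf$-homology of the Thom spectrum collapses and contributes no $p$-torsion below degree $12$. At $p=3$, I would use the Henriques-Hill change-of-rings Adams spectral sequence $E_2=\Ext_{\cA^{\tmf}}(H^*(X;\Z_3),\Z_3)$; the input is $H^*(BG_{16,16};\Z_3)$, computed by the same Serre spectral sequence as at large primes, together with the action of $\cP^1$ on the symmetric functions in the mod-$3$ Pontrjagin classes via the Shay-Sugawara formulas. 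The relevant $\cA^{\tmf}$-modules should decompose into copies of $N_3$ and $C\nu$ parallel to \S\ref{ss:sugi}, whose Ext is already known from \cref{N2_fig}.

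The main work concentrates at $p=2$. Here one first computes $H^*(BG_{16,16};\Z_2)$ via the Serre spectral sequence for~\eqref{G1616fib}: $H^*(B\Z_2;\Z_2)\cong \Z_2[x]$ acts on the swap-invariants of $H^*(B\Spin(16)^2;\Z_2)$, and the Kudo transgression theorem together with the Wu formula computes the differentials and the $\cA$-action on the $E_\infty$-page, up to filtration ambiguities that can be resolved by naturality under sections such as $B\Spin(16)\hookrightarrow BG_{16,16}$ (diagonal) and $B\Z_2\hookrightarrow BG_{16,16}$. After Thomifying and restricting to $\cA(2)$, the cohomology in degree $\le 12$ should decompose into the building blocks already seen in \S\ref{ss:sugi} and \S\ref{ss:sagnotti} ($M_4$, question-marks $Q$, $C\nu$-type modules) together with new summands built from powers of $x$. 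One then runs the $\tmf$-homology Adams spectral sequence, resolving the differentials by a combination of degree reasons, $h_0/h_1$-linearity, and comparison to the no-swap spectral sequence of \cref{noswap} and the $(E_8\times E_8)\rtimes\Z_2$ spectral sequence of~\cite[Theorem 2.62]{Deb23}.

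The hard part will be precisely this $p=2$ Adams computation. The ambiguities $i\in\{1,2\}$, $j\in\{4,6\}$, $k\in\{4,5\}$, and $A\in\{\Z_8^2,\Z_{16}\oplus\Z_4,\Z_{32}\oplus\Z_2,\Z_{64}\}$ reflect exactly the handful of $d_r$ differentials and additive extensions that cannot be pinned down by algebraic means alone. Their resolution would require explicit manifold representatives for the generators (analogous to the $\HP^2$ and $S^8$ representatives constructed in \S\ref{ss:sugi}) on which characteristic numbers and $\eta$-invariants can be evaluated; such geometric constructions are delicate in the presence of the swap $\Z_2$, as the generators are typically Bieberbach-type quotients rather than products. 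That the total order $|\Omega_{11}^{\mathbb G_{16,16}}|=64$ is unambiguous in spite of this should follow from the fact that the Adams $E_\infty$-page is forced by $h_0$-linearity and degree reasons to have this cardinality, as in the parallel supersymmetric calculation.
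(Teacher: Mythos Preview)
Your overall architecture is correct and parallels the paper: replace $\Spin(16)$ by $\Spin$ through degree $15$, reformulate as string bordism of a Thom spectrum over $BG_{16,16}$, pass to $\tmf$, and run the Adams spectral sequence prime-by-prime. The paper executes a few steps differently, and two of these differences are substantive enough to flag.

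First, at $p=2$ the paper does not compute $H^*(BG_{16,16};\Z_2)$ via the Serre spectral sequence and Kudo transgression; instead it invokes Evens' generalization of Nakaoka's theorem for the cohomology of wreath products, which immediately gives an explicit additive basis (classes $c^L+c^R$, products such as $w_4^Lw_4^R$, and powers of $x\in H^1(B\Z_2)$). The resulting $\cA(2)$-module decomposition (\cref{mathcalM}) consists of nine pieces, none of which are the $M_4$, $Q$, or $C\nu$-type modules you anticipate; the dominant new summand is $\widetilde H^*(\RP^\infty;\Z_2)$. The paper also splits off an $\MTSpin$ summand from the Thom spectrum (\cref{spin_split}) using a canonical $\mathbb G_{16,16}'$-structure on spin bundles, which removes two of these nine modules from the Adams computation outright; you do not mention this splitting, and without it the Ext calculation is appreciably harder.

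Second, your account of what remains undetermined is slightly off. Differentials are resolved by comparison to the $E_8\times E_8$ swap spectral sequence via the map $\tfrac 12 p_1\colon B\Spin(16)\to K(\Z,4)$ (as you suggest), but in addition the paper constructs an explicit $11$-manifold $Y_{11}$ --- an $(S^4\times S^4)$-bundle over $\RP^3$ carrying the tautological quaternionic line bundle and its dual, not a Bieberbach quotient --- on which $\int w_4^Lw_4^Rx^3=1$, forcing the filtration-zero class $d\in E_2^{0,11}$ to survive. This is what fixes $|\Omega_{11}|=64$; the remaining ambiguity in $A$ is purely an extension problem, while the $(i,j,k)$ ambiguity in degrees $8$--$10$ stems from a single undetermined differential $d_2(c)$.
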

The fact that $\Omega_{11}^{\G_{16,16}}\ne 0$ implies that the $\Spin(16)\times\Spin(16)$ heterotic theory with its $\Z_2$ swapping symmetry could have an anomaly; we will nevertheless be able to cancel it later in this Section.
\begin{proof}
The proof is nearly identical to the analogous calculation for the $E_8\times E_8$ heterotic string, which is done
in~\cite[\S 2.2, \S 2.3]{Deb23}; therefore we will be succinct and direct the reader there for the details.

Let $V\to B\Spin(16)\times B\Spin(16)$ be the direct sum of the tautological vector bundles on the two factors. The $\Z_2$ swapping action on $B\Spin(16)\times B\Spin(16)$ lifts to make $V$ into a $\Z_2$-equivariant vector bundle, so $V$ descends to a vector bundle we will also call $V$ over $BG_{16,16}$. Since the action of $\Z_2$ is compatible with the spin structures on the two tautological bundles, $V\to BG_{16,16}$ is spin, so $w_1(V) = 0$ and $w_2(V) = 0$; and essentially by definition, $\tfrac 12 p_1(V) = \tfrac 12 p_1^L + \tfrac 12 p_1^R$. Therefore just as for the other theories we studied, there is an isomorphism
\begin{equation}
    \Omega_*^{\G_{16,16}}\overset\cong\to \Omega_*^\String((BG_{16,16})^{V - 32}).
\end{equation}
This is the biggest difference between the computations for the $\Spin(16)\times\Spin(16)$ and $E_8\times E_8$ theories: see~\cite[Lemma 2.2]{Deb23}. Much of the theory developed in~\cite[\S 2]{Deb23} and in~\cite{DY23} and applied to the $E_8\times E_8$ theory in \textit{loc.\ cit.} can therefore be avoided for the $\Spin(16)\times\Spin(16)$ case; nevertheless, the calculation is pretty similar.

First we must establish the absence of $p$-torsion for primes $p > 3$. This is analogous to the other twisted
string bordism computations in this paper, and we do not go into detail.

At $p = 3$, we follow~\cite[\S 3.2]{DY23}. First we need $H^*(BG_{16,16};\Z_3)$; the Serre spectral sequence for $\Z_3$ cohomology and the fibration~\eqref{G1616fib} collapses to an isomorphism
\begin{equation}
    H^*(BG_{16,16};\Z_3) \overset\cong\longrightarrow H^*(B(\Spin(16)\times \Spin(16));\Z_3)^{\Z_2}.
\end{equation}
In the degrees relevant to us, $H^*(B\Spin(16);\Z_3)$ is generated by the Pontrjagin classes $p_1$ and $p_2$ with no relations in degrees $11$ and below, so we obtain the following additive basis for $H^*(BG_{16,16};\Z_3)$ in degrees $11$ and below: $1$, $p_1^L + p_1^R$, $(p_1^L)^2 + (p_1^R)^2$, $p_2^L = p_2^R$, and $p_1^Lp_1^R$. Using this, we determine the $\cA^{\tmf}$-module structure on $H^*((BG_{16,6})^{V - 32};\Z_3)$ using~\cite[Corollary 2.37]{DY23}: if $U$ denotes the Thom class, $\beta(U) = 0$ and $\cP^1(U) = -U(p_1^L + p_1^R)$ (as $\tfrac 12 x = -x$ in a $\Z_3$-vector space). Using this and the Cartan formula, we find an $\cA^{\tmf}$-module isomorphism
\begin{equation}
\label{swap_SO16_mod3}
    H^*((BG_{16,16})^{V -32};\Z_3) \cong \textcolor{BrickRed}{N_3} \oplus \textcolor{Green}{\Sigma^8 N_3} \oplus \textcolor{MidnightBlue}{\Sigma^8 N_3} \oplus P,
\end{equation}
where $N_3$ is as in \cref{N3_defn} and $P$ is concentrated in degrees $12$ and above, and will be irrelevant for us. We draw~\eqref{swap_SO16_mod3} in \cref{SO_16_swap_3primary_figure}, left. Using the calculation of $\Ext_{\cA^{\tmf}}(N_3)$ from \cref{N2_fig}, we can draw the $E_2$-page of the Adams spectral sequence in \cref{SO_16_swap_3primary_figure}, right; it collapses to show there is no $3$-torsion in degrees $11$ and below.
\begin{figure}[!htbp]
\centering
\begin{subfigure}[c]{0.35\textwidth}
  \begin{tikzpicture}[scale=0.6, every node/.style = {font=\tiny}]
    \foreach \y in {0, 4, ..., 16} {
      \node at (-2, \y/2) {$\y$};
    }
    \begin{scope}[BrickRed]
      \tikzpt{0}{0}{$U$}{};
      \tikzpt{0}{2}{}{};
	  \tikzpt{0}{4}{}{};
      \PoneL(0, 0);
      \PoneL(0, 2);
    \end{scope}
    \begin{scope}[Green]
      \tikzptB{2}{4}{$U(p_2^L + p_2^R)$}{};
      \tikzpt{2}{6}{}{};
      \tikzpt{2}{8}{}{};
      \PoneL(2, 4);
	  \PoneL(2, 6);
    \end{scope}
    \begin{scope}[MidnightBlue]
      \tikzptB{4}{4}{$Up_1^Lp_1^R$}{};
      \tikzpt{4}{6}{}{};
      \tikzpt{4}{8}{}{};
      \PoneL(4, 4);
	  \PoneL(4, 6);
    \end{scope}
  \end{tikzpicture}
\end{subfigure}
\begin{subfigure}[!htbp]{0.6\textwidth}
\includegraphics[width=\textwidth]{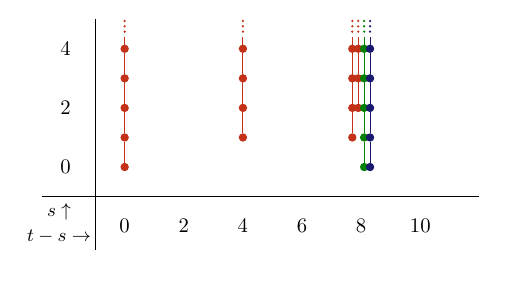}
\end{subfigure}
\caption{Left: the $\cA^{\tmf}$-module structure on $H^*((BG_{16,16})^{V-32};\Z_3)$ in low degrees; the pictured submodule contains
all elements in degrees $11$ and below. Right: the $E_2$-page of the Adams spectral sequence computing
$\tmf_*((BG_{16,16})^{V-32})_3^\wedge$.}
\label{SO_16_swap_3primary_figure}
\end{figure}

Finally $p = 2$. First, we need $H^*(BG_{16,16};\Z_2)$; Evens' generalization~\cite{Eve65} of a theorem of
Nakaoka~\cite[Theorem 3.3]{Nak61} gives us the following additive basis for these cohomology groups in degrees $13$
and below:
\begin{itemize}
	\item classes of the form $c^L + c^R$, where $c$ ranges over a basis of $H^*(B\Spin(16);\Z_2)$ in degrees $13$
	and below;
	\item the classes $w_4^Lw_4^R$, $w_6^Lw_6^R$, $w_4^Lw_k^R + w_k^Lw_4^R$ for $k = 6,7,8$, and $(w_4^L)^2w_4^R +
	w_4^L(w_4^2)^R$; and
	\item finally, we have classes of the form $x^m$, $w_4^Lw_4^R x^m$, and $w_6^Lw_6^R x^m$, where $x$ is the generator of
	$H^1(B\Z_2;\Z_2)$, pulled back by the quotient $G_{16,16}\to\Z_2$ by the normal $\Spin(16)\times\Spin(16)$
	subgroup.
\end{itemize}
Quillen's detection theorem~\cite[Proposition 3.1]{Qui71} computes the $\cA(2)$-action on these classes. Since $V$ has vanishing $w_1$ and $w_2$, but $w_4(V) = w_4^L + w_4^R$, $\Sq^1(U) = 0$,
$\Sq^2(U) = 0$, and $\Sq^4(U) = U(w_4^L + w_4^R)$. Using this, we can obtain $\cA(2)$-module structure on
$H^*((BG_{16,16})^{V - 32};\Z_2)$ by direct computation with the Cartan formula similarly to~\cite[Proposition 2.41]{Deb23}.
\begin{prop}
\label{mathcalM}
Let $\mathcal M$ be the quotient of $H^*((BG_{16,16})^{V - 32};\Z_2)$ by all elements in degrees $14$ and above.
Then $\mathcal M$ is the direct sum of the following submodules.
\begin{enumerate}
	\item $M_1$, the summand containing $U$.
	\item $\textcolor{BrickRed}{M_2} := \widetilde H^*(\RP^\infty;\Z_2)$ (modulo elements in degrees $14$ and
	above).
	\item $\textcolor{RedOrange}{M_3}$, the summand containing $U((w_4^L)^2 + (w_4^R)^2)$.
	\item $\textcolor{Goldenrod!67!black}{M_4}$, the summand containing $Uw_4^Lw_4^R$.
	\item $\textcolor{Green}{M_5}$, the summand containing $Uw_4^Lw_4^Rx$.
	\item $M_6$, the summand containing $U(w_4^Lw_6^L + w_4^Rw_6^R)$.
	\item $\textcolor{PineGreen}{M_7}$, the summand containing $U((w_4^L)^2w_4^R + w_4^L(w_4^R)^2)$.
	\item $\textcolor{MidnightBlue}{M_8}$, the summand containing $U(w_4^Lw_8^L + w_4^Rw_8^R)$.
	\item $\textcolor{Fuchsia}{M_9}$, the summand containing $U(w_4^Lw_8^R + w_8^Lw_4^R)$.
\end{enumerate}
\end{prop}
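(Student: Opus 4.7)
The plan is to verify the decomposition by a direct calculation of the $\cA(2)$-action on the Thom-twisted cohomology, starting from the data already assembled in the excerpt. Concretely, the Thom isomorphism identifies $H^*((BG_{16,16})^{V-32};\Z_2)$ with the free $H^*(BG_{16,16};\Z_2)$-module on a generator $U$ of degree $0$, and the $\cA(2)$-action on this module is determined by the Cartan formula together with the three boundary conditions $\Sq^1(U)=0$, $\Sq^2(U)=0$, and $\Sq^4(U)=U(w_4^L+w_4^R)$, plus the $\cA(2)$-action on $H^*(BG_{16,16};\Z_2)$ itself, which is computed from Quillen's detection theorem and the Wu formula applied to the Stiefel-Whitney classes listed in the Nakaoka-Evens basis.

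First I would tabulate a $\Z_2$-basis of $H^*((BG_{16,16})^{V-32};\Z_2)$ in degrees $\le 13$ by multiplying $U$ with each element of the additive basis listed just above the proposition, grouping the basis elements by ``type'' (purely symmetric in $L$ and $R$, mixed products such as $w_4^Lw_4^R$, and classes involving the characteristic class $x$ from the base $\Z_2$ quotient). Then for each proposed generator in the statement (namely the nine highlighted classes, one per $M_i$), I would compute the $\cA(2)$-submodule it generates by repeatedly applying $\Sq^1$, $\Sq^2$, and $\Sq^4$, using the Cartan formula. For instance, $\Sq^4(Uw_4^Lw_4^R)$ unpacks into three terms via Cartan, one of which picks up the $\Sq^4(U)=U(w_4^L+w_4^R)$ factor, producing contributions that live in the $M_4$ summand (and not, say, $M_3$ or $M_7$), because the Wu-formula actions on $w_4^L w_4^R$ preserve the monomial structure in the $L,R$ splitting.

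Once the individual submodules $M_1,\dots,M_9$ are described explicitly up to degree $13$, I would verify the direct sum decomposition $\mathcal M = \bigoplus_i M_i$ by a dimension count degree-by-degree: the Poincaré series of $\mathcal M$ in degrees $\le 13$ follows from the Evens-Nakaoka basis and the Thom isomorphism, while the Poincaré series of each $M_i$ is read off from the list of Steenrod-generated classes in the previous step. To confirm that the submodules intersect trivially, I would observe that the monomials generating distinct $M_i$ are supported on disjoint subsets of the Evens-Nakaoka basis (for example, classes with an $x$-factor are confined to $M_2$ and $M_5$, while $M_3,\dots,M_9$ live in the $x=0$ subring). The direct analogy with the Adams-chart decomposition carried out for the $E_8\times E_8$ case in~\cite[Proposition~2.41]{Deb23} provides a template: the same schema of ``symmetric part, mixed part, $B\Z_2$-part, degree-shifted copies'' reappears here, with $w_4$ replacing the role of the single Euler-type class $a$.

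The main obstacle will be bookkeeping rather than conceptual: ensuring that the Cartan-formula expansions close up correctly modulo elements of degree $\ge 14$, and in particular that no $\Sq^i$ applied to a named generator produces a class which straddles two of the $M_j$'s. The subtle points are the triple products like $(w_4^L)^2 w_4^R + w_4^L (w_4^R)^2$ and the asymmetric mixed classes $w_4^L w_8^R + w_8^L w_4^R$, where one must track carefully which $\Sq^4$'s land where; a convenient check is that the $\Z_2$-swap symmetry acts on $\mathcal M$ and on each candidate $M_i$, so any action producing an unsymmetric output signals an error. Once all of this is verified, the proposition follows immediately by comparing the list of generators and the degree-wise count against $\mathcal M$.
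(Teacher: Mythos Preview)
Your proposal is correct and follows exactly the approach the paper takes: the paper's justification is simply ``direct computation with the Cartan formula similarly to~\cite[Proposition 2.41]{Deb23},'' and your outline (tabulate the Evens--Nakaoka basis times $U$, repeatedly apply $\Sq^1,\Sq^2,\Sq^4$ via Cartan using $\Sq^4(U)=U(w_4^L+w_4^R)$, then check dimensions) is precisely that computation. One small slip: where you write ``$M_3,\dots,M_9$ live in the $x=0$ subring,'' recall that $M_5$ is generated by $Uw_4^Lw_4^Rx$ and so does carry $x$-factors; otherwise the disjoint-support observation is sound.
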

We draw this decomposition in \cref{A2_swap}.

\begin{figure}[!htbp]
\centering
 \begin{tikzpicture}[scale=0.6, every node/.style = {font=\tiny}]
    \foreach \y in {0, 1, ..., 14} {
      \node at (-2, \y) {$\y$};
    }
    \begin{scope} 
        \clip (-1.5, -1) rectangle (1.5, 13.75);
        \tikzptB{0}{0}{$U$}{};
        \foreach \y in {4, 6, 7, 8, 10, 11, 12, 13, 14} {
            \tikzpt{0}{\y}{}{};
        }
        \sqfourL(0, 0);
        \sqtwoL(0, 4);
        \sqone(0, 6);
        \sqfourL(0, 6);
        \sqfourR(0, 7);
        \sqtwoL(0, 8);
        \sqfourRtwo(0, 8);
        \sqone(0, 10);
        \sqtwoL(0, 11);
        \sqtwoR(0, 12);
        \sqone(0, 12);
        \sqfourL(0, 13);
        \sqone(0, 14); 
    \end{scope}
    \begin{scope}[BrickRed] 
        \clip (0.5, 0) rectangle (3.5, 13.75);
        \tikzptB{2}{1}{$Ux$}{};
        \foreach \y in {2, ..., 14} {
            \tikzpt{2}{\y}{}{};
        }
        \foreach \y in {1, 3, ..., 13} {
            \sqone(2, \y);
        }
        \foreach \y in {2, 6, 10, 14} {
            \sqtwoL(2, \y);
        }
        \foreach \y in {3, 7, 11} {
            \sqtwoR(2, \y);
        }
        \sqfourL(2, 4);
        \sqfourLtwo(2, 5);
        \sqfourR(2, 6);
        \sqfourRtwo(2, 7);

        \sqfourL(2, 12);
        \sqfourLtwo(2, 13);
        \sqfourR(2, 14);
    \end{scope}
    \begin{scope}[RedOrange]
        \clip (2.5, 0) rectangle (7, 13.75);
        \tikzptB{4.5}{8}{$U((w_4^L)^2 + (w_4^R)^2)$}{};
        \foreach \y in {12, 14} {
            \tikzpt{4.5}{\y}{}{};
        }
        \sqfourL(4.5, 8);
        \sqfourL(4.5, 14);
        \sqtwoL(4.5, 12);
        \sqone(4.5, 14);
    \end{scope}
    \begin{scope}[Goldenrod!67!black]
        \clip (4.25, 0) rectangle (9, 13.75);
        \tikzpt{6.75}{8}{$Uw_4^Lw_4^R$}{};
        \foreach \y in {10, 11, ..., 14} {
            \tikzpt{6.75}{\y}{}{};
        }
        \sqtwoL(6.75, 8);
        \sqfourR(6.75, 8);
        \sqone(6.75, 10); 
        \sqtwoL(6.75, 11);
        \sqone(6.75, 12);
        \sqtwoR(6.75, 12);
    \end{scope}
    \begin{scope}[Green]
        \clip (6, 0) rectangle (11, 13.75);
        \tikzptB{8.75}{9}{$Uw_4^Lw_4^Rx$}{};
        \foreach \y in {10, 11, ..., 14} {
            \tikzpt{8.75}{\y}{}{};
        }
        \tikzpt{10.25}{13}{}{};
        \tikzpt{10.25}{14}{}{};
        \sqone(8.75, 9);
        \sqfourL(8.75, 9);
        \sqtwoL(8.75, 10);
        \sqfourLtwo(8.75, 10);
        \sqone(8.75, 11);
        \sqtwoCR(8.75, 11);
        \sqfourR(8.75, 11);
        \sqone(8.75, 13);
        \sqone(10.25, 13);
        \sqfourL(8.75, 13);
        \sqfourRtwo(8.75, 12);
        \sqfourL(10.25, 13);
        \sqtwoL(10.25, 14);
    \end{scope}
    \begin{scope}
        \clip (6, 0) rectangle (16, 13.75);
        \tikzptB{11.5}{10}{$U(w_4^Lw_6^L + w_4^Rw_6^R)$}{};
        \tikzpt{11.5}{11}{}{};
        \tikzpt{11.5}{12}{}{};
        \tikzpt{11.5}{13}{}{};
        \tikzpt{11.5}{14}{}{};
                
        \sqone(11.5, 10);
        \sqtwoL(11.5, 10);
        \sqtwoR(11.5, 11);
        \sqtwoL(11.5, 12);
        \sqfourRtwo(11.5, 12);
        \sqfourL(11.5, 10);
        \sqfourR(11.5, 11);
        \sqone(11.5, 13);
        \sqfourLtwo(11.5, 13);
    \end{scope}
    \begin{scope}[PineGreen]
            \clip (6, 0) rectangle (16, 13.75);
        \tikzpt{13.5}{12}{$U\alpha$}{};
        \tikzpt{13.5}{14}{}{};
        \sqtwoL(13.5, 12);
    \end{scope}
    \begin{scope}[MidnightBlue]
            \clip (6, 0) rectangle (18, 13.75);
        \tikzptB{15}{12}{$U(w_4^Lw_8^L + w_4^Rw_8^R)$}{};
        \tikzpt{15}{14}{}{};
        \sqtwoL(15, 12);
    \end{scope}
    \begin{scope}[Fuchsia]
            \clip (6, 0) rectangle (22, 13.75);
        \tikzptR{16.5}{12}{$U(w_4^Lw_8^R + w_8^Lw_4^R)$}{};
        \tikzpt{16.5}{14}{}{};
        \sqtwoL(16.5, 12);
    \end{scope}
  \end{tikzpicture}
\caption{The $\cA(2)$-module structure on $H^*(B((\Spin(16)\times\Spin(16))\rtimes\Z_2)^{V - 32};\Z_2)$ in low degrees. The figure includes all classes in degrees $13$ and below.
Here $\alpha := (w_4^L)^2w_4^R + w_4^L(w_4^R)^2)$.}
\label{A2_swap}
\end{figure}
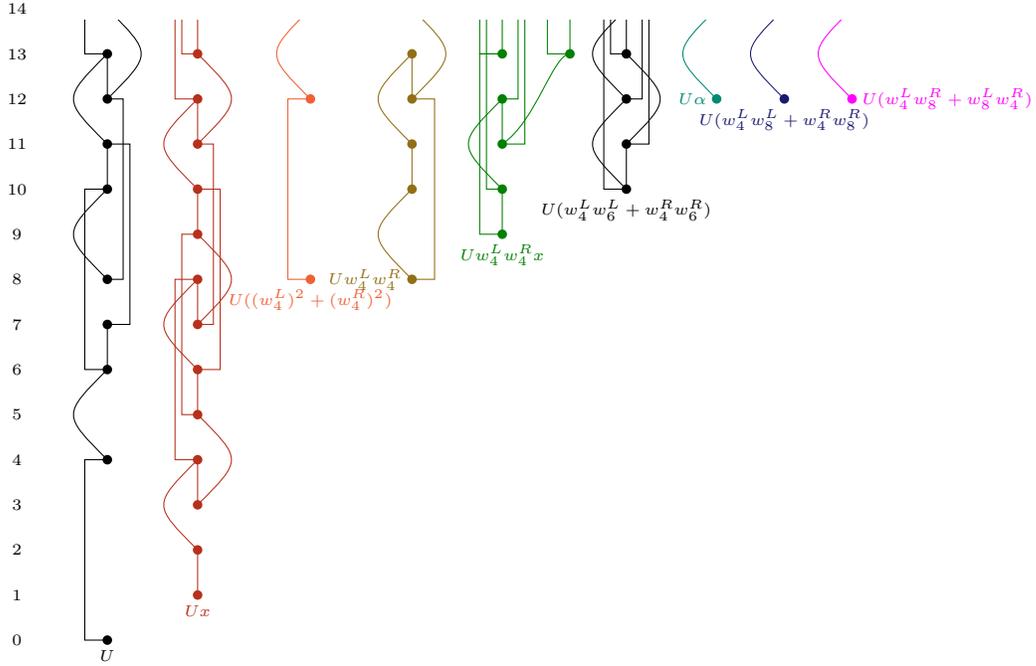

The next step is to split off some of these summands in a manner similar to~\cite[Corollary 2.36]{Deb23}. Morally this is exactly the same simplification we used in \cref{noswap} and discussed further in \cref{susy_noswap}, but the details are a little more complicated.
\begin{defn}
Let $\xi\colon B\G_{16,16'}\to B\mathit O$ be the tangential structure defined analogously to $B\G_{16,16}$, but with $\Spin$ in place of $\Spin(16)$.
\end{defn}
\begin{lem}
The map $\Spin(16)\hookrightarrow\Spin$ induces a map $\Omega_k^{\G_{16,16}}\to\Omega_k^{\G_{16,16}'}$ which is an isomorphism for $k\le 14$.
\end{lem}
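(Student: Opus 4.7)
The plan is to mirror the stabilization argument from the proof of Theorem~\ref{noswap}, but carrying the swap-$\Z_2$ action along throughout. The essential input is the standard fact that the inclusion $i\colon \Spin(16)\hookrightarrow\Spin$ induces a $15$-connected map $Bi\colon B\Spin(16)\to B\Spin$ (for instance, because it is an isomorphism on cohomology in degrees $\le 15$).

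First I would produce the map at the level of semidirect products. Since $i\times i\colon \Spin(16)^2\to\Spin^2$ is equivariant for the swap $\Z_2$-action, it descends to a group homomorphism $G_{16,16}\to G_{16,16}'$, which fits into a morphism of fibrations
\[
\begin{tikzcd}
B(\Spin(16)\times\Spin(16)) \arrow[r]\arrow[d, "Bi\times Bi"'] & BG_{16,16} \arrow[r]\arrow[d] & B\Z_2\arrow[d, equal] \\
B(\Spin\times\Spin) \arrow[r] & BG_{16,16}' \arrow[r] & B\Z_2.
\end{tikzcd}
\]
The left vertical is $15$-connected as a product of $15$-connected maps, and comparing the homotopy long exact sequences --- using $\pi_k(B\Z_2)=0$ for $k\ge 2$ --- shows the middle vertical is $15$-connected as well.

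Next I would lift this to the Lie $2$-group level. The Green-Schwarz class $\tfrac12 p_1 - (\tfrac12 p_1^L + \tfrac12 p_1^R)\in H^4$ defining the string extension is built from stable characteristic classes, so it pulls back naturally along the stabilization map; the string extensions therefore assemble into a commutative diagram of Lie $2$-groups $\G_{16,16}\to\G_{16,16}'$. The induced map $B\G_{16,16}\to B\G_{16,16}'$ sits in a morphism of $K(\Z,3)$-fibrations over $B(\Spin\times G_{16,16})$ and $B(\Spin\times G_{16,16}')$ respectively; since the base map is still $15$-connected and the fibers are canonically identified, the total map is also $15$-connected (this can be checked on $\Z$- and $\Z_p$-cohomology via the Serre spectral sequence).

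Finally, applying the Pontrjagin-Thom construction produces a $15$-connected map of Madsen-Tillmann spectra $MT\G_{16,16}\to MT\G_{16,16}'$, since the Thom isomorphism preserves connectivity of a map covered by a compatible identification of (virtual) vector bundles. Taking stable homotopy groups gives the desired isomorphism $\Omega_k^{\G_{16,16}}\xrightarrow{\cong} \Omega_k^{\G_{16,16}'}$ for $k<15$, i.e.\ $k\le 14$. The main point to verify carefully is the naturality of the Green-Schwarz twist under stabilization --- essentially the statement that $\tfrac12 p_1$ is a stable characteristic class --- after which the conclusion follows from a routine application of the five-lemma and Whitehead's theorem in the stable range.
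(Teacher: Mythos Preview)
Your proposal is correct and follows essentially the same route as the paper: both arguments build the comparison map through the two-step fibration (first over $B\Z_2$, then the $B^2\U(1)\simeq K(\Z,3)$ string extension), reduce to the $15$-connectivity of $B\Spin(16)\to B\Spin$, and then pass to Thom spectra. The only cosmetic difference is that the paper phrases everything in terms of $\Z$-cohomology isomorphisms and invokes Whitehead at the end, whereas you work directly with homotopy groups and the five lemma; these are equivalent presentations of the same argument.
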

This means that, for our string-theoretic applications, it does not matter whether we use $B\G_{16,16}$ or $B\G_{16,16}'$.
\begin{proof}
We want to show that the map $\mathit{MT\G}_{16,16}\to\mathit{MT\G}'_{16,16}$ of bordism spectra is an isomorphism on $\pi_k$ for $k\le 14$. By the Whitehead theorem we may equivalently use $H^k(\text{--};\Z)$, and by the Thom isomorphism, it suffices to show the map $B\G_{16,16}\to B\G_{16,16}'$ is an isomorphism on $\Z$-cohomology in degrees $14$ and below. The cohomology rings of these spaces can be computed in two steps: first the Serre spectral sequence for the fibration $B(\Spin(16)\times\Spin(16))\to BG_{16,16}\to B\Z_2$, then the Serre spectral sequence for the fibration $B^2\U(1)\to B\G_{16,16}\to BG_{16,16}$; and analogously for $B\G_{16,16}'$ with $\Spin$ in place of $\Spin(16)$. For each of these two steps, the map $\Spin(16)\to\Spin$ induces a map of Serre spectral sequences. and because $H^*(B\Spin;\Z)\to H^*(B\Spin(16);\Z)$ is an isomorphism in degrees $15$ and below, we learn that at each of the two steps, the two spectral sequences are isomorphic in degrees $14$ and below, which implies the map $B\G_{16,16}\to B\G_{16,16}'$ induces an isomorphism on cohomology in degrees $14$ and below.
\end{proof}
\begin{prop}
\label{spin_split}
There is a spectrum $\cQ$ and a splitting
\begin{equation}
\label{string_Z2_spin16_split}
	\mathit{MT\mathbb{G}}_{16,16}'\overset{\simeq}{\longrightarrow} \MTSpin\vee \mathcal Q,
\end{equation}
such that the pullback map on cohomology corresponding to the projection $\mathit{MT\G}_{16,16}'\to\cQ$ is a map
\begin{equation}
    H^*(\cQ;\Z_2)\cong \cA\otimes_{\cA(2)} \mathcal L \longrightarrow
        H^*(\mathit{MT\G}_{16,16}';\Z_2)\cong\cA\otimes_{\cA(2)} H^*((BG_{16,16})^{V-32};\Z_2)
\end{equation}
given by the inclusion of an $\cA(2)$-module $\mathcal L\hookrightarrow H^*((BG_{16,16})^{V-32};\Z_2)$, followed by applying $\cA\otimes_{\cA(2)}\text{--}$; the quotient of $\mathcal L$ by all classes in degrees $14$ and above is isomorphic to
\begin{equation}
\textcolor{BrickRed}{M_2}\oplus \textcolor{RedOrange}{M_3} \oplus \textcolor{Goldenrod!67!black}{M_4}\oplus \textcolor{Green}{M_5}\oplus \textcolor{PineGreen}{M_7}\oplus \textcolor{MidnightBlue}{M_8} \oplus \textcolor{Fuchsia}{M_9}.
\end{equation}
\end{prop}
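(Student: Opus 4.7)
The plan is to mimic the analogous splitting for the $E_8\times E_8$ heterotic string established in~\cite[Corollary 2.36]{Deb23}: construct a retraction of Thom spectra $r\colon\mathit{MT\G}_{16,16}'\to\MTSpin$, exhibit a section $s\colon\MTSpin\to\mathit{MT\G}_{16,16}'$ with $r\circ s\simeq\mathrm{id}$, and then identify $\cQ:=\operatorname{cofib}(s)$ by its mod-$2$ cohomology.

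The retraction $r$ is induced by the forgetful morphism of tangential structures $\G_{16,16}'\to\Spin$ sending a $\G_{16,16}'$-structure to its underlying spin structure. The section $s$ is the subtler ingredient: the naive attempt --- equipping a spin bundle with a trivial $G_{16,16}$-bundle --- fails because this would require the manifold to be string. Following the strategy of~\cite[Lemma 2.31]{Deb23}, I would construct the correct $s$ using a diagonal-type inclusion $\Spin\hookrightarrow \Spin\times G_{16,16}$ that absorbs the Green-Schwarz obstruction $\tfrac 12 p_1(E)$ into the $V^L$ and $V^R$ bundles, so that the twisted string trivialization becomes tautological; the $\Z_2$-swap on the two $\Spin(16)$ factors plays a role in ensuring the diagonal lands in the $\G_{16,16}'$-locus rather than in $\String\text{-}\Spin(16)^2$. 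Composing $r\circ s$ then yields the identity on $\MTSpin$, giving the wedge decomposition $\mathit{MT\G}_{16,16}'\simeq\MTSpin\vee\cQ$. The induced split short exact sequence on cohomology
\[
0\longrightarrow H^*(\cQ;\Z_2)\longrightarrow H^*(\mathit{MT\G}_{16,16}';\Z_2)\overset{s^*}{\longrightarrow} H^*(\MTSpin;\Z_2)\longrightarrow 0
\]
identifies $H^*(\cQ;\Z_2)$ with the kernel of $s^*$. The image of $r^*$, which is isomorphic to $H^*(\MTSpin;\Z_2)\cong\cA\otimes_{\cA(2)}\Z_2$ via the Thom class, picks out the summand $M_1$ from \cref{mathcalM} together with the degree-$\ge 14$ Steenrod descendants that absorb $M_6$. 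Inspection of \cref{A2_swap} then confirms that the complementary $\cA(2)$-module $\mathcal{L}$ reduces modulo degrees $\ge 14$ to the claimed direct sum $\textcolor{BrickRed}{M_2}\oplus\textcolor{RedOrange}{M_3}\oplus\textcolor{Goldenrod!67!black}{M_4}\oplus\textcolor{Green}{M_5}\oplus\textcolor{PineGreen}{M_7}\oplus\textcolor{MidnightBlue}{M_8}\oplus\textcolor{Fuchsia}{M_9}$, and the $\cA$-module identification $H^*(\cQ;\Z_2)\cong\cA\otimes_{\cA(2)}\mathcal{L}$ follows from applying extension-of-scalars to the inclusion $\mathcal{L}\hookrightarrow H^*((BG_{16,16})^{V-32};\Z_2)$.

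The hardest part will be the construction and verification of the section $s$: the diagonal-type map must respect both the $\Z_2$-swap and the Lie $2$-group structure of $\G_{16,16}'$, and the check that $r\circ s\simeq\mathrm{id}$ involves careful tracking of universal bundles on classifying stacks, extending the arguments of \cite[\S 2.2]{Deb23} to the more intricate semidirect-product setting. A secondary subtlety is the fate of $M_6$, whose generator $U(w_4^Lw_6^L+w_4^Rw_6^R)$ sits at the boundary of the truncation range in \cref{mathcalM}; one must verify via an explicit cohomology computation in degrees $\ge 14$ that it is absorbed into the $\MTSpin$-summand (through a higher-degree Steenrod extension connecting it to the $M_1$ tail) rather than surviving to $\mathcal{L}$, so that the reduction of $\mathcal{L}$ modulo high degrees matches precisely the stated sum of the remaining $M_i$.
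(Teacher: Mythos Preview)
Your strategy --- build a section $s$ to the forgetful retraction and take $\cQ=\operatorname{cofib}(s)$ --- is exactly the paper's. But you misidentify what the section looks like and thereby overcomplicate both the construction and the cohomology analysis.

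The paper's section is much simpler than the ``diagonal-type inclusion'' you sketch. Given a spin bundle $E\to M$, take the \emph{trivial} double cover $M'=M\amalg M$, set $V^L=E$ on one sheet and $V^L=0$ on the other, and let $V^R$ be the image of $V^L$ under the deck transformation. Then $\tfrac12 p_1(V^L)+\tfrac12 p_1(V^R)$ descends to $\tfrac12 p_1(E)$ on $M$, so the Green-Schwarz class vanishes tautologically and forgetting to $B\Spin$ recovers $E$. Two of your anticipated difficulties evaporate: there is no delicate semidirect-product or Lie-$2$-group bookkeeping, since the $\Z_2$-bundle is trivial and the section in fact \emph{does} factor through the no-swap structure $\String\text{-}\Spin^2$ (contrary to your assertion that the swap is what makes it land in $\G_{16,16}'$); and a literal diagonal $V^L=V^R=E$ would yield $p_1(E)$ rather than $\tfrac12 p_1(E)$ and so would not give a section at all.

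Your account of $M_6$ is also the wrong mechanism: its generator $U(w_4^Lw_6^L+w_4^Rw_6^R)$ sits in degree $10$, so it cannot be ``absorbed'' by Steenrod descendants of $M_1$ that only appear in degrees $\ge 14$ --- Steenrod operations raise degree. The paper obtains the cohomology statement (including which $M_i$ land in $\mathcal L$ and why $H^*(\cQ;\Z_2)$ is of the form $\cA\otimes_{\cA(2)}\mathcal L$ in the first place) by appealing to the edge morphism in the Serre spectral sequence for $B^2\U(1)\to B\G_{16,16}'\to BG_{16,16}'$, not by chasing Steenrod operations across the degree-$14$ truncation.
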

\begin{proof}
The idea is the same as~\cite[Corollary 2.36]{Deb23}: show that a spin structure induces a $\G_{16,16}'$-structure, such that forgetting back down to $B\Spin$ recovers the original spin structure.

Any spin vector bundle $E\to M$ has a
canonical $\mathbb G_{16,16}'$-structure with a trivial double cover $M' := M\amalg M$, $V^L$ equal to $E$ on one
copy of $M$ inside $M'$ and equal to $0$ on the other copy of $M$, and $V^R$ the image of $V^L$ under the deck
transformation, as $\tfrac 12 p_1(V^L) + \tfrac 12 p_1(V^R)$ (descended to $M$) is canonically identified with
$\tfrac 12 p_1(E)$. Composing with the forgetful map $B\mathbb G_{16,16}'\to B\Spin$ gives a map $B\Spin\to B\Spin$
homotopy equivalent to the identity and therefore maps of spectra $\MTSpin\to
\mathit{MT\mathbb{G}}_{16,16}'\to\MTSpin$, yielding the splitting as promised.

To see the statement on cohomology, one can look at the edge morphism in the Serre spectral sequence for $B^2\U(1)\to B\G_{16,16}'\to BG_{16,16}'$.
\end{proof}
As we already know spin bordism groups in the dimensions we need, we focus on computing $\pi_*(\cQ)_2^\wedge$. Because the cohomology of $\cQ$ is of the form $\cA\otimes_{\cA(2)}\mathcal L$, the change-of-rings theorem simplifies the Adams spectral sequence for $\cQ$ to the form
\begin{equation}
\label{Q_sseq}
	E_2^{s,t} = \Ext_{\cA(2)}^{s,t}(\mathcal L, \Z_2) \Longrightarrow \pi_{t-s}(\mathcal Q)_2^\wedge;
\end{equation}
we will then add on the summands coming from $\Omega_*^\Spin$ to obtain the groups in the theorem statement. The first
thing we need is $\Ext_{\cA(2)}$ of $\textcolor{BrickRed}{M_2}$, $\textcolor{RedOrange}{M_3}$, $\textcolor{Goldenrod!67!black}{M_4}$,
$\textcolor{Green}{M_5}$, $\textcolor{PineGreen}{M_7}$, $\textcolor{MidnightBlue}{M_8}$, and
$\textcolor{Fuchsia}{M_9}$.
\begin{enumerate}
	\item Davis-Mahowald~\cite[Table 3.2]{DM78} compute $\Ext(\textcolor{BrickRed}{M_2})$.
        \item In degrees $14$ and below, $\textcolor{RedOrange}{M_3}$ is isomorphic to $\Sigma^8 \cA(2)\otimes_{\cA(1)}\Z_2$ (meaning the quotients of these modules by their submodules of elements in degrees $15$ and above are isomorphic). Therefore when $t-s\le 14$, there is an isomorphism
        \begin{subequations}
        \begin{equation}
            \Ext_{\cA(2)}^{s,t}(\textcolor{RedOrange}{M_3}, \Z_2) \cong \Ext_{\cA(2)}^{s,t}(\cA(2)\otimes_{\cA(1)}\Z_2, \Z_2),
        \end{equation}
        and the change-of-rings theorem (see, e.g., \cite[\S 4.5]{BC18}) implies that in all degrees,
        \begin{equation}
            \Ext_{\cA(2)}(\cA(2)\otimes_{\cA(1)}\Z_2, \Z_2)\cong \Ext_{\cA(1)}(\Z_2, \Z_2).
        \end{equation}
        Liulevicius~\cite[Theorem 3]{Liu62} first calculated the algebra $\Ext_{\cA(1)}(\Z_2, \Z_2)$.
        \end{subequations}
	\item As an $\Ext(\Z_2)$-module, $\Ext(\textcolor{Goldenrod!67!black}{M_4})\cong\Z_2[h_0]$ with
	$h_0\in\Ext^{1,1}$~\cite[(2.43)]{Deb23}.
	\item $\Ext(\textcolor{Green}{M_5})$ is computed in~\cite[Figure 2]{Deb23}.
	\item Finally, for $\textcolor{PineGreen}{M_7}$, $\textcolor{MidnightBlue}{M_8}$, and $\textcolor{Fuchsia}{M_9}$,
	we only need to know their Ext groups in degrees $12$ and below. For $i = 7,8,9$, there is a surjective map
	$M_i\to\Sigma^{12}\Z_2$ whose kernel is concentrated in degrees $14$ and above, so (e.g.\ using the long exact
	sequence in Ext associated to a short exact sequence of $\cA(2)$-modules~\cite[\S 4.6]{BC18}) for $t-s \le 12$,
	$\Ext$ of each of these modules is isomorphic to $\Ext(\Sigma^{12} \Z_2)$, which was computed by May
	(unpublished) and Shimada-Iwai~\cite[\S 8]{SI67}.
\end{enumerate}
These assemble into a description of the $E_2$-page of~\eqref{Q_sseq} (compare~\cite[Proposition 2.46]{Deb23}).
\begin{prop}
The $E_2$-page of the Adams spectral sequence for $\mathcal Q$ in degrees $t-s\le 12$ is as displayed in
\cref{heterotic_SS}. In this range, the $E_2$-page is generated as an $\Ext_{\cA(2)}(\Z_2)$-module by ten elements:
\begin{itemize}
	\item $p_1\in\Ext^{0,1}$, $p_3\in\Ext^{0,3}$, $p_7\in\Ext^{0,7}$, and $b\in \Ext^{2,10}$, coming from
	$\Ext(\textcolor{BrickRed}{M_2})$;
        \item $a_1\in\Ext^{0,8}$ and $a_3\in\Ext^{3,15}$, coming from $\Ext(\textcolor{RedOrange}{M_3})$.
	\item $a_2\in\Ext^{0,8}$, coming from $\Ext(\textcolor{Goldenrod!67!black}{M_4})$.
	\item $c\in\Ext^{0,9}$ and $d\in\Ext^{0,11}$, coming from $\Ext(\textcolor{Green}{M_5})$.
	\item $e\in\Ext^{0,12}$, coming from $\Ext(\textcolor{PineGreen}{M_7})$.
	\item $f\in\Ext^{0,12}$, coming from $\Ext(\textcolor{MidnightBlue}{M_8})$.
	\item $g\in\Ext^{0,12}$, coming from $\Ext(\textcolor{Fuchsia}{M_9})$.
\end{itemize}
\end{prop}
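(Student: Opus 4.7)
The plan is to assemble the proposition directly from the direct-sum decomposition of $\mathcal L$ modulo its submodule in degrees $\ge 14$, which was recorded at the end of Proposition \ref{spin_split} and displayed in \cref{A2_swap}, together with the Ext computations of each summand that have just been collected. Since $\Ext_{\cA(2)}(\text{--}, \Z_2)$ sends direct sums to direct sums, and since truncating $\mathcal L$ in topological degree $\ge 14$ cannot affect $\Ext^{s,t}$ in the strip $t-s\le 12$ (any class with $t\ge 14$ contributes only in topological degree $\ge 14-s$, which exceeds $12$ once we account for the available filtration), it is enough to take the seven Ext groups of the summands $\textcolor{BrickRed}{M_2}$, $\textcolor{RedOrange}{M_3}$, $\textcolor{Goldenrod!67!black}{M_4}$, $\textcolor{Green}{M_5}$, $\textcolor{PineGreen}{M_7}$, $\textcolor{MidnightBlue}{M_8}$, $\textcolor{Fuchsia}{M_9}$ and sum them.

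Concretely, I would first substitute in the known answers: Davis-Mahowald's calculation of $\Ext_{\cA(2)}(\textcolor{BrickRed}{M_2})$ produces generators $p_1\in\Ext^{0,1}$, $p_3\in\Ext^{0,3}$, $p_7\in\Ext^{0,7}$, and $b\in\Ext^{2,10}$ in this range (the first three corresponding to the bottom-degree $\Z_2$ summands in $\widetilde H^*(\RP^\infty;\Z_2)$ and $b$ being the first higher-filtration class not in the $\Ext_{\cA(2)}(\Z_2)$-span of the previous three); the change-of-rings identification of $\Ext_{\cA(2)}(\textcolor{RedOrange}{M_3})$ with a shift of Liulevicius's $\Ext_{\cA(1)}(\Z_2,\Z_2)$ gives $a_1\in\Ext^{0,8}$ and $a_3\in\Ext^{3,15}$; the module $\textcolor{Goldenrod!67!black}{M_4}$ contributes only the $h_0$-tower on $a_2\in\Ext^{0,8}$; the $\textcolor{Green}{M_5}$-Ext from \cite[Figure 2]{Deb23} contributes classes in bidegrees containing $c\in\Ext^{0,9}$ and $d\in\Ext^{0,11}$ as $\Ext_{\cA(2)}(\Z_2)$-module generators in the range; and the three modules $\textcolor{PineGreen}{M_7}$, $\textcolor{MidnightBlue}{M_8}$, $\textcolor{Fuchsia}{M_9}$, each isomorphic to $\Sigma^{12}\Z_2$ in degrees $\le 12$, each contribute a single generator $e,f,g\in\Ext^{0,12}$ via the May/Shimada-Iwai computation for $\Ext_{\cA(2)}(\Z_2)$.

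Next I would verify that ``generated as an $\Ext_{\cA(2)}(\Z_2)$-module'' is correct, i.e.\ that every class in the strip $t-s\le 12$ is an $\Ext_{\cA(2)}(\Z_2)$-linear combination of the ten listed generators and their $\Ext_{\cA(2)}(\Z_2)$-multiples. For each summand individually, the generators I have chosen are precisely the minimal generators in the range listed in the corresponding references; the direct-sum decomposition then lifts this to a minimal generating set for $\Ext$ of the whole module in the given range. Finally I would draw or cross-reference \cref{heterotic_SS} to confirm that the multiplication by $h_0$, $h_1$, $v$, and $w_1$ in $\Ext_{\cA(2)}(\Z_2)$ on each generator matches the displayed chart, which follows automatically from the corresponding module actions in each summand.

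The only place where I would expect any subtlety is in book-keeping the boundary of the range: a class $x$ with $t=14$ or so in one of the summands can, after multiplication by $h_0$-like generators in $\Ext_{\cA(2)}(\Z_2)$ (which have $t-s = 0$), still land in $t-s\le 12$, so I must check that no generator is missed from an $\cA(2)$-module class in degree $13$ or $14$. This is handled because $\Ext_{\cA(2)}(\Z_2)$ has no classes in negative $t-s$, so multiplication cannot lower $t-s$; a generator in topological degree $\ge 13$ cannot contribute to the strip. Thus no generators are missed and the list of ten is complete, which is the content of the proposition.
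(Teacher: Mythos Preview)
Your proposal is correct and is exactly the (implicit) argument the paper uses: the proposition is simply the assembly of the seven Ext computations itemized immediately before it, using additivity of $\Ext$ on direct sums and the standard fact (invoked repeatedly in the paper) that an isomorphism of $\cA(2)$-modules in degrees $\le 13$ yields an isomorphism of Ext in topological degrees $\le 12$. Your degree-cutoff justification in the first paragraph is phrased a bit awkwardly, but the conclusion and the final-paragraph argument that $\Ext_{\cA(2)}(\Z_2)$-multiplication cannot lower $t-s$ are both correct and suffice.
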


\begin{figure}[!htbp]
\centering
\includegraphics[width=\textwidth]{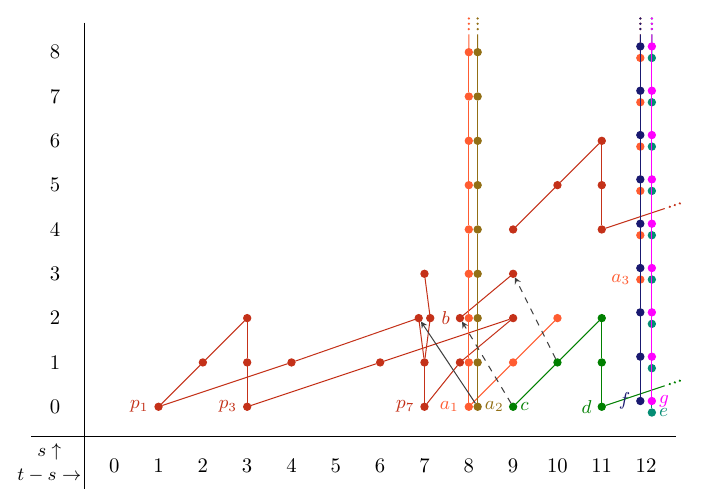}
\caption{The $E_2$-page of the Adams spectral sequence computing
$\tmf_*((BG_{16,16})^{V-32})_2^\wedge$. In \cref{hetdiffs} we show that $d_2(a_2) = h_2^2p_1$ and that many other differentials vanish. We do not know the values of $d_2(c)$ or $d_2(h_1c)$, which is why those differentials are denoted with dotted lines.}
\label{heterotic_SS}
\end{figure}

The next step is to evaluate the differentials. Unlike the other Adams spectral sequences we considered in this
paper, there are several differentials to address, even after using that differentials commute with the action of
$h_0$, $h_1$, and $h_2$:
\begin{itemize}
	\item $d_2$ on $a_1$, $a_2$, $a_3$, $c$, $d$, $e$, $f$, and $g$,
	\item $d_3$ on $a_1$, $a_2$, and $a_3$, and
	\item $d_4$, $d_5$, and $d_6$ on $e$, $f$, and $g$.
\end{itemize}
The argument is nearly the same as in~\cite[Lemmas 2.47, 2.50, and 2.56]{Deb23}.
\begin{lem}
\label{hetdiffs}
$d_2(a_2) = h_2^2 p_1$, and all differentials vanish on $a_1$, $a_3$, $e$, $f$, and $g$.
\end{lem}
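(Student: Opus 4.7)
The plan is to establish the lemma in three groups, following closely the strategy of \cite[Lemmas 2.47, 2.50, 2.56]{Deb23} for the analogous $E_8\times E_8$ case, and making the necessary modifications for the gauge group $G_{16,16}$.

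\textbf{Filtration-zero classes.} The classes $a_1, e, f, g$ all live in Adams filtration zero, so by the standard dictionary (see \cite[\S 8.4]{Freed:2019sco}) they correspond to natural $\Z_2$-valued bordism invariants obtained by integrating mod-$2$ characteristic classes of $\G_{16,16}'$-manifolds: namely $(w_4^L)^2+(w_4^R)^2$ for $a_1$; $(w_4^L)^2 w_4^R + w_4^L(w_4^R)^2$ for $e$; $w_4^Lw_8^L+w_4^Rw_8^R$ for $f$; and $w_4^Lw_8^R+w_8^Lw_4^R$ for $g$. Each such class survives to $E_\infty$ (and hence all its differentials vanish) if and only if there exists a closed $\G_{16,16}'$-manifold on which the characteristic number is $1$. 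For $a_1$ I would use $\HP^2$ equipped with the rank-$16$ spin bundle $V^L$ induced from the tautological $\mathit{Sp}(1)$-bundle under $\mathit{Sp}(1)\hookrightarrow \Spin$ and trivial $V^R$, over a trivial double cover; the twisted string condition reduces to $\tfrac12 p_1(\HP^2)+\tfrac12 p_1(V^L)=0$, which is the same check carried out in \cref{sugi}. For $e,f$ one takes products of $\Spin$-manifold generators of $\widetilde\Omega_8^{\Spin}(B\Spin)$ (e.g.\ $S^8$ with the generator of $\pi_8(B\Spin)$) with appropriate lower-dimensional pieces realizing $w_4^L$ or $w_4^R$; for $g$ a nontrivial swap double cover is needed, which can be arranged by taking a mapping torus whose monodromy realizes the $\Z_2$-swap action.

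\textbf{The nontrivial differential $d_2(a_2)=h_2^2 p_1$.} The class $a_2\in\Ext^{0,8}$ detects the characteristic class $w_4^Lw_4^R$, which mixes the two gauge factors. The approach is to exploit the splitting in \cref{spin_split} and the naturality of the Adams spectral sequence under a comparison map to the simpler spectrum studied in \cite[Lemma 2.47]{Deb23}. Concretely, the $\cA(2)$-submodule generated by $Uw_4^Lw_4^R$ (our $\textcolor{Goldenrod!67!black}{M_4}$) together with the $\Ext(\textcolor{BrickRed}{M_2})$-summand embeds into the analogous module appearing in the $E_8\times E_8$-with-swap computation in a manner compatible with the change-of-rings isomorphism; since $d_2$ is natural, the differential is inherited from that calculation. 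Alternatively, and more directly, one can rule out that $w_4^Lw_4^R$ admits an integral lift to a well-defined bordism invariant on $\G_{16,16}'$-manifolds by producing a manifold where the naive lift differs by exactly $h_2^2 p_1$, which then forces $d_2(a_2)\ne 0$ in the spectral sequence; the only possible nonzero value in the chart is $h_2^2 p_1$.

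\textbf{Vanishing on $a_3$ and the main obstacle.} The class $a_3\in\Ext^{3,15}$ comes from $\Ext(\textcolor{RedOrange}{M_3})$, which by the change-of-rings isomorphism agrees with $\Ext_{\cA(1)}(\Z_2,\Z_2)$ in the relevant range; under this identification $a_3$ is in the image of $h_0$-multiplication on a permanent cycle of filtration-zero type, so its differentials are controlled by those on $a_1$ (already handled) via $h_0$-linearity together with the structure of $\Ext_{\cA(1)}(\Z_2,\Z_2)$. The potential targets of $d_r(a_3)$ for $r\ge 2$ all lie in topological degree $11$ and in filtration $\ge 5$; inspection of \cref{heterotic_SS} shows there is no room for a nonzero image compatible with the known module structure. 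The main obstacle is the nonvanishing statement $d_2(a_2)=h_2^2p_1$: distinguishing between the possibilities $0$ and $h_2^2 p_1$ cannot be done purely from the module picture and requires either a genuine manifold calculation or a careful naturality argument against \cite{Deb23}. All other assertions of the lemma follow from filtration-zero geometric constructions or module-theoretic naturality, which are routine in comparison.
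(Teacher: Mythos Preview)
Your proposal misses the key unifying idea and contains genuine gaps for $a_3$ and $e,f,g$.

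The paper's proof is a single naturality argument: the characteristic class $\tfrac12 p_1$, regarded as a map $B\Spin(16)\to K(\Z,4)$, induces a map of tangential structures from $\G_{16,16}$ to the $E_8\times E_8$ swap structure $\xi'$ of \cite{Deb23} (where $K(\Z,4)$ stands in for $BE_8$), hence a map of Thom spectra and of Adams spectral sequences. Under this comparison, each of $a_1,a_2,a_3,e,f,g$ is carried to the class of the same name in the $\xi'$ spectral sequence, so \emph{all} of the differentials in the lemma---both the nonvanishing $d_2(a_2)=h_2^2p_1$ and the vanishings---are imported wholesale from \cite[Lemmas 2.47, 2.50, 2.56]{Deb23}. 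You allude to this for $a_2$ but do not apply it uniformly; it is the entire proof.

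Two of your direct arguments are actually incorrect. First, for $a_3\in\Ext^{3,15}$: in $\Ext_{\cA(1)}(\Z_2,\Z_2)$ the class at topological degree $4$ and filtration $3$ is a genuine module generator, not $h_0$ times anything (there is nothing in filtration $2$ at that degree), so your ``$a_3$ is in the image of $h_0$'' claim fails. Your fallback ``no room'' argument also fails: the paper explicitly lists $d_2$ and $d_3$ on $a_3$ among the differentials that need addressing after using $h_0,h_1,h_2$-linearity, precisely because there \emph{are} potential targets in topological degree $11$ at filtrations $5$ and $6$ coming from $\Ext(\textcolor{BrickRed}{M_2})$. Second, your manifold constructions for $e,f,g$ are too thin to carry the claim. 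The class for $e$ is $(w_4^L)^2 w_4^R + w_4^L(w_4^R)^2$, which requires both gauge bundles nontrivial; ``products with lower-dimensional pieces realizing $w_4^L$ or $w_4^R$'' does not obviously produce a twelve-dimensional twisted string manifold with this characteristic number equal to $1$. Likewise the mapping-torus suggestion for $g$ is not enough to pin down a manifold with $\int w_4^Lw_8^R + w_8^Lw_4^R = 1$. These are not minor omissions: the paper also lists $d_4,d_5,d_6$ on $e,f,g$ as needing resolution, so a survival argument would have to rule out several possible targets, and your sketches do not.

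Your treatment of $a_1$ via $\HP^2$ is fine, and your instinct to use naturality for $a_2$ is correct---but the point is that this naturality map to $\xi'$ resolves every case in the lemma simultaneously, so once you have it there is no need for the manifold constructions at all.
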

\begin{proof}
If $\xi'$ denotes the
tangential structure identical to $\mathbb G_{16,16}$ except with $K(\Z, 4)$ in place of $B\Spin(16)$, then the
class $\tfrac 12 p_1$, interpreted as a map $B\Spin(16)\to K(\Z, 4)$, induces a map of tangential structures from
$\mathbb G_{16,16}$-structure to $\xi'$-structure, hence also a
map of Thom spectra, hence a map of Adams spectral sequences. The $E_2$-page for $\Omega_*^{\xi'}$ is computed
in~\cite[Figure 3]{Deb23} in the range $t-s\le 12$, and looks very similar to our $E_2$-page in
\cref{heterotic_SS}; using the comparison map between these two spectral sequences, we conclude the differentials
in the lemma statement.
\end{proof}
The comparison map would also tell us $d_2(c)$, except that the fate of this differential in $\xi'$-bordism is not
known.

Lastly, we address the class $d\in E_2^{0,11}$. Since $d$ has topological degree $11$, its fate affects the size of $\Omega_{11}^{\G_{16,16}}$, hence the possible anomaly theories for the $\Spin(16)\times\Spin(16)$ theory.
\begin{defn}
\label{y11_defn}
Embedding each $S^k\hookrightarrow \R^{k+1}$ and using the notation $(\vec x, \vec y, \vec z)$ for a vector in $\R^5\times\R^5\times\R^4$, let $\Z_2$ act on $S^4\times S^4\times S^3$ by the involution
\begin{equation}
\label{y11_involution}
    (\vec x, \vec y, \vec z)\longmapsto (-\vec y, -\vec x, -\vec z).
\end{equation}
This action is free on $S^4\times S^4\times S^3$; let $Y_{11}$ denote the quotient.
\end{defn}
$Y_{11}$ is an $(S^4\times S^4)$-bundle over $\RP^3$.
\begin{lem}
$Y_{11}$ has a spin structure.
\end{lem}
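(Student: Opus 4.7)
The plan is to show that both $w_1(Y_{11})$ and $w_2(Y_{11})$ vanish by computing $TY_{11}$ stably as a sum of copies of a single line bundle. The key input is the standard embedding $M := S^4\times S^4\times S^3\hookrightarrow \R^5\oplus\R^5\oplus\R^4 = \R^{14}$, under which the involution~\eqref{y11_involution} extends linearly to $\tilde\sigma(a,b,c) = (-b,-a,-c)$ on $\R^{14}$. This gives a $\Z_2$-equivariant isomorphism $TM\oplus \nu_M \cong M \times \R^{14}$, where $\nu_M$ is trivialized by the outward normals $(\vec x, \vec y, \vec z)$.

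I would then decompose both sides into isotypic pieces of $\Z_2$ and descend to $Y_{11}$. A trace computation gives $\R^{14}\cong 5\underline{\R}_+\oplus 9\underline{\R}_-$ (trivial resp.\ sign representations of $\Z_2$), and inspection of the induced action on the outward-normal coordinates $(t,s,r)$ of $\nu_M$ gives the swap $(t,s,r)\mapsto(s,t,r)$ and hence $\nu_M\cong 2\underline{\R}_+\oplus \underline{\R}_-$ equivariantly. Descending via the identification of the sign-representation line bundle on $Y_{11}$ with $\pi^*\gamma$ --- where $\pi\colon Y_{11}\to\RP^3$ is $[x,y,z]\mapsto [z]$ and $\gamma$ is the tautological bundle, which is justified by the fact that $M\to Y_{11}$ is the pullback of $S^3\to\RP^3$ along $\pi$ --- yields the stable isomorphism $TY_{11}\oplus(\underline{\R^2}\oplus\pi^*\gamma)\cong \underline{\R^5}\oplus 9\pi^*\gamma$, and hence $TY_{11}\cong \underline{\R^3}\oplus 8\pi^*\gamma$ stably.

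With $x := \pi^*w_1(\gamma)\in H^1(Y_{11};\Z_2)$, this gives $w(TY_{11}) = (1+x)^8 = (1+x^2)^4 \equiv 1 \pmod{2}$, since $x^4 = 0$ follows from $H^4(\RP^3;\Z_2) = 0$ together with $x$ being pulled back from $\RP^3$. In particular $w_1(Y_{11}) = w_2(Y_{11}) = 0$, so $Y_{11}$ is spin (indeed, even stably parallelizable). The main pitfall to watch for is the sign count in the isotypic decomposition of $\nu_M$ --- in particular that the $S^3$-normal direction is \emph{trivial}, since the ambient $-1$ on $\R^4$ is cancelled by the change of outward normal from $\vec z$ to $-\vec z$; an off-by-one would give $TY_{11}\cong \underline{\R^4}\oplus 7\pi^*\gamma$ stably, spuriously yielding $w_1 = x\neq 0$ in contradiction with the easily checked orientability of $Y_{11}$ (the involution has degree $+1$ since the two degree-$(-1)$ antipodal maps on $S^4$ combine with the swap of degree $+1$).
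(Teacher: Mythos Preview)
Your proof is correct and follows essentially the same approach as the paper: both use the $\Z_2$-equivariant embedding $S^4\times S^4\times S^3\hookrightarrow\R^{14}$ to stably split the tangent bundle, descend the isotypic decomposition to obtain $TY_{11}\cong_{\text{virt.}}\underline\R^3\oplus 8\sigma_\pi$, and conclude spin from the eightfold sum of a line bundle. Your identification of the normal-bundle action as the plain swap $(t,s,r)\mapsto(s,t,r)$ is in fact more carefully argued than the paper's description, and you go slightly further by computing the total Stiefel--Whitney class $(1+x)^8=1$ to obtain stable parallelizability, not merely spin.
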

\begin{proof}
To prove this, we will stably split the tangent bundle of $Y_{11}$. This is a standard technique; for more examples from a similar perspective, see~\cite[\S 5.2, \S 5.5.2]{Freed:2019sco}, \cite[Examples 14.51 and 14.54; Lemma 14.56; Propositions 14.74, 14.83, and 14.101]{Debray:2023yrs}, and~\cite[Lemma 2.68]{Deb23}.

Recall that, since the normal bundle to $S^k\hookrightarrow\R^{k+1}$ is trivialized by the unit outward normal vector field $\vec v$, there is an isomorphism $\phi\colon TS^k\oplus\underline\R\cong\underline\R^{k+1}$; since $\vec v$ is $\O(k+1)$-invariant, $\phi$ promotes to an isomorphism of $\O(k+1)$-equivariant vector bundles, where $\O(k+1)$ acts trivially on the normal bundle and via the defining representation on $\underline\R^{k+1}$.

Applying this thrice, we have an isomorphism of vector bundles
\begin{equation}
\label{S4S4S3}
    T(S^4\times S^4\times S^3)\oplus\underline\R^3\overset\cong\longrightarrow \underline\R^5\oplus\underline\R^5\oplus\underline\R^4.
\end{equation}
The $\Z_2$-action on $S^4\times S^4\times S^3$ we used to define in $Y_{11}$ in \cref{y11_defn} extends to a linear action on $\R^5\times\R^5\times\R^4$, upgrading~\eqref{S4S4S3} to an isomorphism of $\Z_2$-equivariant vector bundles. In a little more detail:
\begin{itemize}
    \item $\Z_2$ acts on $T(S^4\times S^4\times S^3)$ as the derivative of the involution~\eqref{y11_involution}.
    \item $\Z_2$ acts on $\underline\R^5\oplus\underline\R^5\oplus\underline\R^4$ as the $\Z_2$-representation described by the same formula~\eqref{y11_involution}.
    \item $\Z_2$ acts on the normal $\underline\R^3$ by inverting and swapping the first two coordinates, and inverting the third: $(x, y, z)\mapsto (-y, -x, -z)$.\footnote{In particular, unlike most of the standard examples of the stable splitting technique, the normal bundle is \emph{not} equivariantly trivial. This is because the image of the $\Z_2$-representation in $\O(14)$ is not contained in the subgroup $\O(5)\times\O(5)\times\O(4)$.}
\end{itemize}
The isomorphism~\eqref{S4S4S3} of $\Z_2$-equivariant vector bundles descends through the quotient by $\Z_2$ to an isomorphism of vector bundles on $Y_{11}$; trivial bundles made equivariant by a $\Z_2$-representation descend to vector bundles associated to that representation and the principal $\Z_2$-bundle $\pi\colon S^4\times S^4\times S^3\to Y_{11}$.

In particular, if $\sigma_\pi\to Y_{11}$ denotes the line bundle associated to $\pi$ and the sign representation $\sigma$ of $\Z_2$ on $\R$ and $\R$ denotes the trivial representation, then the $\Z_2$-representation $(x, y)\mapsto (-y, -x)$ on $\R^2$ is isomorphic to $\sigma\oplus\R$. Using this, we obtain an isomorphism of vector bundles
\begin{subequations}
\begin{equation}
    TY_{11}\oplus \sigma_\pi\oplus\underline\R^2 \overset\cong\longrightarrow \sigma_\pi^{\oplus 5}\oplus\underline\R^5 \oplus \sigma_\pi^{\oplus 4}.
\end{equation}
Therefore we have an isomorphism of \emph{virtual} vector bundles
\begin{equation}
\label{y11_TM}
    TY_{11} \underset{\text{virt.}}{\overset\cong\longrightarrow} \sigma_\pi^{\oplus 8} + \underline\R^3.
\end{equation}
\end{subequations}
For any vector bundle $V$, $V^{\oplus 4}$ is spin, as can be verified with the Whitney sum formula, and the existence of a spin structure is an invariant of the virtual equivalence class of a vector bundle, so we can conclude.
\end{proof}
\begin{prop}
\label{y11_is_gen}
$Y_{11}$ admits a $\G_{16,16}$-structure such that the bordism invariant
\begin{equation}
    \int_{Y_{11}} w_4^Lw_4^R x^3 = 1.
\end{equation}
\end{prop}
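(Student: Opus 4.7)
The plan has two parts: first, exhibit an explicit $\G_{16,16}$-structure on $Y_{11}$; second, evaluate the mod-$2$ pairing against the fundamental class via the Serre spectral sequence of the fiber bundle $S^4\times S^4\to Y_{11}\to\RP^3$. The argument closely parallels \cite[Prop.\ 2.64]{Deb23} for the $E_8\times E_8$ swap.

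For the structure, the canonical double cover $\pi\colon M'=S^4\times S^4\times S^3\to Y_{11}$ has deck transformation $\tau(\vec x,\vec y,\vec z)=(-\vec y,-\vec x,-\vec z)$. A $\G_{16,16}$-structure amounts to a rank-$16$ spin bundle $V^L\to M'$ together with $V^R:=\tau^*V^L$, subject to the Green-Schwarz trivialization of $\tfrac12 p_1(TY_{11})-(\tfrac12 p_1^L+\tfrac12 p_1^R)$. Using $TY_{11}\sim\sigma_\pi^{\oplus 8}$ from the preceding lemma, $p_1(TY_{11})=0$; applying the Serre spectral sequence with Shapiro's lemma on the local system $\mathcal H^4(S^4\times S^4;\Z)\cong\Z[\Z_2]$ yields $H^4(Y_{11};\Z)\cong\Z$, which has no $2$-torsion, so $\tfrac12 p_1(TY_{11})=0$. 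I would then choose $V^L$ of the form $q_1^*E_1\oplus q_2^*E_2$ with rank-$8$ spin bundles $E_i\to S^4$ of matching $\tfrac12 p_1$, arranging for $\tfrac12 p_1^L+\tfrac12 p_1^R=0$ in $H^4(M';\Z)$, and with the further property that $w_4(V^L)w_4(V^R)$ restricts on a fiber to $\bar\beta_1\bar\beta_2$.

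For the pairing, use the Serre spectral sequence with $\Z_2$ coefficients for $F=S^4\times S^4\to Y_{11}\to\RP^3$. In total degree $11$, only $E_2^{3,8}\cong\Z_2$ is nonzero: the remaining bidegrees vanish because $H^q(F;\Z_2)=0$ for $q\in\{1,2,3,5,6,7\}$ and $\RP^3$ has top cohomological dimension $3$. No differential has a possible nonzero source or target touching $E_r^{3,8}$, so the class survives to $E_\infty$, and $H^{11}(Y_{11};\Z_2)\cong\Z_2$ is detected by the product of $E_\infty^{0,8}$ (restricting on a fiber to $\bar\beta_1\bar\beta_2$) with $E_\infty^{3,0}=H^3(\RP^3;\Z_2)$ generated by $x^3$. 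The multiplicative structure of the spectral sequence identifies this product with $w_4^Lw_4^Rx^3$, giving $\int_{Y_{11}} w_4^Lw_4^Rx^3=1$.

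The crux is the first step: since the universal identity $\rho_2(\tfrac12 p_1)=w_4$ in $H^*(B\Spin(16);\Z_2)$ creates tension between the Green-Schwarz constraint (which pins down $\tfrac12 p_1(V^L)$ to be of the form $a(\beta_1+\beta_2)$) and the need for $w_4(V^L)w_4(V^R)$ to be the top fiber class, the construction of $V^L$ must exploit the richer structure of $M'$ (rather than merely pulling back from the $S^4$ factors). This is precisely the subtlety addressed in \cite[Lemma 2.68]{Deb23} for the $E_8\times E_8$ analog via a careful use of the Whitney sum formula and the off-diagonal structure of $V^L$, and my plan is to adapt that construction to the $\Spin(16)^2$ setting.
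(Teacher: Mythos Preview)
Your proposal has a genuine gap in the first step. You correctly diagnose the tension --- with $\tfrac12 p_1(TY_{11})=0$ and $\rho_2(\tfrac12 p_1)=w_4$ for spin bundles, the Green-Schwarz constraint together with $V^R=\tau^*V^L$ forces $\tfrac12 p_1(V^L)$ to have equal coefficients on $\beta_1$ and $\beta_2$, whence $w_4^Lw_4^R$ restricts to zero on the fiber --- but you never resolve it. The appeal to \cite[Lemma~2.68]{Deb23} is misplaced: that lemma (and the present paper's citation of it, in the proof that $Y_{11}$ is spin) concerns the stable tangent-bundle splitting technique used to exhibit spin structures, not the construction of gauge bundles satisfying a twisted-string condition. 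Moreover, the paper explicitly records that a generator for the class $d$ was left open in \cite[\S 2.2.1(11)]{Deb23}, so there is no prior $E_8\times E_8$ construction to adapt; \cref{y11_is_gen} is precisely the result meant to fill that gap. Your promise to ``adapt that construction'' is therefore empty.

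The paper takes a different, concrete route that sidesteps your symmetric ansatz $V^L=q_1^*E_1\oplus q_2^*E_2$. It puts $V^L$ on a \emph{single} $S^4$ factor: inflate the tautological quaternionic line bundle $L\to\HP^1$ on the first $S^4$ via $\Sp(1)=\Spin(3)\hookrightarrow\Spin(16)$, and place $L^*:=\Hom_\H(L,\underline\H)$ on the second $S^4$. The swap-with-antipode involution carries one to the other, so this descends to a $G_{16,16}$-bundle on $Y_{11}$. Because $V^L$ and $V^R$ are supported on distinct $S^4$ factors, one obtains $w_4(V^L)=\bar\beta_1$ and $w_4(V^R)=\bar\beta_2$, hence $w_4^Lw_4^R=\bar\beta_1\bar\beta_2\neq 0$ on the fiber --- exactly the asymmetry your approach lacked. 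The paper then argues the Green-Schwarz class vanishes because $L$ and $L^*$ have opposite values of $p_1$. Your Serre-spectral-sequence evaluation of the pairing is essentially the same as the paper's, which phrases it via the map of Serre spectral sequences induced by the diagram~\eqref{triple_fibration} of fibrations over $\RP^3\to B\Z_2\to B\Z_2$.
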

\begin{proof}
The following data describes a $\G_{16,16}$-structure on $Y_{11}$: identify $S^4 = \HP^1$ and consider the tautological quaternionic line bundle $L \to \HP^1$ on the first $S^4$ factor, and $L^* :=\Hom_\H(L, \underline\H)$ on the second $S^4$ factor. These have associated $\Sp(1) = \Spin(3)$ bundles; inflate via $\Spin(3) \hookrightarrow \Spin(16)$ to obtain a $(\Spin(16)\times\Spin(16))$-bundle on $S^4\times S^4\times S^3$. The two $\Spin(16)$-bundles are switched when one applies the involution~\eqref{y11_involution}, so on the quotient $Y_{11}$, we obtain a principal $G_{16,16}$-bundle $P\to Y_{11}$.

To verify the claim in the first sentence of our proof, we need to check that a spin structure on $Y_{11}$ and the principal $G_{16,16}$-bundle $P\to Y_{11}$ satisfy the Green-Schwarz condition $\tfrac 12 p_1(TY_{11}) + \tfrac 12 p_1(V^L) + \tfrac 12 p_1(V^R) = 0$. In fact, the two parts of this expression vanish separately.
\begin{itemize}
    \item In~\eqref{y11_TM}, we learned that $TY_{11}$ is virtually equivalent to $\sigma_\pi^{\oplus 8}\oplus\underline\R^3$. This bundle turns out to admit a string structure, meaning $\tfrac 12 p_1(TY_{11}) = 0$. It suffices to prove that $\sigma^{\oplus 8}\to B\Z_2$ admits a string structure, where $\sigma$ is the tautological line bundle. To see this, recall that $\sigma^{\oplus 4}$ (like the sum of $4$ copies of any vector bundle) is spin, so $\sigma^{\oplus 8}\cong\sigma^{\oplus 4}\oplus \sigma^{\oplus 4}$ factors $\sigma^{\oplus 8}$ as the direct sum of two spin vector bundles. Then use the Whitney sum formula for $\tfrac 12 p_1$ of a direct sum of spin vector bundles~\cite[Lemma 1.6]{Deb23} to conclude that in $H^4(B\Z_2;\Z)$,
    \begin{equation}
        \frac 12 p_1(\sigma^{\oplus 8})  = 2\cdot \frac 12 p_1(\sigma^{\oplus 4}).
    \end{equation}
    Maschke's theorem implies that for $k \ge 1$, multiplication by $2$ kills all elements in $H^k(B\Z_2;\Z)$, so $\tfrac 12 p_1(\sigma^{\oplus 8}) = 0$.
    \item The bundles $L$ and $L^*$ over $S^4$ have inverse values of $p_1$, hence also of $\tfrac 12 p_1$ (since $H^4(S^4;\Z)$ is torsion-free, the latter follows from the former). Therefore when we descend from $S^4\times S^4\times S^3$ to $Y_{11}$, the class $\tfrac 12 p_1(V^L) + \tfrac 12 p_1(V^R)$ is $0$.
\end{itemize}
Finally, we need to verify $\int_{Y_{11}} w_4^Lw_4^R x^3 = 1$. Since $H^{11}(Y_{11};\Z_2)\cong\Z_2$, it suffices to show that the pullback of $w_4^Lw_4^Rx^3\in H^{11}(BG_{16,16};\Z_2)$ along the classifying map $f_P\colon Y\to BG_{16,16}$ for $P\to Y_{11}$ is nonzero. To do this, first factor $f_P$ into the following diagram of three fibrations:
\begin{equation}
\label{triple_fibration}
\begin{tikzcd}
	{S^4\times S^4} & {S^4\times S^4} & {B\Spin(16)\times B\Spin(16)} \\
	& {Y_{11}} & {\mathcal X} & {BG_{16,16}} \\
	& {\RP^3} & {B\Z_2} & {B\Z_2.}
	\arrow[from=3-2, to=3-3]
	\arrow[from=2-2, to=3-2]
	\arrow[from=1-1, to=2-2]
	\arrow[from=2-2, to=2-3]
	\arrow[from=2-3, to=3-3]
	\arrow[from=2-4, to=3-4]
	\arrow[from=1-3, to=2-4]
	\arrow[from=1-2, to=2-3]
	\arrow[Rightarrow, no head, from=3-3, to=3-4]
	\arrow[from=2-3, to=2-4]
	\arrow[Rightarrow, no head, from=1-1, to=1-2]
	\arrow["j", from=1-2, to=1-3]
	\arrow["\lrcorner"{anchor=center, pos=0.125}, draw=none, from=2-2, to=3-3]
	\arrow["{f_P}"'{pos=0.3}, curve={height=12pt}, dashed, from=2-2, to=2-4]
\end{tikzcd}\end{equation}
Here $\mathcal X$ is the $S^4\times S^4$-bundle over $B\Z_2$ defined analogously to $Y_{11}$ but using $S^\infty = E\Z_2$ instead of $S^3$. The map $j\colon S^4\times S^4\to B\Spin(16)\times B\Spin(16)$ is the map classifying $L$ and $L^*$.

The diagram~\eqref{triple_fibration} induces maps between the Serre spectral sequences of the three fibrations; using it, one can compute the pullback of $w_4^Lw_4^Rx^3$ to $Y_{11}$ and see that it is nonzero, as promised.
\end{proof}
\begin{cor}
In the Adams spectral sequence in \cref{heterotic_SS}, $d$ survives to the $E_\infty$-page; in particular, $d_2(d) = 0$.
\end{cor}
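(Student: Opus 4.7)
The plan is to apply the standard principle that a class in Adams filtration zero survives to $E_\infty$ precisely when the corresponding $\Z_2$-valued characteristic number is realized on some closed manifold with the appropriate tangential structure; see~\cite[\S 8.4]{Freed:2019sco} for a clear statement of this dictionary. The class $d\in E_2^{0,11}$ sits in the summand $\Ext_{\cA(2)}(\textcolor{Green}{M_5})$ and, via the edge morphism of the Adams spectral sequence composed with the Thom isomorphism, is the filtration-zero class associated to the characteristic class $w_4^L w_4^R x^3 \in H^{11}(BG_{16,16};\Z_2)$; concretely, $U w_4^L w_4^R x^3$ is the degree-$11$ generator of $M_5$ visible in \cref{A2_swap}. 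The associated mod-$2$ bordism invariant of $\Omega_{11}^{\G_{16,16}}$ is then $[M]\mapsto \int_M w_4^L w_4^R x^3$.

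With this identification made, the main input is already in place: \cref{y11_is_gen} exhibits a closed $\G_{16,16}$-manifold $Y_{11}$ with $\int_{Y_{11}} w_4^L w_4^R x^3 = 1$, so the bordism invariant is realized. Now suppose $d$ supported a nonzero Adams differential $d_r$ for some $r\ge 2$; incoming differentials to $E_r^{0,11}$ vanish for degree reasons, so this would force $d$ to die on the $E_{r+1}$-page and contribute nothing to $\pi_{11}(\mathit{MT}\G_{16,16})_2^\wedge$. The characteristic number $\int w_4^L w_4^R x^3$ would then vanish on every closed $\G_{16,16}$-manifold, directly contradicting \cref{y11_is_gen}. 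Hence $d$ survives to $E_\infty$, and in particular $d_2(d)=0$.

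The only step requiring care is pinning down that $d$ truly represents $w_4^L w_4^R x^3$ (and not some other class in $M_5$ in the same total degree); this is a direct bookkeeping check using \cref{mathcalM} and the $\cA(2)$-module picture in \cref{A2_swap}. All of the genuinely geometric content is absorbed into the construction of $Y_{11}$, so once this identification is noted no further spectral-sequence or manifold-theoretic work is needed.
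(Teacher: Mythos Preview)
Your proposal is correct and follows essentially the same approach as the paper: identify the filtration-zero class $d$ with the mod~$2$ characteristic class $w_4^L w_4^R x^3$ by inspection of the $\cA(2)$-module picture, then invoke \cref{y11_is_gen} to produce a closed $\G_{16,16}$-manifold $Y_{11}$ on which this characteristic number is nonzero, forcing $d$ to survive to $E_\infty$. The paper's proof is a bit terser but uses exactly this strategy.
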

\begin{proof}
We reuse the strategy from \cref{gens_diffs}: since $d$ is in filtration $0$, it corresponds to some characteristic class $c\in H^{11}(BG_{16,16};\Z_2)$, and $d$ survives to the $E_\infty$-page if and only if there is some closed $11$-dimensional $\G_{16,16}$-manifold $M$ such that $\int_M c = 1$. By inspection of \cref{A2_swap}, $c = w_4^Lw_4^Rx^3$, so by \cref{y11_is_gen} we can take $M = Y_{11}$.
\end{proof}
The last step in this calculation is to address extensions. The argument is nearly identical to~\cite[Lemma 2.59 and Proposition 2.60]{Deb23}, though one now has the extra classes $h_0^k a_1$ for $k\ge 0$, $h_1a_1$, and $h_1^2 a_1$ in degrees $8$, $9$, and $10$ respectively which were not present in the $E_8\times E_8$ spectral sequence. Fortunately, this new ambiguity is fully resolved by applying the ``$2\eta = 0$ trick'' to classes of the form $h_1x$ in standard ways, for example as in~\cite[Corollary F.16(2)]{KPMT20}, \cite[(5.47)]{Deb21}, \cite[Lemmas 14.29 and 14.33]{Debray:2023yrs}, and \cite[Lemma 2.59]{Deb23}, and one learns that there are no hidden extensions in degrees $10$ and below. Unfortunately, just like in the $E_8\times E_8$ case~\cite[Theorem 2.62]{Deb23}, we have not ruled out the possibility of a hidden extension in $\Omega_{11}^{\G_{16,16}}$.
\end{proof}
The generators described in~\cite[\S 2.2.1, \S 2.2.2]{Deb23} for $\Omega_*^{\xi^{\mathrm{het}}}$ pull back to
generate most of the corresponding $\mathbb G_{16,16}$ bordism groups: the difference between a $\G_{16,16}$-structure and a $\xi^{\mathrm{het}}$-structure is that in the latter, $\Spin(16)$ is replaced by $E_8$, so to support our claim that the generators there pull back to generators of $\G_{16,16}$-bordism, we must argue that the $(E_8\times E_8)\rtimes\Z_2$-bundles are induced from $G_{16,16}$-bundles. As usual we may replace $BE_8$ with $K(\Z, 4)$, so this amounts to checking that for the generating manifolds in~\cite[\S 2.2.1, \S 2.2.2]{Deb23}, the degree-$4$ classes entering the Green-Schwarz mechanism can be written as $\tfrac 12 p_1(V)$ for some rank-$16$ spin vector bundle $V$. By adding trivial summands, we may use lower-rank spin vector bundles.

By inspection of the list of generators in~\cite[\S 2.2.1, \S 2.2.2]{Deb23}, it suffices to show this for $\HP^2$: the rest of the list of generators there either have degree-$4$ classes equal to $0$, have their $\xi^{\mathrm{het}}$-structure induced from a spin structure (so that we may use the tangent bundle to define the $\G_{16,16}$-structure as in the proof of \cref{spin_split}), or are products of manifolds otherwise accounted for. For $\HP^2$, the degree-$4$ classes come from the tautological quaternionic line bundle, hence define a $\G_{16,16}$-structure.

Thus the list of generators in~\cite[\S 2.2.1, \S 2.2.2]{Deb23} accounts for most of the generators of the $\G_{16,16}$-bordism groups we have computed. A few manifolds are as yet unaccounted for.
\begin{enumerate}
    \item There is an $8$-dimensional $\G_{16, 16}$-manifold $Y_8$ generating a $\Z$ and whose image in the Adams $E_\infty$-page is $a_1$. The $\Z_2$ summands lifting $h_1a_1$ and $h_1^2a_1$ are also unaccounted for, and can be generated by $Y_8\times S_{\mathit{nb}}^1$, resp.\ $Y_8\times S_{\mathit{nb}}^1\times S_{\mathit{nb}}^1$.
    \item Depending on the fate of $d_2(c)$, there may be a $\Z_2$ summand in $\Omega_9^{\G_{16,16}}$ whose generator lifts the class $c\in E_\infty^{9,0}$. In~\cite[\S 2.2.1]{Deb23} no generator was provided and we also do not know what manifold this would be.
    \item A generator lifting the class $d\in E_\infty^{0,11}$ was left as an open question in~\cite[\S 2.2.1(11)]{Deb23}. Thanks to \cref{y11_is_gen}, we can choose $Y_{11}$.
\end{enumerate}
Thus we have found generators for all classes except for $a_1$, $c$, and $h_1c$; $c_1$ and $hc_1$ may or may not be trivial, depending on the value of an Adams differential.

\subsection{Cancelling the anomaly}
Now that we have the generators of $\Omega_{11}^{\mathbb G_{16,16}}$ in hand, we proceed to calculate the partition function of the anomaly theory on these generators and show that it is trivial. We are able to do this without knowing the isomorphism type of $\Omega_{11}^{\mathbb G_{16,16}}$, similarly to Freed-Hopkins' approach in~\cite{Freed:2019sco}.
\begin{thm}
\label{non_susy_swap_anomaly}
Let $\alpha$ denote the anomaly field theory for the $\Spin(16)^2$ heterotic string on $\mathbb G_{16,16}$-manifolds. Then $\alpha$ is isomorphic to the trivial theory.
\end{thm}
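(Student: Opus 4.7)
The plan is to evaluate the anomaly theory $\alpha$ on a generating set of $\Omega_{11}^{\mathbb G_{16,16}}$ and show it vanishes on each generator. Because $\alpha$ is a homomorphism $\Omega_{11}^{\mathbb G_{16,16}}\to U(1)$, it suffices to check a generating set; we do not need to resolve the isomorphism type of $\Omega_{11}^{\mathbb G_{16,16}}$ (nor the one unresolved Adams differential $d_2(c)$), since any generator of a potentially nontrivial summand can be treated individually.

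First I would organize the candidate generators into three buckets. \emph{Bucket A}: classes pulled back from $\Omega_{11}^{\xi^{\mathrm{het}}}$, i.e.\ from generators of the $E_8\times E_8$-with-swap bordism of~\cite{Deb23}. For these, \cref{susy_swap_anomaly} already says the $E_8\times E_8$ swap anomaly vanishes. Combined with the formal identity of anomaly polynomials $P^{E_8\times E_8}_{12}|_{SO(16)^2} - P^{\Spin(32)/\Z_2}_{12}|_{SO(16)^2} = P^{SO(16)^2}_{12}$ of \eqref{eq:e8e8minusso32}, promoted to an equality of anomaly theories (eta invariants plus Green-Schwarz terms, which is legitimate because the splitting is at the level of chiral field content), this reduces the vanishing of $\alpha$ on Bucket A to the vanishing of the $\Spin(32)/\Z_2$ swap anomaly on the same classes. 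But a $\Spin(32)/\Z_2$ gauge bundle on a $\mathbb G_{16,16}$-manifold $M$ (induced from the $SO(16)^2$ bundle) pulls back along the double cover $M'\to M$ to a bona fide $\String$-$\Spin(32)/\Z_2$ structure, and the bordism group controlling its anomaly was shown in~\cite{Freed:2000ta} and the remark following \cref{noswap} to vanish in dimension $11$; transfer along the double cover then kills $\alpha$ on Bucket A.

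\emph{Bucket B}: classes built as products with one or two copies of the non-bounding spin circle $S^1_{\mathit{nb}}$, such as $Y_8\times S^1_{\mathit{nb}}$ and $Y_8\times (S^1_{\mathit{nb}})^2$. For these I would use multiplicativity of the anomaly: the eta invariant factorizes on products $A\times B$ with $A$ odd-dimensional through $\eta(A)\,\mathrm{index}(B)$, and the Green-Schwarz piece $\int H\wedge X_8$ vanishes on such products because either $H^3$ or the integrand factor vanishes. Reduction then identifies these anomalies with lower-dimensional ones on $Y_8$ (resp.\ a point), which can be directly computed. \emph{Bucket C}: the genuinely new $11$-dimensional generator $Y_{11} = (S^4\times S^4\times S^3)/\Z_2$ constructed in \cref{y11_is_gen}. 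For $Y_{11}$, the Green-Schwarz term $\int_{Y_{11}} H\wedge X_8$ vanishes because $Y_{11}$ has an explicit stable splitting of its tangent bundle~\eqref{y11_TM} in terms of the sign line bundle pulled back from $\RP^3$, which lets one compute directly that $X_8$ lifts along the double cover to a $\Z_2$-antisymmetric class (since $V^L$ and $V^R$ are interchanged by the deck transformation with $\tfrac12 p_1(L)=-\tfrac12 p_1(L^*)$), so its integral is cohomologically trivial; the fermion $\eta$-invariants then reduce to a lens-space-like computation on the $S^3$ factor via the fibration structure of $Y_{11}$ over $\RP^3$, and match the vanishing of the supersymmetric swap anomaly on the analogous generator of the $E_8\times E_8$ calculation.

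The main obstacle I expect is Bucket C, specifically ensuring that the identification of the $SO(16)^2$ anomaly with the difference of the $E_8\times E_8$ and $\Spin(32)/\Z_2$ anomalies is not merely at the level of anomaly polynomials (which cannot distinguish elements of $\Omega_{11}$ killed on the $E_\infty$-page) but persists at the level of invertible field theories, i.e.\ as full $\eta$-invariants plus Green-Schwarz couplings. This requires a careful matching of chiral multiplets and their boundary conditions under the Higgsing chains $E_8\to\Spin(16)$ and $\Spin(32)\to\Spin(16)^2$, equivariantly for the swap. Once this is established, the vanishing of $\alpha$ on Buckets A and C follows; Bucket B is routine.
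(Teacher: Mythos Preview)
Your proposal has a serious circularity problem and several technical gaps.

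\textbf{Circularity.} You invoke \cref{susy_swap_anomaly} (the vanishing of the $E_8\times E_8$ swap anomaly) as input for Bucket~A. But in the paper \cref{susy_swap_anomaly} is a \emph{corollary} of \cref{non_susy_swap_anomaly}: the logical flow runs the other way. The paper proves the $SO(16)^2$ case by direct computation on generators, and only afterward observes that essentially the same arguments give the $E_8\times E_8$ result. If you want to use the $E_8^2 - \Spin(32)$ decomposition you must independently establish both pieces, with the swap turned on, before you can conclude anything.

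\textbf{Bucket B is in the wrong dimension.} The manifolds $Y_8\times S^1_{\mathit{nb}}$ and $Y_8\times (S^1_{\mathit{nb}})^2$ are $9$- and $10$-dimensional; they are irrelevant for $\Omega_{11}^{\mathbb G_{16,16}}$.

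\textbf{The transfer argument in Bucket A does not work.} Knowing that the double cover $M'$ bounds as a String-$\Spin(32)$ manifold only gives $2\alpha_{\Spin(32)}(M) = \alpha_{\Spin(32)}(M') = 0$, i.e.\ that $\alpha_{\Spin(32)}(M)$ is $2$-torsion. Since $\lvert\Omega_{11}^{\mathbb G_{16,16}}\rvert = 64$, this is far from enough. More fundamentally, the swap acts on $\Spin(32)$ by an inner automorphism exchanging the two $SO(16)$ blocks, and the anomaly must be evaluated $\Z_2$-equivariantly on $M$ itself; pulling back to $M'$ forgets precisely the information you need.

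\textbf{What the paper actually does.} The generating set for $\Omega_{11}^{\mathbb G_{16,16}}$ is just $\{B\times\RP^3,\; Y_{11}\}$. On $B\times\RP^3$ the $\Spin(16)^2$ bundle is trivial (only the swap $\Z_2$ is on), so $X_8=0$ and $\alpha_{X_8}$ is trivial; dimensional reduction on the Bott manifold $B$ (which has $\widehat A=1$) reproduces the 10d chiral spectrum in 2d, and one checks by branching that the $(\mathbf{128},\mathbf{1})\oplus(\mathbf{1},\mathbf{128})$ and the $(\mathbf{16},\mathbf{16})$ each give $128$ swap-charged 2d fermions of opposite chirality, so the $\eta$-invariants cancel on $\RP^3$. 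On $Y_{11}$ the paper does \emph{not} attempt to compute $X_8$ or $\eta$ directly; instead it uses deformation invariance of $\alpha$: shrink the gauge instantons on each $S^4$ to the small-instanton (NS5) limit, where by \S\ref{sec:so16so16_inflow} the worldvolume theory becomes symmetric between the two $\Spin(16)$ factors, so the resulting NS5 and anti-NS5 on each sphere annihilate and one is left with a trivial configuration. Your sketch for $Y_{11}$ (arguing $X_8$ is $\Z_2$-antisymmetric on the cover and reducing the fermion $\eta$'s to a lens-space computation) heads in a different direction and would need substantial additional work: even if the $X_8$ piece vanished by antisymmetry, the fermion $\eta$-invariants on the quotient are not thereby controlled.
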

\begin{proof}
Recall that $\alpha\cong \alpha_f\otimes\alpha_{X_8}$, where $\alpha_f$ is the anomaly of the fermionic fields and $\alpha_{X_8}$ is the anomaly coming from the term $-\int B_2\wedge X_8$~\eqref{eq:GSterm10d} that the Green-Schwarz mechanism adds to the action, as we discussed in \S\ref{sec:local}.

We will calculate $\alpha$ on a generating set for $\Omega_{11}^{\mathbb G_{16,16}}$.  Based on \cite{Debray:2023yrs} and the above discussion, the two generators are
\begin{equation}B\times \mathbb{RP}^3\quad\text{and}\quad Y_{11} \label{e323}\end{equation}
with $\mathbb G_{16,16}$-structures described in the previous subsection, corresponding physically to turning on appropriate gauge bundles. Here $B$ is a \emph{Bott manifold}, i.e.\ a closed spin $8$-manifold satisfying $\widehat A(B) = 1$, and indeed any choice of $B$ that admits a string structure may be used in this computation. We will use the Bott manifold constructed by Freed-Hopkins in~\cite[\S 5.3]{Freed:2019sco}; those authors show $\tfrac 12 p_1(B) = 0$, so $B$ is string, and that $p_2 = -1440 b$, where $b$ is a generator of $H^8(B;\Z)\cong\Z$. Any other Bott manifold is cobordant to this one, so we will not botter studying them all.

The first generator we evaluate $\alpha$ on is $B\times\RP^3$. As discussed in~\cite[\S 2.2.1]{Deb23}, this generator has $G_{16,16}$-bundle induced from the principal $\Z/2$-bundle $S^3\times B\to\RP^3\times B$ and any inclusion $\Z_2\hookrightarrow G_{16,16}$ complementary to the normal $\Spin(16)^2$ subgroup. From a physics point of view, this means the $\Spin(16)$ gauge bundles are trivial: the $\Z_2$ symmetry switches two copies of the trivial bundle. This implies $X_8 = 0$, so $\alpha_{X_8}$ is trivial. For $\alpha_f$, we must calculate the $\eta$-invariants of the spinor bundles associated to the gauge bundles. We will first dimensionally reduce our theory on $B$, to obtain a 2d effective theory, and study the corresponding anomaly, which is the dimensional reduction of $\alpha_f$, on $\RP^3$. As the defining property of a Bott manifold is that the Dirac index is 1, and the gauge bundle is switched off in our example, the 2d spectrum is identical to the ten-dimensional one, so showing that the anomaly on $\RP^3$ is trivial will imply $\alpha_f(B\times\RP^3) = 1$.

We need to know the gauge bundle on $\RP^3$,\footnote{In general keeping track of tangential structures on dimensional reductions can be complicated (see, e.g., \cite[\S 9]{SP18}, but because $B$ has a string structure and the tangential structure of the theory is a twisted string structure, we do not need to worry about this detail.} but because the gauge bundle is trivial on $B$, we can describe the $G_{16,16}$-bundle on $\RP^3$ as induced from the $\Z_2$-bundle $S^3\to\RP^3$. Thus we should see how the $G_{16,16}$-representations describing the fermions branch when we restrict to $\Z_2$.
\begin{itemize}
\item The 10d fermions in the $(\mathbf{128},\mathbf{1})\oplus(\mathbf{1},\mathbf{128})$ give a total of 128 2d fermions transforming as singlets of the swap, and another 128 transforming in the sign representation.
\item For the 10d fermions in the $(\mathbf{16},\mathbf{16})$, the swap is implemented via a matrix with sixteen blocks each having eigenvalues $\pm1$, again giving 128 fermions in each of the trivial and sign representations of the swap $\mathbb{Z}_2$. Since the 10d fermions have opposite chirality to those in the previous point, the resulting 2d fermions also come in opposite chirality.
\end{itemize}
With these matter assignments, we obtain a total of $128$ $\mathbb{Z}_2$ charged fermions of each chirality, which collectively are anomaly-free (and therefore, gravitational anomalies cancel). Therefore, there is no anomaly under the swap on any background, such as $\mathbb{RP}^3$: the $\eta$-invariants all cancel out. Thus $\alpha_f(B\times\RP^3)$ vanishes and the overall anomaly $\alpha_f\otimes\alpha_{X_8}$ vanishes on $B\times\RP^3$.

For $Y_{11}$, which is an $(S^4\times S^4)$-bundle over $\RP^3$, we perform a twisted compactification on $S^4\times S^4$ and study the anomaly of the resulting 2d theory on $\RP^3$. Because $Y_{11}$ is not a product, we must take a little more care with this procedure, but it is not so difficult to show that the assignment from a string $3$-manifold $N$ with principal $\Z/2$-bundle $P\to N$ to the manifold
\begin{equation}
    \kappa(N) := (S^4\times S^4)\times_{\Z_2} N,
\end{equation}
where the two copies of $S^4$ are given the same $\Z_2$-action and $\Spin(16)$-bundles as we used in the construction of $Y_{11}$, produces a $\G_{16,16}$-manifold for all $N$ and is compatible with bordism, allowing $\kappa$ to define a functor of bordism categories and therefore a twisted compactification as promised.

The covering $S^4$ has $\Spin(16)^2$ bundles characterized by a second Chern class
\begin{equation}
    c_{2}^{SO(16),i}=(-1)^i( b_1+b_2),
\end{equation}
where $b_1,b_2$ are the volume forms of both $S^4$ factors. Now, rather than explicitly computing the dimensional reductions of $\alpha_f$ and $\alpha_{X_8}$ on $\RP^3$, we take advantage of the fact that $\alpha$ is a deformation invariant, so we may deform our 2d theory into something where the value of the anomaly on $\RP^3$ is more obviously trivial.\footnote{The $SO(16)\times SO(16)$ string is non-supersymmetric, and therefore the deformations we have just outlined in the previous paragraph may be obstructed dynamically; for instance, there may be a potential obstructing the small instanton limit. However, since we only wish to compute the anomaly, we may ignore such effects; the only ingredient we really need is the fact, proven in Section \ref{sec:so16so16_inflow}, that in the small instanton limit the anomaly becomes symmetric between both $\Spin(16)$ factors.} Specifically, 
we can take a limit in moduli space where the instantons become singular and pointlike, turning into a non-supersymmetric version of the heterotic NS5-brane; as explained in Section \ref{sec:so16so16_inflow}, the resulting theory becomes symmetric between the two $\Spin(16)$ factors, implying that, just like in the supersymmetric heterotic string theories, small instantons of both gauge factors are identified. After deforming in this way the gauge bundle on both $\Spin(16)$ factors, we are left with a single pointlike NS5 and a single anti-NS5 in each sphere, which annihilate, leading to a trivial and therefore anomaly-free configuration for the compactified theory, and implying that $\alpha(Y_{11}) = 1$.
\end{proof}
As a bonus, we can answer a question of~\cite{Deb23}, giving a bordism-theoretic argument for the analogous anomaly cancellation question for the $E_8\times E_8$ heterotic string. This anomaly cancellation result was first established by Tachikawa-Yamashita in~\cite{Tachikawa:2021mby} by a different argument.
\begin{cor}
\label{susy_swap_anomaly}
The anomaly field theory $\alpha$ for the $E_8\times E_8$ heterotic string theory taking into account the $\Z_2$ swap symmetry is trivial.
\end{cor}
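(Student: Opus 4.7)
The plan is to carry out the argument of \cref{non_susy_swap_anomaly} in the supersymmetric setting: without pinning down the precise isomorphism class of $\Omega_{11}^{\xi^{\mathrm{het}}}$ (only its order, $64$, is known from \cite[Theorem 2.62]{Deb23}), I would enumerate a generating set of bordism classes and verify that the $E_8 \times E_8$ heterotic anomaly $\alpha \cong \alpha_f \otimes \alpha_{X_8}$ evaluates trivially on each. The inclusion $\Spin(16) \hookrightarrow E_8$ induces a map of tangential structures $\G_{16,16} \to \xi^{\mathrm{het}}$ (via the factorization $B\Spin(16) \to K(\Z,4)$ that replaces $BE_8$ in the relevant range of degrees), under which the two $\G_{16,16}$-generators used to prove \cref{non_susy_swap_anomaly} --- namely $B \times \RP^3$ with trivial bundle and $Y_{11}$ with $\Sp(1)$-induced bundle --- descend to generators of $\Omega_{11}^{\xi^{\mathrm{het}}}$, as discussed in the paragraph preceding this corollary and spelled out at the manifold level in \cite[\S 2.2]{Deb23}.

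For $B \times \RP^3$: the $E_8 \times E_8$-bundle is trivial, hence $X_8 = 0$ and $\alpha_{X_8}$ is trivial. The fermion anomaly is computed by dimensionally reducing on the Bott manifold $B$ (which satisfies $\hat A(B) = 1$) to a 2d theory whose swap-charged fermion content comes from the decomposition of $(\mathbf{248},\mathbf{1}) \oplus (\mathbf{1},\mathbf{248})$ as $248$ copies of the trivial and $248$ copies of the sign representation of $\Z_2$, all of the same chirality; the $\mathcal N = 1$ gravity multiplet fermions are swap singlets and do not contribute. The anomaly of a 2d Weyl fermion in the sign representation of $\Z_2$ on spin $3$-manifolds is $\Z_8$-valued (classified by $\Omega_3^\Spin(B\Z_2)$), and since $248 = 8 \cdot 31$, the contribution of the sign-representation fermions vanishes modulo $1$. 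Hence $\alpha(B \times \RP^3) = 1$.

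For $Y_{11}$: the argument is identical to the one in \cref{non_susy_swap_anomaly}. Lifting the $\Spin(16)^2$-bundle used there to an $(E_8 \times E_8) \rtimes \Z_2$-bundle via $\Sp(1) \subset \Spin(16) \subset E_8$, I would deform to the small-instanton limit --- which, unlike in the non-supersymmetric case, is genuinely unobstructed since the instanton size modulus remains exactly flat in the $E_8 \times E_8$ heterotic theory --- so that the pointlike NS5 and anti-NS5 branes on each $S^4$ annihilate, producing a trivial configuration. Deformation invariance of $\alpha$ then yields $\alpha(Y_{11}) = 1$.

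The main obstacle is the identification of a generating set: verifying that $B \times \RP^3$ and $Y_{11}$ actually do span $\Omega_{11}^{\xi^{\mathrm{het}}}$ (up to the same extension ambiguities already encountered in the $\G_{16,16}$ computation). This would follow by tracing the Adams $E_\infty$-page generators of \cite[\S 2.2]{Deb23} through their manifold-level representatives and confirming compatibility with the map of tangential structures $\G_{16,16} \to \xi^{\mathrm{het}}$, parallel to the discussion of generators already given above for $\G_{16,16}$. Once this identification is in hand, the proof is a direct transcription of the previous theorem's argument, providing the bordism-theoretic derivation requested in \cite{Deb23} and complementary to the worldsheet argument of \cite{Tachikawa:2021mby}.
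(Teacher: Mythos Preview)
Your proposal is correct and follows essentially the same approach as the paper: both evaluate $\alpha$ on the two generators $B\times\RP^3$ and $Y_{11}$, using the $248\equiv 0\pmod 8$ argument via $\Omega_3^\Spin(B\Z_2)\cong\Z_8$ for the first and the small-instanton deformation for the second. Your treatment is slightly more explicit about the identification of generators via the map $\G_{16,16}\to\xi^{\mathrm{het}}$, which the paper leaves largely implicit.
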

\begin{proof}
The argument for $Y_{11}$ also works in the supersymmetric $E_8\times E_8$ theory, since the instantons may also be embedded in $E_8$. In the supersymmetric case, the pointlike limit of the instanton is the ordinary, supersymmetric heterotic NS5-brane, as illustrated in \cite{Horava:1996ma}, so the $E_8^2$ anomaly vanishes on $Y_{11}$.

For $B\times\RP^3$, dimensional reduction leads to 248 singlet and 248 fermions (from the $E_8$ adjoints) charged under the sign representation. Since the relevant anomaly is controlled by
\begin{equation}\Omega_{3}^{\text{Spin}}(B\mathbb{Z}_2)=\mathbb{Z}_8,\end{equation}
and 248 is a multiple of 8, we conclude there is no swap anomaly either.  Finally, we already know that gravitational anomalies cancel in $B\times \mathbb{RP}^3$, since if we forget about the swap this is just an ordinary string background. 
\end{proof}
In summary, we have shown that anomalies vanish under both generators of the swap bordism group, both for Spin$(16)^2$ and $E_8\times E_8$. The supersymmetric case is covered by the worldsheet analysis in \cite{Tachikawa:2021mby}, which takes into account twists including the swap we just discussed. Thus, we recover a special case of the general anomaly cancellation result there. On the other hand, our approach covers the  non-supersymmetric $SO(16)^2$ case (for the case of geometric target spaces only).

\section{Conclusions} \label{conclus}

Our world is non-supersymmetric, and that fact alone means that non-supersymmetric corners of the string landscape warrant much more attention than they have received so far, both as a source of interesting backgrounds that might connect more directly to our universe, as well as a new trove of data to check and refine Swampland constraints. In this paper we have moved a bit in this direction by computing the bordism groups and anomalies associated to twisted string structures in the three known non-supersymmetric, tachyon free string models in ten dimensions. The results we obtained are summarized in Table \ref{tbor} for the Sugimoto and $\Spin(16)^2$ groups; for the more complicated Sagnotti 0'B model, we were just able to show that there is a potential $\Z_2$ anomaly. 

\begin{table}[!htbp]
\centering
\begin{tabular}{c| c c c c}
$k$ & $\Omega_k^{\text{String}-\Spin(16)^2} $ &$\Omega_k^{\mathbb{G}_{16,16}} $ & $\Omega_k^{\text{String}-\Sp(16)} $ & $\Omega_k^{\String\text{-}\SU(32)\ang{c_3}}$\\\hline
$0$ &$\Z$&$\Z$ & $\Z$ & $\Z$\\
$1$ &$\Z_2$& $\Z_2^2$& $\Z_2$ & $\Z_2$\\
$2$ &$\Z_2$&$\Z_2^2$ & $\Z_2$ & $\Z_2$\\
$3$ &$0$&$\Z_8$ & $0$ & $0$\\
$4$ &$\Z^2$&$\mathbb{Z}\oplus\Z_2$ & $\Z$ & $\Z$\\
$5$ &0& 0& $\Z_2$ & $\Z_2$\\
$6$ &0&$\Z_2$ & $\Z_2$ & $\Z_2$ or $\Z_2^2$\\
$7$ &0&$\Z_{16}$ & 0 & 0 or $\Z_2$ \\
$8$ &$\Z^6$&$\Z^3\oplus\Z_2^i$ & $\Z^3$ &  $\Z^3\oplus \Z_2$ or  $\Z^3\oplus \Z_2^2$\\
$9$ &$\Z_2^5$& $\Z_2^j$& $\Z_2^2$ & $\Z_2^2$\\
$10$ &$\Z_2^7$& $\Z_2^k$& $\Z_2^3$ & $\Z\oplus\Z_2$ or $\Z\oplus\Z_2^2$\\
$11$ &0& $A$ & $0$& $0$ or $\Z_2$
\end{tabular}
\caption{Twisted string bordism groups computed in this paper for the $\Spin(16)^2$ theory with and without including the swap (second and third columns), for the Sugimoto string (fourth column), and for the Sagnotti string (fifth column). In the second column, $i,j,k$ are unknown integers, and $A$ is an abelian group of order 64 (see Section \ref{sec:swapping} for details). In the fifth column, there are ambiguities due to undetermined differentials in the Adams spectral sequence; see \S\ref{ss:sagnotti} for details. In some cases, the bordism group vanishes in degree $11$, which automatically implies the corresponding theory has no anomalies; we also show the anomaly can be trivialized for the $\Z_2$ outer automorphism of the $\Spin(16)^2$ string, even though the bordism group is nonzero.
The results in this table can be further used to classify bordism classes and predict new solitonic objects in these non-supersymmetric string theories following \cite{McNamara:2019rup,Debray:2023yrs}.  }
\label{tbor}
\end{table}

From the results of the table, it is clear that both $\Spin(16)^2$ and Sugimoto models are free of global anomalies. One might have expected this from the fact that they have a consistent worldsheet description. However, there can be non-perturbative consistency conditions that are not automatically satisfied by the existence of a consistent worldsheet at one-loop, see for instance \cite{Uranga:2000xp}, where a $K$-theory tadpole is not detected by the closed string sector. It would be very interesting to determine in full generality whether existence of a consistent worldsheet is sufficient to guarantee consistence of the target spacetime. Although we have not settled the question of consistency in the Sagnotti string, we expect that it is also free of anomalies; for instance, upon circle compactification, it can be related to a ``hybrid'' type I' setup involving an $O8^+$ plane and an $\overline{O8^-}$, both of which are individually consistent \cite{Dudas:2000sn}. 

Perhaps the more  interesting result of our work is Table \ref{tbor} itself, listing the bordism groups of the $\Spin(16)^2$ and Sugimoto theories. An obvious follow-up to this paper is to use the Cobordism Conjecture \cite{McNamara:2019rup} together with the groups in Table \eq{tbor} to predict new, non-supersymmetric objects in the non-supersymmetric string theories, similarly to what has been done in type II in \cite{Debray:2023yrs}. While it is natural to expect these new branes to be non-supersymmetric, it may be worthwhile to pursue this direction in more detail.

One subtlety that must be kept in mind, when considering our results, is that we did not necessarily use the correct global form of the gauge group in our calculations. With the exception of the $SO(16)^2\rtimes\Z_2$, we focused on simply connected versions of all the groups, which immensely simplified the calculations. Since any bundle before taking a quotient is still an allowed bundle after taking the quotient, our results show that a very large class of allowed bundles in the Sugimoto and $SO(16)^2$ theories are anomaly free\footnote{The equivalence discussed at the beginning of Subsection \ref{sec:so16so16_inflow} shows that anomalies vanish for $SO(16)^2$ even when the correct global form is taken into account.}, but particularly in the Sugimoto case there may be more bundles to check if the gauge group is actually $\Sp(16)/\Z_2$. In the type I theory, we know the group is $\Spin(32)/\Z_2$ and not just $\SO(32)$ due to the presence of $K$-theory solitons transforming in a spinorial representation \cite{Witten:1998cd}. In the Sugimoto theory, the relevant $K$-theory is symplectic, and there do not seem to be any such solitonic particles \cite{Sugimoto:1999tx}, suggesting that the group might actually be $\Sp(16)/\Z_2$. It would be interesting to elucidate this point and figure out whether there really are any global anomalies beyond those studied here.

Another result of our paper is a series of arguments and checks that in any heterotic string theory, the Bianchi identity must hold at the level of integer coefficients. Furthermore, satisfying the Bianchi identity even at the level of integer coefficients is not enough to guarantee consistency of the string background; there is also a consistency condition (tadpole) that is detected by $H^3(M;\mathbb{R})$. The general consistency condition is of course that the anomaly of probe strings vanishes; more generally, it is natural to expect that all consistency conditions (tadpoles) of any quantum gravity background come from consistency of probe branes in said background.

Another limitation of our study is that, by following a (super)gravity approach, we must restrict to studying anomalies on \emph{smooth} backgrounds. String theories, both with and without spacetime supersymmetry, make sense on much larger classes of backgrounds that do not admit a geometric description, such as orbifolds, and which are only analyzed from a worldsheet perspective.  These cases are not covered by our analysis. Using modular invariance one can show that the Green-Schwarz mechanism always cancels local anomalies in any consistent worldsheet background~\cite{Schellekens:1986xh, Schellekens:1986yi}, with or without spacetime supersymmetry.  The question of whether global anomalies also cancel in these non-geometric backgrounds was addressed in~\cite{Tachikawa:2021mvw,Tachikawa:2021mby}, where \emph{all} global anomalies are shown to cancel for all gauge groups and dimensions in the ordinary supersymmetric heterotic string theories. This remarkable result rests on the validity of the Segal-Stolz-Teichner conjecture~\cite{Stolz:2011zj}, which connects deformation classes of worldsheet theories (or, more generally, two-dimensional $(0,1)$ supersymmetric QFTs) to the spectrum of (connective) \emph{topological modular forms} (TMFs)~\cite{douglas2014topological, hopkins} (see also~\cite{Gukov:2018iiq}). The physical interpretation of this more refined generalized cohomology theory is related to ``going up and down RG flows''~\cite{Gaiotto:2019asa}, and it includes the familiar string bordism deformations of the target space manifold of a sigma model as well as more exotic, ``non-geometric'' deformations. 

To construct an ordinary, spacetime-supersymmetric heterotic model, all that one needs is a $(0,1)$ SQFT. Such a QFT always has a notion of a right-moving worldsheet fermion number $F_R^w$, which is gauged by the usual GSO projection to construct a modular-invariant partition function. The original Segal-Stolz-Teichner conjecture applies precisely to $(0,1)$ SQFT's. If we wanted to make such an argument for a spacetime non-supersymmetric string theory (tachyonic or not), we face the obstacle that the GSO projection is different, and it involves \emph{additional} worldsheet symmetries. For instance, the $SO(16)^2$ theory has a ``diagonal'' modular invariant partition function, which requires a notion of a left-moving worldsheet fermion number in addition to the $(0,1)$ SQFT structure. Thus, valid $SO(16)^2$ worldsheet theories are equipped with an additional left-moving $\Z_2$ symmetry, or equivalently, they are equipped with both a spin structure and a $\Z_2$ symmetry. To repeat the argument of \cite{Tachikawa:2021mvw,Tachikawa:2021mby}, one must work with $\Z_2$-equivariant TMF; it would be very interesting to do so. 

When we started this project we were actually quite surprised that we could not find a comment on global anomalies of non-supersymmetric tachyon-free strings\footnote{Other than \cite{osti_5515225}; maybe we just missed it.} anywhere in the literature. After all, these constructions are all 25+ years old, and they have a quite distinguished role in the string landscape. In a sense, they look more like our universe than the more familiar, supersymmetric theories! Maybe the reason for this neglect is simply lack of workforce; the last 25 years have brought so much progress on so many areas that the community just had to focus on the most novel or promising ones, and simply left many important questions unanswered. The physics of non-supersymmetric string theories was a victim to this rapid progress. Despite this, recent research in this direction has yielded e.g. metastable vacua \cite{Basile:2018irz, Antonelli:2019nar}, novel end-of-the-world defects \cite{Buratti:2021fiv, Blumenhagen:2022mqw, Angius:2022aeq, Blumenhagen:2023abk, Angius:2023xtu, Huertas:2023syg, Calderon-Infante:2023ler}, and checks of Swampland constraints \cite{Basile:2020mpt, Basile:2021mkd, Basile:2022zee}. The results that we have presented in this paper are yet another step in this direction. We believe (and hope to have convinced at least some readers) that non-supersymmetric string theories constitute a very interesting arena where there seems to be an abundance of low-hanging fruit that is likely to yield novel lessons both in the Landscape and the Swampland.

\section*{Acknowledgements}

We are very thankful to Luis \'{A}lvarez-Gaum\'{e}, Carlo Angelantonj, Alberto Castellano, Markus Dierigl, Dan Freed, Christian Kneißl, Giorgio Leone, Ignacio Ruiz, Cumrun Vafa, \'{A}ngel Uranga, and Matthew Yu for many interesting discussions and insightful comments. We are particularly grateful to Yuji Tachikawa for explanations on equivariant topological modular forms. MM gratefully acknowledges the support and hospitality of KITP during the ``Bootstrapping Quantum Gravity'' Program, supported in part by the National Science Foundation under Grant No. NSF PHY-1748958, as well as the Aspen Center for Physics, which is supported by National Science Foundation grant PHY-2210452, in the context of the program ``Traversing the Particle Physics Peaks''. The travel of MM to ACP was supported by a grant of the Simons Foundation. During the early stages of this work, MM was supported by  a grant from the Simons Foundation (602883,CV) and by the NSF grant PHY-2013858, by the Maria Zambrano Scholarship (UAM) CA3/RSUE/2021-00493, and currently by the
Atraccion del Talento Fellowship 2022-T1/TIC-23956 from Comunidad de Madrid. The work of MD is supported by the FPI grant
no. FPI SEV-2016-0597-19-3 from Spanish National Research Agency from the Ministry of
Science and Innovation. MM and MD acknowledge the support of the grants CEX2020-001007-S and PID2021-123017NB-I00, funded by MCIN/AEI/10.13039/501100011033 and by ERDF A way of making Europe. Part of this work was completed while IB and AD participated in the ``Geometry and the Swampland'' conference at IFT in 2022; we thank the conference for support. MD also wishes to acknowledge the hospitality of the Arnold Sommerfeld Center for Theoretical Physics (ASC) at the Ludwig-Maximilians Universität throughout a research visit during which part of this work was completed.
\\

\textbf{v3}: We are very thankful to Yuji Tachikawa and Shota Saito for letting us know of and correcting an error in the proof of Lemma 3.59 of an earlier version of this draft. This error affects the identification of correct generators of $\text{Sp}(16)$-twisted string bordism and affects the spectral sequence and the calculation of bordism for the Sugimoto string in degrees 7,8 and 9. We have also corrected similar errors in the Sagnotti string calculation and in the $Spin(16)^2$ (removing Proposition 4.28). The results for the Sagnotti string change also in degrees 7,8,9, while those of the $Spin(16)^2$ string remain unchanged.

\bibliographystyle{JHEP}
\bibliography{anomalies.bib}

\end{document}